\documentclass[11pt, letterpaper]{article}

\usepackage{amsmath,amssymb,amsthm}
\usepackage[numbers]{natbib}
\usepackage{graphicx}
\usepackage{graphicx,colordvi,psfrag}
\usepackage{amsmath,amssymb,algpseudocode}
\usepackage{algorithm}
\usepackage{algorithmicx}
\usepackage{calc,pstricks, pgf, xcolor}
\usepackage{bbding}
\usepackage{bbm}
\usepackage{dsfont}
\usepackage{stfloats}

\usepackage{xparse}
\usepackage{enumerate}
\usepackage{tikz}
\usepackage{mathtools}
\usetikzlibrary{arrows.meta, positioning}

\usepackage{subcaption}
\usepackage{amssymb}

\usepackage[margin=1in]{geometry}
\usepackage[
            CJKbookmarks=true,
            bookmarksnumbered=true,
            bookmarksopen=true,
            colorlinks=true,
            citecolor=red,
            linkcolor=blue,
            anchorcolor=red,
            urlcolor=blue,
            ]{hyperref}

\hypersetup{
  colorlinks   = true, 
  urlcolor     = blue, 
  linkcolor    = blue, 
  citecolor   = red 
}

\newtheorem{theorem}{Theorem}
\newtheorem{remark}{Remark}[section]
\newtheorem{lemma}{Lemma}[section]
\newtheorem{claim}{Claim}[section]
\newtheorem{corollary}{Corollary}[section]

\newtheorem{proposition}{Proposition}[section]
\newtheorem{fact}{Fact}[section]
\theoremstyle{definition}

\newtheorem{condition}{Condition}

\newtheorem{innercustomlem}{Lemma}
\newenvironment{mylem}[1]
  {\renewcommand\theinnercustomlem{#1}\innercustomlem\itshape}
  {\endinnercustomlem}

\newtheorem{innercustomcond}{Condition}
\newenvironment{mycond}[1]
  {\renewcommand\theinnercustomcond{#1}\innercustomcond\itshape}
  {\endinnercustomcond}

  \newtheorem{innercustomthm}{Theorem}
\newenvironment{mythm}[1]
  {\renewcommand\theinnercustomthm{#1}\innercustomthm\itshape}
  {\endinnercustomthm}

\newcommand{\E}{\mathbb{E}}
\newcommand{\R}{\mathbb{R}}
\newcommand{\Ind}{\mathds{1}}
\newcommand{\I}{\mathbbm{1}}

\newcommand{\ZZ}{\mathbb{Z}}
\newcommand{\RR}{\mathbb{R}}

\newcommand{\PP}{\mathbb{P}}

\newcommand{\Var}{\mathrm{Var}}

\newcommand{\diag}{\mathop{\mathrm{diag}}}

\newcommand{\spec}{\mathop{\mathrm{spec}}}

\newcommand{\trace}{\mathop{\mathrm{tr}}}
\def\tr{\trace}

\newcommand{\eps}{\varepsilon}

\def\mreals{\mathbb{R}}

\def\EE{\mathbb{E}}
\def\simiid{\stackrel{\mathrm{iid}}{\sim}}
\newcommand{\m}{\mathcal}

\newcommand{\suc}{\mathrm{success}}

\newcommand{\paren}[1]{\left( #1 \right)}
\newcommand{\sqb}[1]{\left[ #1 \right]}
\newcommand{\sset}[1]{\left\{ #1 \right\}}
\newcommand{\B}{\Big}
\renewcommand{\b}{\big}

\NewDocumentCommand{\DIP}{e{^_}}{D^{\mathrm{IP}\IfValueT{#1}{,#1}}_{\IfValueT{#2}{#2}}}

\NewDocumentCommand{\DMM}{e{^_}}{D^{\mathrm{MM}\IfValueT{#1}{,#1}}_{\IfValueT{#2}{#2}}}

\def\calN{\mathcal{N}}
\def\N{\calN}

\newcommand{\Ut}{\widetilde{U}}
\newcommand{\Sigmat}{\widetilde{\Sigma}}
\newcommand{\Wt}{\widetilde{W}}

\newcommand{\ct}{\tilde{c}}
\newcommand{\taut}{\widetilde{\tau}}
\newcommand{\xt}{\widetilde{x}}
\newcommand{\Lambdat}{\widetilde{\Lambda}}
\newcommand{\lambdat}{\widetilde{\lambda}}
\newcommand{\mut}{\widetilde{\mu}}
\newcommand{\ttilde}{\widetilde{t}}
\newcommand{\Ttilde}{\widetilde{T}}

\newcommand{\Wh}{\widehat{W}}
\newcommand{\Yh}{\widehat{Y}}

\newcommand{\sh}{\widehat{s}}
\newcommand{\dhat}{\widehat{d}}
\newcommand{\Th}{\widehat{T}}
\newcommand{\Ah}{\widehat{A}}
\newcommand{\Bh}{\widehat{B}}

\newcommand{\one}{\mathrm{I}}
\newcommand{\two}{\mathrm{II}}

\newcommand{\CCC}{\mathbf{C}}

\newcommand{\EVD}{\mathrm{EVD}}
\newcommand{\poly}{\mathrm{poly}}

\newcommand{\Rstar}{R^{\star}}
\newcommand{\Dstar}{D^{\star}}
\newcommand{\DDrc}{\mathbf{D}_{\mathrm{rc}}}
\newcommand{\DDwf}{\mathbf{D}_{\mathrm{wf}}}
\newcommand{\RRrc}{\mathbf{R}_{\mathrm{rc}}}
\newcommand{\RRwf}{\mathbf{R}_{\mathrm{wf}}}

\newcommand{\twf}{t_{\mathrm{wf}}}
\newcommand{\Trc}{T_{\mathrm{rc}}}
\newcommand{\Rnat}{R_{\mathrm{nat}}}
\newcommand{\pstar}{p^{\star}}
\newcommand{\ttildewf}{\widetilde{t}_{\mathrm{wf}}}
\newcommand{\Ttilderc}{\widetilde{T}_{\mathrm{rc}}}

\newcommand{\calA}{\mathcal{A}}

\newcommand{\Drc}{D_{\mathrm{rc}}}
\newcommand{\Rrc}{R_{\mathrm{rc}}}
\newcommand{\Dwf}{D_{\mathrm{wf}}}
\newcommand{\Rwf}{R_{\mathrm{wf}}}

\renewcommand{\m}{\mathcal}

\DeclareMathOperator*{\argmin}{\arg\!\min}

\title{Price of metric universality in vector quantization is at most 0.11 bit}
\author{Alina Harbuzova\thanks{Massachusetts Institute of Technology. Supported by MathWorks Fellowship and Siebel Scholarship} \and Or Ordentlich\thanks{Hebrew University of Jerusalem. Supported by the
Israel Science Foundation (ISF), grant No. 2878/25} \and Yury Polyanskiy\thanks{Massachusetts Institute of Technology. Supported by NSF Grant No. 2112665 via subaward KR 704702 from the University of California, San Diego}}
\date{} 

\newenvironment{keywords}{\par\noindent\textbf{Keywords: }\ignorespaces}{\par}

\begin{document}
\pagenumbering{gobble}
\maketitle

\begin{abstract}%
Fast computation of a matrix product $W^\top X$ is a workhorse of modern LLMs. To make their
deployment more efficient, a popular approach is that of using a low-precision approximation $\widehat
W$ in place of true $W$ (``weight-only quantization''). Information theory demonstrates that an optimal algorithm  for reducing precision
of $W$ depends on the (second order) statistics of $X$ and requires a careful alignment of vector
quantization codebook with PCA directions of $X$ (a process known as ``waterfilling
allocation''). Dependence of the codebook on statistics of $X$, however, is highly impractical.
This paper proves that there exist a universal codebook that is
simultaneously near-optimal for all possible statistics of $X$, in the sense of being at least as
good as an $X$-adapted waterfilling codebook with rate reduced by 0.11 bit per dimension in the case when $W$ is Gaussian. Such
universal codebook would be an ideal candidate for the low-precision storage
format, a topic of active modern research, but alas the existence proof is non-constructive.

Equivalently, our result
shows existence of a net in $\mathbb{R}^n$ that is a nearly-optimal covering of a sphere
simultaneously with respect to all Hilbert norms. 
\end{abstract}



\begin{keywords}%
  vector quantization, oracle bounds, rate-distortion, waterfilling, regret, universality
\end{keywords}

\newpage

\setcounter{tocdepth}{2}
\hypersetup{linktoc=all}
\tableofcontents

\newpage
\pagenumbering{arabic}
\setcounter{page}{1}
\section{Introduction}

The most basic element of all modern AI is a neural unit: given a (dynamically changing)
activation vector $X\in\mreals^n$ the unit needs to compute the output
$$ Y  = W^\top X\,,$$
where $W\in \mreals^n$ is a (static) weight. The problem, rapidly becoming central for economic deployment and continued
evolution of large language models (LLMs), is to reduce storage/communication requirement by saving $W$ in
``low-precision''. (The symmetric question of also converting $X$ to low-precision is outside of
scope of this work, though see \cite{ordentlich2025optimal} for some recent theoretical analysis.)

Early deep learning models saved $W$ in full-precision (known as FP32, and corresponding to $R=32$ bit
/ coordinate), but soon moved to half-precision (FP16/BF16, or $R=16$ bit). In the domain of LLMs,
pioneering work \cite{dettmers2022gpt3}  showed that very little degradation is
introduced if $W$ is approximated by rescaling it to appropriate range  and then rounding each coordinate of normalized $W$ to
nearest integer in $\{-128,-127,\ldots,127\}$, a so called \textit{INT8} quantization. Subsequently, more sophisticated low-precision storage formats were introduced,
with currently the most popular being NVFP4 (rate $R=4.5$ bit) and MXFP4 (rate $R=4.25$ bit),
see~\cite{nvidia2025nvfp4,ocp2023mx}.

In this paper we are focusing on a fundamental question: what is the best way of reducing
precision of $W$? That is, how to replace $W$ by a version $\Wh$ that incurs minimal
degradation of performance, i.e. $\Wh^\top X \approx W^\top X$, while admitting a short
bit-length description. A natural way to do that, known as \textit{vector quantization}, would be
to pre-define a \textit{codebook} $\CCC\subset \mreals^n$ of size $|\CCC|=2^{nR}$. Clearly, any element of $\CCC$ can be described
by $nR$ bits, hence achieving rate $R$ of bits / coordinate. Given $\CCC$ we approximate $W$ as
$$ \Wh = \argmin_{c\in\CCC} d(W, c)$$
for some distance metric $d(\cdot,\cdot)$. When $d$ is a standard Euclidean metric,
then the problem reduces to a classical vector quantization problem in $\mreals^n$, with many classical
solutions including lattices and trellis-coded constructions \cite{gersho2012vector}. However, as was brilliantly shown by \cite{frantar2023gptq} large savings can be made if metric $d(\cdot,\cdot)$ is chosen with the knowledge
of statistics of $X$ in mind. 

Indeed, if $X$ is modeled as random with second-order statistics $\Sigma_X = \EE[X X^\top]$ then 
$$ \EE_X\sqb{(Y-\Yh)^2} = \EE_X\sqb{(W^\top X-\Wh^\top X)^2} = (W-\Wh)^\top \Sigma_X (W-\Wh)\,.$$
Thus, we see that a natural choice of metric (given knowledge of $\Sigma_X$) is 
\begin{equation}\label{eq:dist_intro}
d_{\Sigma_X}(W, \Wh) = \EE_X\sqb{(W^\top X-\Wh^\top X)^2} = (W- \Wh)^\top \Sigma_X (W - \Wh)\,.\tag{$d_{\Sigma_X}$}
\end{equation}
The easiest way to demonstrate how adaptation to $\Sigma_X$ can significantly improve rate-distortion
tradeoff is to consider a rank-1 case, i.e. when $X$ is always collinear with a fixed vector
$v\in\mreals^n$. In this case, a clever choice of the codebook $\CCC$ is $\{0, \pm \epsilon
v, \pm 2 \epsilon v, \ldots\}$, i.e. very fine quantization along a single direction $v$ in
$\mreals^n$. Indeed, by not needing to spread the points of $\CCC$ among all $n$
dimensions, one can get exponential improvement in quality of approximation of $W^\top X$, since
only the scalar value $W^\top v$ affects the result. Since activations in LLMs are notoriously
low-rank, this adaptation of $\CCC$ to directions of principal variation (PCA) of $X$
understandably improves performance.

Herein, however, lies the main problem that we are trying to address: while adapting $\CCC$ to
the statistics of the input $X$ is desirable, it may not be generally possible due to restrictions
of hardware. Indeed, the mapping from actual bits (loaded from memory) to elements of $\CCC$ needs
to be fixed at hardware design stage and cannot depend on statistics of $X$ (in particular,
because the same hardware is used for implementing different neurons, facing different types of
$X$). Below we call this requirement, alternatively, as \textit{universal codebook} or 
\textit{a $\Sigma_X$-oblivious decoder}, to reflect the fact that $\CCC$ has to be universal
across all possible choices of $\Sigma_X$.

To continue with more quantitative investigation, let us make a modeling assumption (well
justified by empirical statistics of LLM weight matrices) that $W \sim \mathcal{N}(0,I_n)$. In
this case a given codebook $\CCC$ under $\Sigma_X$ statistics attains \textit{distortion}:
    \begin{equation}\label{eq:dist_ccc}
    	D(\CCC, \Sigma_X) := \frac 1 n \E_W \sqb{\min_{c\in\CCC} d_{\Sigma_X}(W, c)}\,. 
\end{equation}    
Classical information-theoretic field, known as rate-distortion theory, establishes that for a
codebook $\CCC$ to achieve distortion $D(\CCC,\Sigma_X)\le D$ one must have
$$ \log |\CCC| \ge n \RRwf(\Sigma_X,D)\,,$$
where $\RRwf$ is given by a so-called \textit{waterfilling formula}, see
Prop.~\ref{prop:oracle_wf}.

The main question of this work: \textit{How much does the requirement of universality cost in terms of
performance?} For example, in the rank-1 case above the waterfilling codebook would allocate its
elements along a single direction $v$. This codebook, however, would be grossly suboptimal for
another rank-1 $\Sigma_X$ which has its PCA direction orthogonal to $v$. Somewhat surprisingly,
thus, we show that nevertheless the answer is \textit{not much}.  
The main result of this work is demonstration of existence of a universal $\CCC$, which is
simultaneously near optimal for all possible $\Sigma_X$. Informally, we can state our main result
as follows.

\begin{theorem}[Informal: Universality costs $\le 0.11$ Bits]\label{thm:informal}
    Let us assume that $W\sim \mathcal{N}(0,I_n)$ and let $\RRwf(\Sigma_X, D)$ denote the
    information-theoretic (waterfilling) lower bound on rate needed to achieve distortion at most
    $D$ in the \emph{oracle} setting, where codebook is optimized for a fixed $\Sigma_X$. There
    exists a universal codebook $\CCC$ with $2^{nR}$ points such that its distortion
    \emph{simultaneously} for all $\Sigma_X \in \mathbb{S}_+^n$ satisfies:
    $$ R \le \RRwf (\Sigma_X, D(\CCC, \Sigma_X)) + 0.11 \, \mathrm{bit}\,. $$
\end{theorem}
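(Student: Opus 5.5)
The plan is to use a \emph{random isotropic codebook}: draw $\CCC=\{c_1,\dots,c_{2^{nR}}\}$ with i.i.d.\ entries $c_j\sim\mathcal N(0,\sigma^2 I_n)$, or uniformly on a sphere of suitable radius, for a single choice of $\sigma^2=\sigma^2(R)$ that does not depend on $\Sigma_X$. Each $W$ is then encoded with the $d_{\Sigma_X}$-nearest codeword, exactly as in \eqref{eq:dist_ccc}. The codebook ignores $\Sigma_X$, but the \emph{decision rule} does not. The point is that an isotropic cloud of $2^{nR}$ points still contains, with overwhelming probability, a point inside a thin ellipsoid $\{c:(W-c)^\top\Sigma_X(W-c)\le nD\}$ around $W$, provided $R$ exceeds the large-deviations exponent of that ellipsoid under $\mathcal N(0,\sigma^2I_n)$.

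\textbf{Step 1: fixed-$\Sigma_X$ exponent.} Diagonalize $\Sigma_X=\diag(\lambda_1,\dots,\lambda_n)$; both the codebook law and $W\sim\mathcal N(0,I_n)$ are rotation invariant. For typical $W$, compute
\[
E(\Sigma_X,D)=-\tfrac1n\log\PP_c\bigl[(W-c)^\top\Sigma_X(W-c)\le nD\bigr]
\]
by Cram\'er/Chernoff. With a Lagrange multiplier $\theta$ this separates into a sum over eigen-directions of explicit Gaussian terms of the form $\tfrac12\log(1+2\theta\lambda_i\sigma^2)+\dots$. Equivalently, $E$ is the minimum of $\tfrac1n D(P_{\Wh|W}\,\|\,\mathcal N(0,\sigma^2 I)\mid P_W)$ over Gaussian test channels that meet the distortion constraint. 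This yields an achievable ``random-codebook'' rate $\RRrc(\Sigma_X,D)=E(\Sigma_X,D)$ such that, if $R>\RRrc+\delta$, the probability that no codeword covers $W$ is at most $(1-2^{-n(R-\delta)})^{2^{nR}}$, which is doubly exponentially small.

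\textbf{Step 2: comparison with waterfilling.} Show that
\[
\sup_{\Sigma_X,D}\ \bigl[\RRrc(\Sigma_X,D)-\RRwf(\Sigma_X,D)\bigr]\le 0.11
\]
for a well-chosen $\sigma^2$. Both sides are sums over eigenvalues of scalar functions coupled only through one multiplier each: the water level for $\RRwf$, and $\theta$ for $\RRrc$. The gap should therefore reduce to a per-eigenvalue (or two-parameter) scalar optimization. Its worst case, probably $\lambda_i$ near the water level, or the rank-one limit, should evaluate numerically to about $0.11$ bit. \textbf{This is the step I expect to be the main obstacle.} It requires choosing $\sigma^2$ (possibly $\sigma^2=1-2^{-2R}$, or optimized per $R$) and then proving a clean analytic inequality for the resulting two-variable function. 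I would first try to show that the difference is maximized in the one-eigenvalue-active regime, and then bound that one-dimensional function explicitly.

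\textbf{Step 3: simultaneity over all $\Sigma_X$.} Both $D(\CCC,\Sigma_X)$ and $\RRwf$ scale trivially under $\Sigma_X\mapsto a\Sigma_X$, so it suffices to normalize $\tr\Sigma_X=n$. Take an $\epsilon$-net of the normalized PSD cone in operator norm, of size $e^{O(n^2\log(1/\epsilon))}$. For each net point and each $D$ on a finite grid, the failure probability from Step 1 is doubly exponentially small. A union bound therefore leaves a codebook that works for all net points at once, and Lipschitz continuity of $d_{\Sigma}$ in $\Sigma$ (on the set where $\|W\|,\|c\|\le O(\sqrt n)$) together with continuity of $\RRwf$ in $(\Sigma,D)$ extends this to all $\Sigma_X$, at a cost of $o(1)$ in rate.

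Finally, the expectation over $W$ is handled by typicality. With probability $1-e^{-\Omega(n)}$, $W$ has $\|W\|^2\approx n$ and satisfies the concentration needed in Step 1. On the atypical set, the bound $\min_c d_{\Sigma_X}(W,c)\le d_{\Sigma_X}(W,c_0)$ for a fixed codeword $c_0$ (e.g.\ include $0$ in $\CCC$) makes the contribution negligible. Taking $n\to\infty$ then gives Theorem~\ref{thm:informal}.
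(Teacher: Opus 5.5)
Your architecture matches the paper's: an i.i.d.\ isotropic Gaussian codebook, $d_{\Sigma_X}$-nearest-codeword encoding, a tilting/Chernoff estimate of the single-codeword success probability, a doubly exponential failure probability that survives a union bound over a $2^{\mathrm{poly}(n)}$-size net of covariances, and mapping atypical $W$ to $0$. The genuine gap is Step~2. It is created by your decision to fix the codebook variance $\sigma^2=\sigma^2(R)$ independently of $\Sigma_X$. Your exponent is then the mismatched-codebook function $\min\frac1n D(P_{\hat W|W}\,\|\,\mathcal N(0,\sigma^2 I)\mid P_W)$, and for the choice you suggest the $0.11$ bound is false (by my own calculation; the paper does not analyze a fixed-variance codebook). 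Take $R$ large, so that $\sigma^2=1-2^{-2R}\approx 1$. Take a spectrum in which a fraction $\theta$ close to $1$ of the eigenvalues equals a small $\epsilon$ below the water level (of the order of the target distortion), with the remaining modes carrying essentially all of the trace. Waterfilling spends no rate on the small modes and pays $\epsilon$ on each. Your codewords have variance $\approx 1$ on those coordinates, so an uncorrelated codeword pays $\approx 2\epsilon$. Bringing the error back to $\epsilon$ costs $\min\{\tfrac12(a^2+b^2-1-\ln b^2):(1-a)^2+b^2\le 1\}\approx 0.12$ nats $\approx 0.18$ bit per such mode. Doing the Lagrangian optimization over all modes, the per-dimension gap tends to this value as $\theta\to 1$. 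Even semi-flat spectra, where the target gap is zero, are mismatched for you: on the active $m$-dimensional subspace the right variance is $1-2^{-2Rn/m}$, not $1-2^{-2R}$. Optimizing $\sigma^2$ per $R$ does not obviously help either. It replaces Step~2 by a minimax you have not solved. The identity spectrum, which loses $\tfrac12(\sigma^{-2}-1+\ln\sigma^2)$ nats at high rate, pushes $\sigma^2$ to $1$. Spectra with many modes near the water level want $\sigma^2$ well below $1$ (about $0.61$ at the paper's extremal spectrum).

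The missing idea is to let the encoder, which knows $\Sigma_X$, choose the scale. It computes $\tau=\tau(U^\top W,\Lambda)$ minimizing the tilting exponent and sends a quantized $q(\tau)$ at negligible rate; the decoder outputs $q(\tau)c_i$. The effective codebook is a union of rescaled copies of one isotropic Gaussian codebook, so it is still $\Sigma_X$-oblivious. The achievable curve becomes the additive rate--distortion function $R_{\mathrm{rc}}=\frac1{2n}\sum_i\log(1+\lambda_iT)$, $D_{\mathrm{rc}}=\frac1n\sum_i\frac{\lambda_i}{1+\lambda_iT}$. This coincides with waterfilling on every semi-flat spectrum, including rank one, so your candidate rank-one worst case actually has zero gap. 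Even with the adaptive scale, Step~2 does not decouple eigenvalue by eigenvalue, because the two curves are evaluated at a common distortion with different multipliers. The paper's route is as follows. It works with spectral measures rescaled by the distortion and disposes of extreme values of $(\tilde t,\tilde T)$ by a crude bound. It then uses convexity to move the waterfilling-active mass to $\tilde t$ and $+\infty$, and concavity to merge the inactive mass into a single atom. This reduces to two-point spectra at vanishing distortion, where a two-parameter maximization gives about $0.108$ bit.
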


The implication for hardware design is clear: it is possible to create a universal low-precision
storage format (for $W$) that is optimal (up to rate gap of at most $0.11$ bit) simultaneously for all kinds of distributions of
statistics of the other factor ($X$) in the inner-product. Our result can also be interpreted as a
statement about metric entropy: \textit{There exists a universal
net on a unit sphere, which covers unit sphere near-optimally simultaneously for all possible
Hilbert norms on $\mreals^n$.}

\paragraph{Paper organization.} The following Section~\ref{sec:main_results} formalizes weight-only quantization for inner products under the distortion~\ref{eq:dist_intro}. We then introduce the oracle benchmark given by the Gaussian rate--distortion tradeoff under weighted MSE, attained by the waterfilling solution in the setting where both encoder and decoder know the second-order statistics $\Sigma_X$ (Prop.~\ref{prop:oracle_wf}). Our main results are stated in Theorems~\ref{thm:union} and \ref{thm:worst_case}. Theorem~\ref{thm:union} shows the existence of a universal codebook achieving the explicit rate-distortion tradeoff~\eqref{eq:rd_rc_G_intro} over all $\Sigma_X$. Theorem~\ref{thm:worst_case} upper bounds the worst-case rate overhead incurred by this universal decoder relative to the oracle waterfilling benchmark. 

Section~\ref{sec:ProofSketch} provides a proof sketch of Theorem~\ref{thm:union} and the main geometric ideas: Section~\ref{subsec:ideas_codebook} describes the universal codebook construction and intuition, and Section~\ref{subsec:ideas_proof_sketch} outlines the random-coding analysis leading to~\eqref{eq:rd_rc_G_intro}.

Complete proofs are given in the Appendix. While the results in Section~\ref{sec:main_results} are presented for $W\sim \N(0, I_n)$, we first establish a general bound for a fixed (non-random) $W$ in Section~\ref{subsec:rc_general_W}. This result is then specialized to $W\sim \N(0,I_n)$ using concentration and covering argument in Section~\ref{sec:rc_gaussian_W}, completing the proof of Theorem~\ref{thm:union}. Theorem~\ref{thm:worst_case} is proved in Section~\ref{sec:worst_case}: we derive explicit expressions for the rate-gap and prove that the maximum gap occurs at spectra with at most $2$ distinct eigenvalues and vanishing distortions. 

\section{Main results and discussion}\label{sec:main_results}

Consider an arbitrary $\Sigma_X \succeq 0$ and define a (square of) Hilbert metric with respect to
$\Sigma_X$ as in~\eqref{eq:dist_intro}. 
We consider the problem of obtaining a low-precision (at rate $R$ bits per coordinate)
representation $\hat W$ of a random vector $W \in \mathbb{R}^{n}$ with the goal of minimizing
$d_{\Sigma_X}(\hat W,W)$. 
We focus presentation of results on the the standard setting in which  $W$ is an isotropic Gaussian vector:
\begin{equation*}\label{eq:Wmodel_intro}
    W \sim \N(0, I_n)\,,
\end{equation*}
though the key technical results hold for general $W$ (Section~\ref{subsec:rc_general_W}).
An $(n,R)$ quantization scheme consists of an encoder $f:\R^{n}\to \sqb{2^{nR}}$ and decoder
$g:\sqb{2^{nR}} \to \R^{n}$ and we set $\Wh = g(f(W))$. The image of $g$ is called  the
codebook $\CCC:=\mathrm{im}\, g$. The optimal encoder consists of finding a nearest to $W$ element of $\CCC$, and
hence we can equivalently think of a quantization scheme as completely defined by $\CCC$. The
distortion of $\CCC$ for a given $\Sigma_X$ is denoted $D(\CCC,\Sigma_X)$,
cf.~\eqref{eq:dist_ccc}.

Let us start with a simple case of $\Sigma_X$ fixed (and hence known to both encoder
$f_{\Sigma_X}$ and decoder $g_{\Sigma_X}$). In this case, the optimal tradeoff between the
distortion $D$ and rate $R$ is given by waterfilling, which we review.

Let $\Sigma_X = U\Lambda U^\top$ be the eigendecomposition with $\Lambda =
\diag(\lambda_1,\dots,\lambda_n)$. Consider a parametric curve
\begin{equation}\label{eq:wf_cruve}
\Dwf(\Sigma_X, t) = \frac{1}{n}\sum_{i=1}^n \min\sset{\lambda_i,t}\quad \Rwf(\Sigma_X, t) = \frac 1 {2n} \sum_{i=1}^n \max\sset{0, \log (\lambda_i/t)}\,,\tag{$\mathrm{WF}$}
\end{equation}
where $t$ is the parameter (waterfilling level). The above implicitly defines waterfilling distortion as a function of rate:
\begin{equation}\label{eq:wf_dist}
    \DDwf(R, \Sigma_X) \triangleq \Dwf(\Sigma_X, \twf(R))\,, \quad \text{where } \Rwf(\Sigma_X, \twf(R)) = R\,.\tag{$\DDwf$}
\end{equation}
It turns out that this function indeed determines the fundamental limits in the case of
oracle-knowledge of $\Sigma_X$. More exactly, we have the following.

\begin{proposition}[Waterfilling]\label{prop:oracle_wf}
Let $\Sigma_X \in \mathbb{S}_+^n$ . For any
compression scheme of rate $R$ we have
\begin{equation}\label{eq:lb_wf}
 \E\sqb{d_{\Sigma_X}(W,g(f(W)))} \ge n \DDwf(R,\Sigma_X)\,.
 \end{equation}
Conversely, for any $B>0$ there exist a $c=c(B)>0$ such that for any $\Sigma_X$ there exist $f$
and $g$ (both depending on $\Sigma_X$) such that 
\begin{equation}\label{eq:ub_wf}
	\E\sqb{d_{\Sigma_X}(W,g(f(W)))} \le n \DDwf\left(R-c\sqrt{\frac{\log n} n},\Sigma_X\right) + c n^{-B} \tr
\Sigma_X\,.
\end{equation}
\end{proposition}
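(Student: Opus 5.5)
The plan is to rotate to the eigenbasis of $\Sigma_X$ and treat the two directions separately. Set $V = U^\top W \sim \N(0,I_n)$, which is legitimate because $W$ is isotropic. Then $d_{\Sigma_X}(W,\Wh) = \sum_i \lambda_i (V_i - \Vh_i)^2$, so the problem becomes compressing $n$ independent Gaussian sources. Source $i$ is $\sqrt{\lambda_i}V_i$, which has variance $\lambda_i$, and the loss is plain squared error. Hence $\DDwf$ is exactly the classical reverse-waterfilling rate--distortion function of a product Gaussian source.

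\textbf{Converse \eqref{eq:lb_wf}.} Let $M=f(W)$. Then $nR \ge H(M) \ge I(V;\Vh)$. I will apply the chain rule with independence of the $V_i$ to get $I(V;\Vh)\ge \sum_i I(V_i;\Vh_i)$. Set $D_i=\lambda_i\E(V_i-\Vh_i)^2$. The scalar Gaussian rate--distortion bound gives $I(V_i;\Vh_i)\ge \frac12\max\{0,\log(\lambda_i/D_i)\}$. Therefore
\begin{equation*}
nR\ge \sum_i \tfrac12\max\{0,\log(\lambda_i/D_i)\},
\end{equation*}
and it remains to minimise $\sum_i D_i$ subject to this constraint. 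This is a convex program. Its KKT conditions give $D_i=\min\{\lambda_i,t\}$, and since $\Rwf(\Sigma_X,\cdot)$ is monotone, the minimal total distortion is $n\Dwf(\Sigma_X,\twf(R))=n\DDwf(R,\Sigma_X)$. The case of zero eigenvalues is handled by continuity and the convention $\log(0/t)=-\infty$.

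\textbf{Achievability \eqref{eq:ub_wf}.} Fix the level $t=\twf(R')$ with $R'=R-c\sqrt{\log n/n}$. Discard the coordinates with $\lambda_i\le t$, reconstructing them as $0$, which costs exactly $\lambda_i$ each. On the active set $\mathcal{A}=\{i:\lambda_i>t\}$, scale coordinate $i$ by $\sqrt{\lambda_i}$ so that the loss is Euclidean. Then use a random Gaussian codebook drawn from the optimal test channel: $\Vh_i$ has variance $\lambda_i-t$, and the per-letter information density is $i(V_i;\Vh_i)$.

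I will count the total rate $nR'=\sum_{i\in\mathcal{A}}\frac12\log(\lambda_i/t)$ jointly rather than per coordinate, so that one codebook of size $2^{nR}$ serves all active coordinates. The analysis is a standard nonasymptotic random-coding bound for lossy compression, namely a $d$-tilted information and excess-distortion argument. Its rate redundancy beyond $R'$ comes from two sources. The first is fluctuations of the information density, whose variance is bounded by $O(n)$ uniformly, since each term is a quadratic in Gaussians with bounded normalized variance. The second is fluctuations of the weighted distortion around its mean. With $c\sqrt{n\log n}$ extra bits, Bernstein/Hanson--Wright bounds give excess-distortion probability at most $n^{-B'}$.

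On the failure event the encoder sends $\Wh=0$. The resulting contribution is at most $\E[d_{\Sigma_X}(W,0)\,\mathbb 1_{\rm fail}]$, which by Cauchy--Schwarz is at most $\sqrt{\PP(\mathrm{fail})}\,O(\tr\Sigma_X)$, giving the term $cn^{-B}\tr\Sigma_X$. Because the excess-distortion threshold is relative, $(1+\delta)n\Dwf$, the slack $\delta n\Dwf$ is absorbed by further lowering $R'$ by $O(\delta)$, using the fact that $d\Dwf/dR$ is at most $2\ln 2\cdot \Dwf$ in magnitude, and taking $\delta\asymp\sqrt{\log n/n}$.

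\textbf{Main obstacle.} The hard part will be uniformity in $\Sigma_X$: the constant $c$ must not depend on the spectrum. This is delicate because the weights $\lambda_i$ may be wildly heterogeneous, for example a single huge eigenvalue. In that case Hanson--Wright concentration of $\sum_i\lambda_i(V_i-\Vh_i)^2$ is governed by $\max_i \lambda_i$ rather than by the sum. My first attempt is to use the test-channel structure, under which $\lambda_i(V_i-\Vh_i)^2$ given $\Vh$ has mean $t$ on active coordinates. The active-coordinate distortion terms are then all identical, equal to $t$, and concentrate uniformly. The inactive coordinates are deterministic in expectation, and their fluctuation is charged to the $n^{-B}\tr\Sigma_X$ term together with truncation.

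If a few active coordinates dominate the rate, for example when $|\mathcal{A}|$ is small, the $\sqrt{\log n/n}$ normalization still holds. The reason is that the information-density variance per active coordinate is bounded by an absolute constant.
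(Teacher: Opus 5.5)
Your converse is the paper's argument: rotate to the eigenbasis, use independence to get $nR\ge\sum_i I(V_i;\widehat V_i)$, apply the scalar Gaussian bound, and solve the convex program by reverse waterfilling. It is fine. The achievability, however, has a step that fails in exactly the regime you flag as the main obstacle.

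\textbf{Where the argument breaks.} You use an excess-distortion random-coding bound at the relative threshold $(1+\delta)n\Dwf$ with $\delta\asymp\sqrt{\log n/n}$. You then assert that Bernstein/Hanson--Wright make the test-channel term $\PP[\text{distortion}>(1+\delta)n\Dwf]$ at most $n^{-B'}$. Write the test channel in eigen-coordinates as $V_i=Y_i+\sqrt{D_i}Z_i$ with $D_i=\min\{t/\lambda_i,1\}$. Its distortion is $\sum_i\min\{\lambda_i,t\}Z_i^2$, and its active part is $t\sum_{i\in\mathcal A}Z_i^2$. That sum concentrates only at relative scale $\sqrt{\log n/|\mathcal A|}$, not $\sqrt{\log n/n}$. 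So your claim that the active terms ``concentrate uniformly'' is false when $|\mathcal A|$ is small. Your last paragraph only controls the information density, not the distortion.

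For a concrete failure, take $\lambda_1=n$ and $\lambda_2=\dots=\lambda_n=0$. Then $n\Dwf=t$ and the test-channel distortion is $tZ_1^2$, so the term you need small tends to $\PP[Z_1^2>1]\approx 0.32$. The Cauchy--Schwarz step then leaves an error of order $\tr\Sigma_X$, not $n^{-B}\tr\Sigma_X$. Relatedly, your derivative fact is reversed: in fact $d\Dwf/dR=-2t\ln 2$ exactly, and $t\ge\Dwf$.

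\textbf{How to repair it.} Within your framework, measure the slack in units of the water level $t$, not relative to $\Dwf$. The active part is $t(\chi^2_{|\mathcal A|}-|\mathcal A|)$. The inactive part is a centred Gaussian quadratic form with weights $\lambda_i\le t$, so its squared Frobenius norm is at most $nt^2$ and its operator norm at most $t$. Hence the distortion is at most $n\Dwf+Ct\sqrt{n\log n}$ outside an event of probability $n^{-B'}$, uniformly in the spectrum. Since $|d(n\Dwf)/dR|=2nt\ln 2$ and $t$ only grows as $R$ decreases, this slack costs $O(\sqrt{\log n/n})$ in rate.

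The paper avoids distortion concentration altogether by using the expected-distortion random-coding bound \cite[Theorem~25.2]{PWbook24}, applied in eigen-coordinates:
\[
\E\Bigl[\min_{c\in\CCC}d_{\Sigma_X}(W,c)\Bigr]\le\E[d_{\Sigma_X}(W,Y)]+\E[d_{\Sigma_X}(W,0)]\,e^{-|\CCC|/\gamma}+\E\bigl[d_{\Sigma_X}(W,Y)\Ind\{i(W;Y)>\log\gamma\}\bigr].
\]
With this bound only the tail of $i(W;Y)=nR_0+\tfrac{\log e}{2}\sum_i(W_i^2-Z_i^2)$ matters. It is combined with Cauchy--Schwarz and $\E[d_{\Sigma_X}(W,Y)^2]\le 3(\tr\Sigma_X)^2$.

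For that tail, a variance bound on the information density (Chebyshev) does not give $n^{-B}$. You need uniform sub-exponential control. The paper gets it from a spectrum-independent bound on the log-MGF of $W_i^2-Z_i^2$; equivalently, use Bernstein with absolute $\psi_1$ constants.
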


\begin{proof}
We give the proof of the lower bound~\eqref{eq:lb_wf} below; since the upper bound~\eqref{eq:ub_wf} is not relevant for the
rest of this paper, we defer the brief of sketch of the proof to Appendix~\ref{sec:ub_wf}. Though
we do emphasize that the upper bound does not follow from classical  theory, which concerns with
separable (additive over coordinates) distortion measures, and we have to invoke more modern
single-shot bounds, cf.~\cite[Chapter 25]{PWbook24}.

Let $\Sigma_X = U\Lambda U^\top$ with $\Lambda = \diag(\lambda_1,\dots,\lambda_n)$, and define $W' \triangleq U^\top W$, $\Wh' \triangleq U^\top \Wh$. Then 
$$
d_{\Sigma_X}(W, \Wh) = (W - \Wh)^\top \Sigma_X (W- \Wh) = (W' - \Wh')^\top \Lambda (W' - \Wh') = \sum_{i=1}^n \lambda_i (W_i' - \Wh_i')^2\,.
$$
In the oracle setting (where both the encoder and decoder know $\Sigma_X$), we may equivalently compress $W'$, reconstruct $\Wh'$, and output $\Wh = U \Wh'$. After this coordinate change, the problem becomes that of a weighted mean squared error. Since $W'\sim
\mathcal{N}(0,I_n)$ has independent coordinates, a standard data-processing argument (see~\cite[Section 23.4 and Theorem 6.1]{PWbook24}) gives:
$$ nR \ge I(W';\Wh') \ge \sum_{i=1}^n I(W'_i; \Wh_i')\,,$$
Writing $D_i \triangleq \E \sqb{(W_i' - \Wh_i')^2}$, the smallest $I(W'_i; \hat W_i')={\frac 1 2} \log {\frac 1{D_i}}$ (given the value $D_i$) is then attained under Gaussian coupling, cf.~\cite[Section 26.1.2]{PWbook24}, which results in 
\begin{align*} R \ge {\frac 1 {2n}} \sum_{i=1}^n \log {\frac1 {D_i}} \qquad\text{and}\qquad
   \frac 1n \E \sqb{d_{\Sigma_X}(W, \Wh)} = {\frac 1 n} \sum_{i=1}^n \lambda_i D_i\,.
\end{align*}   
Minimizing $\frac 1 n \sum_i \lambda_i D_i$ subject to the rate constraint via Lagrange multipliers gives the (reverse) waterfilling optimum, summarized by the parametric curve in \eqref{eq:wf_cruve}.
\end{proof}

Now, as we discussed above, $\Sigma_X$ describes distribution of activations and, practically
speaking, is usually unavailable to decoder. Indeed, even if the eigenvalues $\Lambda$ were known, optimal waterfilling requires knowledge of the
eigenbasis $U$, since the decoder $g_{\Sigma_X}$ essentially computes $U\cdot \widehat{U^\top W}$,
where $\widehat{U^\top W}$ is the closest codeword to $U^\top W$. Communicating $U \in
\mathcal{O}(n)$ to the decoder is expensive: it requires approximating $\Theta(n^2)$ real
parameters, which will consume 
much larger than $\Theta(n)$ bits allocated for communicating $W$ itself.\footnote{In practice, GPU computes not a single inner product
but $W^\top X$ for $W \in
\mathbb{R}^{n\times a}$ being a matrix with $a\asymp n$. Thus, some of the cost of sending $U$
ammortizes over $a$, but still makes it a highly suboptimal choice.} Thus, the tradeoff in \eqref{eq:wf_cruve} is unattainable via a na\"ive ``send $W$ and $\Sigma_X$" scheme when decoder lacks $\Sigma_X$.

To capture the limitation above, we need to assume that $\Sigma_X$ is available at \emph{encoding}
time, but unavailable at \emph{decoding} time (since the deployed dequantizer is fixed, possible
even in hardware). So, formally we define a universal $(n,R)$ quantization scheme as a pair
\begin{equation}\label{eq:enc_dec_def_intro}
    f:\R^{n}\times \mathbb{S}_+^{n}\to [2^{nR}],
    \qquad
    g:[2^{nR}]\to \R^{n}\,,\tag{$f,g$}
\end{equation}
where the encoder takes $\Sigma_X$ as an input, while the decoder
has no access to $\Sigma_X$ and outputs $\Wh = g(f(W, \Sigma_X))$. Again, the image of $g$ is
called the codebook $\CCC=\mathrm{im}\, g$, which is independent of $\Sigma_X$.

The goal is to design a pair $(f,g)$ such that $D(\CCC,\Sigma_X)$ were low simultaneously for all
$\Sigma_X$. Our first main result proves existence of a universal codebook $\CCC$ with an explicit
guarantee on the achieved distortion. To define that guarantee, again let $\Sigma_X \in
\mathbb{S}_+^n$ with $\trace(\Sigma_X) = n$ and spectrum $\lambda = (\lambda_1,\dots,\lambda_n)$.
The \emph{random-coding rate-distortion function} is a parametric curve (with parameter $T > 0$)
given by
\begin{equation}\label{eq:rd_rc_G_intro}
     \Drc(\lambda, T) = \frac 1 n \sum_{i=1}^n \frac{\lambda_i}{1+\lambda_i T}\qquad \Rrc(\lambda, T) = \frac 1 {2 n}\sum_{i=1}^n \log(1+\lambda_i T)\,.\tag{RDRC}
\end{equation}
Denote 
\begin{equation}\label{eq:rd_d_G_intro}
    \DDrc(\lambda, R) \eqcolon \Drc(\lambda, \Trc(\lambda, R))\,,\quad \text{where } \Rrc(\lambda, \Trc(\lambda, R)) = R\,.\tag{$\DDrc$}
\end{equation}

\begin{theorem}[Main Result I: Universal Quantization Scheme for Gaussian Input]\label{thm:union} Fix any constants $\Rstar, \eps, \eta, B > 0$. There exists an encoder $f: \R^n \times \mathbb{S}_{+}^n \times [0,1] \to \sqb{2^{nR}}$, a decoder $g: \sqb{2^{nR}} \times [0,1] \to \R^n$ with $R \leq \Rstar + \eps$, and a (shared) random variable $S\in [0,1]$ with the following property. 
    For $W\sim\N(0,I_n)$ and $\Sigma_X\in \mathbb{S}^n_+$ we set $\Wh(\Sigma_X) = g\paren{f(W,
    \Sigma_X, S), S}$. Then, for sufficiently large $n \geq n_0 = n_0(\eps,\eta, \Rstar, B)$, we
    have with probability at least $1-\exp\paren{-n^B}$ over $S\sim \mathrm{Unif}[0,1]$ that 
    $$
    {\frac 1 n}\E_W \sqb{d_{\Sigma_X}(W,\Wh(\Sigma_X))} \le \DDrc(\spec(\Sigma_X), \Rstar)+ \eta
    $$
    simultaneously for all $\Sigma_X$ with $\trace \Sigma_X=n$.
\end{theorem}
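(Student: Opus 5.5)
We will use a random i.i.d. Gaussian codebook generated from the shared randomness $S$: codewords $C_1,\dots,C_M$, $M=2^{n(\Rstar+\eps)}$, i.i.d. $\N(0,\sigma^2 I_n)$ for a suitable variance (or uniform on a sphere of matching radius). The encoder, which knows $\Sigma_X$, outputs $\arg\min_j d_{\Sigma_X}(W,C_j)$. The decoder simply returns $C_j$, so the codebook does not depend on $\Sigma_X$. The expression \eqref{eq:rd_rc_G_intro} should come from a Gaussian (large-deviations) computation for the probability that one random codeword lands in the $\Sigma_X$-ellipsoid $\{c: (W-c)^\top\Sigma_X(W-c)\le nD\}$ around a typical $W$.

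\textbf{Step 1 (fixed $W$, fixed $\Sigma_X$).} Fix $\Sigma_X=U\Lambda U^\top$ and $W$ with $\|W\|^2\approx n$, and rotate coordinates. For $C\sim\N(0,I_n)$ we estimate $p=\PP[\sum_i\lambda_i(W'_i-C_i)^2\le nD]$ by a Chernoff/tilting argument with exponent $\inf_{s>0}\{snD+\tfrac12\sum_i\log(1+2s\lambda_i)+\sum_i\frac{s\lambda_i W_i'^2}{1+2s\lambda_i}\}$. We then average over $W_i'^2\approx 1$, which in the Gaussian-$W$ case reduces to the $W$-averaged problem. The optimizer with $T=2s$ should give $-\tfrac1n\log p\approx\Rrc(\lambda,T)$ at $D=\Drc(\lambda,T)$; possibly after rescaling the codeword variance, this is exactly the reverse-channel "$W=\Wh+Z$" calculation with $\Wh\sim\N(0,(I+T\Lambda)^{-1}T\Lambda)$ in the eigenbasis. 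The delicate point is that the codebook distribution must be isotropic, hence independent of $U$. The tilted measure is Gaussian, so a matching lower bound on $p$ (not just an upper bound) follows from a local CLT or second-moment estimate. With $M p\ge 2^{n\eps/2}$, the probability that no codeword achieves distortion $D+\eta/2$ is at most $\exp(-2^{n\eps/2})$. Bounded distortion on the failure event, using a fallback codeword $0$ costing $\tr\Sigma_X/n=1$, controls the expectation.

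\textbf{Step 2 (uniformity over $\Sigma_X$ and $W$).} This step is the main obstacle. Since the failure probability is doubly exponential, we can union bound over an $\exp(\poly(n))$-size net. Normalizing $\tr\Sigma_X=n$ makes $\{\Sigma_X\}$ compact, and $d_{\Sigma_X}$ is Lipschitz in $\Sigma_X$ in operator norm on the region $\|W\|,\|c\|\le O(\sqrt n)$. A $\delta$-net in operator norm therefore has size $(C/\delta)^{n^2}=\exp(O(n^2\log n))$, which is much smaller than $\exp(2^{n\eps/2})$. We also need continuity of $\DDrc(\lambda,\Rstar)$ in $\lambda$, which holds because $\Drc$ and $\Rrc$ are smooth and monotone in $T$.

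For the average over $W$, we cannot union over all $W$. Instead we show that, for each net point, $\E_W[\text{distortion}]$ is close to its bound with probability $1-\exp(-2^{cn})$ over $S$. To do this we discretize $W$ on the typical shell with a fine net of size $\exp(O(n\log n))$, apply Step 1 to each net $W$, and transfer the bound to nearby $W$ and nearby $\Sigma_X$ by Lipschitz continuity. Atypical $W$ are handled by Gaussian concentration, contributing at most $\eta/4$. Combining these with $n\ge n_0$ yields the claim with probability at least $1-\exp(-n^B)$, where the bound is very loose.

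The hardest part I expect is the precise lower bound on $p$ uniformly in $\lambda$, including spectra with many tiny or zero eigenvalues where $T$ is large. To handle it, I would first truncate eigenvalues below $\eta/10$ to zero, which changes the distortion by at most $\eta/10$. On the remaining bounded-ratio spectrum, I would apply a Berry–Esseen bound to the tilted sum.
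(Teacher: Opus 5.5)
\textbf{Gap: the codebook has no adaptive scale.} Your scheme cannot attain \eqref{eq:rd_rc_G_intro} simultaneously for all $\Sigma_X$. The codebook has a single fixed variance $\sigma^2$ (or a single sphere radius), and the decoder outputs $C_j$ without knowing any rescaling. Redo your tilting computation with $C\sim\N(0,\sigma^2 I_n)$, replacing $(U^\top W)_i^2$ by $1$. With Chernoff parameter $s=T/(2\sigma^2)$ you get the pair (rate in nats)
\[
D_\sigma=\frac1n\sum_i\frac{\sigma^2\lambda_i}{1+\lambda_iT}+\frac1n\sum_i\frac{\lambda_i}{(1+\lambda_iT)^2},\qquad
R_\sigma=\Rrc(\lambda,T)\ln 2+\frac{T}{2n}\sum_i\frac{\lambda_i}{1+\lambda_iT}\Bigl(\frac{T\lambda_i}{\sigma^2(1+\lambda_iT)}-1\Bigr).
\]
This equals $(\Drc(\lambda,T),\Rrc(\lambda,T)\ln 2)$ only when $\sigma^2=T\sum_i\frac{\lambda_i^2}{(1+\lambda_iT)^2}\big/\sum_i\frac{\lambda_i}{1+\lambda_iT}$. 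In other words, \eqref{eq:rd_rc_G_intro} is the envelope obtained by optimizing the exponent over the scale. That optimal scale depends on the spectrum. At rate $\Rstar$ it is $1-2^{-2\Rstar}$ for $\Sigma_X=I_n$, and $1-2^{-4\Rstar}$ for the semi-flat spectrum with $\lambda_i=2$ on $n/2$ coordinates. For both spectra $\DDrc$ equals the waterfilling (Shannon) limit. An i.i.d. random code whose codeword marginal on the active subspace has the wrong variance loses a fixed positive amount of distortion that does not vanish with $n$. So once $\eps$ and $\eta$ are small enough (depending on $\Rstar$), no single $\sigma$ serves both spectra. Your phrase ``possibly after rescaling the codeword variance'' is not available, because the decoder must know the scale.

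\textbf{What the paper does instead.} The missing idea is an encoder-chosen, transmitted gain. The codebook stays $\N(0,I_n)$, but the encoder computes $\tau(U^\top W,\Lambda)$ as in \eqref{eq:tau}. It sends a quantized $q(\tau)$, where polynomial precision costs only $\frac1n\log(1/\delta)$, a negligible rate. The decoder outputs $q(\tau)c_i$. Letting $\tau$ depend on $W$, and not only on $\Sigma_X$, gives each fixed $W$ the critical distortion $\frac1n\sum_i\frac{(U^\top W)_i^2\lambda_i}{1+\lambda_iT}$ at the $W$-independent rate $\Rrc(\lambda,T)$. This is linear in $(U^\top W)_i^2$, so averaging over $W$ is exact and your ``typical $W$'' reduction becomes unnecessary. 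The lower bound on the single-codeword success probability then holds uniformly in $\lambda$ for $\|U^\top W\|_\infty\le n^{\alpha}$ with $\alpha<1/4$. It follows from $D(\tilde\mu\|\mu)=n\Rstar\ln 2$, data processing, and Chebyshev with variance $O(n\|U^\top W\|_\infty^4)$. So neither eigenvalue truncation nor Berry--Esseen is needed.

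\textbf{Your covering step.} A union bound over an $\exp(\poly(n))$ net, using the doubly exponential per-point failure probability, is the right idea. With the gain in place, though, it must also control how $\tau$ and $q(\tau)$ move with $\Sigma_X$. The paper's $\tau$-perturbation lemma has Lipschitz constants of order $2^{4n\Rstar}$, which it absorbs with a net of mesh $2^{-5n\Rstar}$ and size $2^{O(n^3)}$.

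\textbf{Final probability.} The $1-\exp(-n^B)$ bound needs care. If you justify the Lipschitz transfer by requiring all $2^{nR}$ codewords to have norm $O(\sqrt n)$, that event fails with probability $e^{-\Theta(n)}$, which is too large for $B>1$. The paper uses the event $\max_i\|c_i\|_2\le n^B$, which fails with probability $e^{-cn^{2B}}$; this event is what sets its final probability bound.
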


Theorem~\ref{thm:union} uses $S$ to generate an entire codebook $\CCC$ and shows that with high
probability it achieves $\approx\DDrc(\Sigma_X, R)$ distortion. Of course, by fixing a value of
$S$ it implies \emph{existence} of a single codebook $\CCC$ with the same property. Formally, we
have a corollary.

\begin{corollary}\label{cor:existence} Under the same assumptions as in Thm~\ref{thm:union}, there exists
a codebook $\CCC$ of size  $\log_2 |\CCC| \le n( \Rstar + \eps)$ and an encoder-decoder pair $f,g$
(see Eq.~\eqref{eq:enc_dec_def_intro}) with the following property. Set $\Wh(\Sigma_X) = g(f(W,
\Sigma_X))$. Then, simultaneously for all $\Sigma_X$ we have 
$$
 \frac1n \E_{W} \sqb{d_{\Sigma_X}(W, \Wh(\Sigma_X))} \le \DDrc(\spec(\Sigma_X), \Rstar) + \eta
 {\frac{\tr \Sigma_X } n} \,.
$$
\end{corollary}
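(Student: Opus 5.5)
The plan is to derive Corollary~\ref{cor:existence} from Theorem~\ref{thm:union} in two moves: fix one good value of the shared randomness $S$, then remove the normalization $\trace \Sigma_X = n$ by a scaling argument.

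\emph{Step 1: derandomize.} Apply Theorem~\ref{thm:union} with the given $\Rstar,\eps,\eta,B$ and $n\ge n_0$. The event that the guarantee holds simultaneously for all $\Sigma_X$ with $\trace\Sigma_X=n$ has probability at least $1-\exp(-n^B)>0$ over $S\sim\mathrm{Unif}[0,1]$, so it is nonempty. Pick any $s_0$ in it. Define $g(\cdot):=g(\cdot,s_0)$, the codebook $\CCC:=\mathrm{im}\,g$, and the encoder $f(W,\Sigma_X):=f(W,\Sigma_X,s_0)$ for normalized $\Sigma_X$. Then $|\CCC|\le 2^{nR}$ with $R\le \Rstar+\eps$, and the bound $\frac1n\E_W[d_{\Sigma_X}(W,\Wh)]\le \DDrc(\spec\Sigma_X,\Rstar)+\eta$ holds for every $\Sigma_X$ with trace $n$.

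\emph{Step 2: rescale to general $\Sigma_X$.} For $\Sigma_X=0$ everything is zero. Otherwise let $\alpha=\trace\Sigma_X/n>0$ and $\Sigma'=\Sigma_X/\alpha$, which has trace $n$. Set $f(W,\Sigma_X):=f(W,\Sigma')$. Since $d_{\Sigma_X}=\alpha\, d_{\Sigma'}$, we get
\[
\tfrac1n\E_W[d_{\Sigma_X}(W,\Wh)]\le \alpha\,\DDrc(\spec\Sigma',\Rstar)+\alpha\eta .
\]
Here $\alpha\eta=\eta\,\trace\Sigma_X/n$, which is exactly the error term in the corollary.

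It remains to check the homogeneity $\DDrc(\alpha\lambda,R)=\alpha\,\DDrc(\lambda,R)$. From \eqref{eq:rd_rc_G_intro}, $\Rrc(\alpha\lambda,T/\alpha)=\Rrc(\lambda,T)$, so $\Trc(\alpha\lambda,R)=\Trc(\lambda,R)/\alpha$. Substituting,
\[
\Drc(\alpha\lambda,T/\alpha)=\frac1n\sum_i\frac{\alpha\lambda_i}{1+\lambda_iT}=\alpha\,\Drc(\lambda,T).
\]

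\emph{Main obstacle.} The only delicate point is that $\Trc$ is well defined, i.e.\ that $\Rrc$ is strictly increasing in $T$ and onto $(0,\infty)$ for nonzero $\lambda$. This holds because each term $\log(1+\lambda_i T)$ is increasing in $T$, strictly so whenever $\lambda_i>0$, and the sum tends to $\infty$ as $T\to\infty$. Given that, the identity $\Trc(\alpha\lambda,R)=\Trc(\lambda,R)/\alpha$ follows from uniqueness.

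Everything else is bookkeeping: in particular, the codebook chosen in Step 1 does not depend on $\Sigma_X$, as the corollary requires.
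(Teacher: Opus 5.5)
Your proof is correct and follows the paper's route. The paper's entire justification is that fixing a good value of $S$ in Theorem~\ref{thm:union} yields a single codebook with the same guarantee, and your rescaling step via the homogeneity $\DDrc(\alpha\lambda,R)=\alpha\,\DDrc(\lambda,R)$, which the paper leaves implicit, is what produces the $\eta\,\tr\Sigma_X/n$ term for covariances not normalized to trace $n$.
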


The description of the scheme and intuition are in Sec.~\ref{sec:ProofSketch} and the full proof
in Appendix~\ref{sec:rc_gaussian_W}. While the results are presented for isotropic Gaussian $W$,
our proof proceeds by showing a more general result for a fixed non-random $W$ (see
Sec.~\ref{subsec:rc_general_W}). Then, the result for Gaussian $W$ is obtained by using 
concentration of measure (Sec.~\ref{sec:rc_gaussian_W}).

\medskip
 What Theorem~\ref{thm:union} shows is that, roughly speaking, there exists a universal codebook of rate
$R$ which achieves distortion $\DDrc(\spec(\Sigma_X),R)$ simultaneously for all $\Sigma_X$. 
A natural question is: \emph{how far is $\DDrc(\spec(\Sigma_X), R)$ from the oracle waterfilling
benchmark $\DDwf(\spec(\Sigma_X),R)$?} 

To make the comparison easier to interpret, we will phrase it in terms of rate overhead (of our
codebook) compared to $\Sigma_X$-fine-tuned optimal codebook. Specifically, for a fixed spectrum $\lambda = \diag(\lambda_1, \dots, \lambda_n) \succeq 0$ and a distortion
level $\Dstar$, let $\RRwf(\lambda, \Dstar) = \Rwf(\lambda, t)$ be the minimum oracle rate
achieving $\Dwf(\lambda, t) = \Dstar$ (see Eq.~\eqref{eq:wf_cruve}), and let $\RRrc(\lambda,
\Dstar) = \Rrc(\lambda, T)$ be the minimum random-coding rate achieving $\Drc(\lambda, T) =
\Dstar$ (see Eq.~\eqref{eq:rd_rc_G_intro}). The difference $\RRrc(\lambda, \Dstar) -
\RRwf(\lambda, \Dstar)$ is the rate overhead incurred by using a universal decoder $g$ that is
agnostic to the covariance matrix $\Sigma_X$ with $\spec(\Sigma_X) = \lambda$. 


Our second main result in Theorem~\ref{thm:worst_case} shows that this overhead is uniformly bounded by 0.11 bit.  Specifically, we derive precise expressions for $\RRrc(\lambda, \Dstar) - \RRwf(\lambda, \Dstar)$ that depend on $\lambda = \spec(\Sigma_X)$ and $\Dstar$ and prove that the maximum gaps occur at the spectra with at most $2$ distinct eigenvalues and vanishing distortions (see Fig.~\ref{fig:worst_gap} for the worst-case rate gap found at each $R = \RRrc(\lambda, \Dstar)$). See Sec.~\ref{sec:worst_case} for the full proof.

\begin{theorem}[Main Result II: Worst-Case Rate Gap to Oracle Setting]\label{thm:worst_case}
    $$
    \sup_{\Dstar \in (0,1)}\sup_{\Lambda} \sset{ \RRrc(\Lambda, \Dstar) - \RRwf(\Lambda, \Dstar)} \leq 0.11\,,
    $$
    where the supremum is over $\Lambda = \diag(\lambda_1,\dots,\lambda_n)\succeq 0$ with $\trace(\Lambda) = n$.

\end{theorem}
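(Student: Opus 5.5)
The plan is to reduce Theorem~\ref{thm:worst_case} to a finite-dimensional optimization over the empirical spectral distribution, and then to show that only two-atom spectra matter.

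\textbf{Step 1: change of variables.} Fix $\Lambda$ and $\Dstar$. Let $t$ be the waterfilling level with $\Dwf(\Lambda,t)=\Dstar$, and let $T$ be the random-coding parameter with $\Drc(\Lambda,T)=\Dstar$. Put $x_i=\lambda_i T$ and $s=tT$. Multiplying the equal-distortion condition by $T$, it becomes
\[
\frac1n\sum_{i=1}^n g_s(x_i)=0, \qquad g_s(x)\triangleq \frac{x}{1+x}-\min\{x,s\}.
\]
In these variables the gap is
\[
\RRrc-\RRwf=\frac1n\sum_{i=1}^n f_s(x_i), \qquad f_s(x)\triangleq \tfrac12\log(1+x)-\tfrac12\max\{0,\log(x/s)\}.
\]
The normalization $\trace\Lambda=n$ and the particular value of $\Dstar$ only fix the overall scale, which has been divided out. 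So it suffices to bound
\[
\sup_{s>0}\ \sup_{\mu}\Big\{\textstyle\int f_s\,d\mu \;:\; \int g_s\,d\mu=0\Big\}
\]
over probability measures $\mu$ on $[0,\infty)$. Finitely supported $\mu$ with rational weights are exactly the empirical spectra, and they are dense in this class, so passing to general $\mu$ only enlarges the supremum. Note that $f_s(0)=g_s(0)=0$, so zero eigenvalues simply dilute the measure.

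\textbf{Step 2: reduction to two atoms.} For fixed $s$, the inner problem is a linear program in $\mu$ with a single linear constraint besides normalization. By Carathéodory (extreme points of the constraint set), the supremum is attained, or approached, by measures with at most two atoms. Next look at the sign of $g_s$:
\begin{itemize}
\item for $x\le s$ we have $g_s(x)=-x^2/(1+x)<0$;
\item for $x>s$ we have $g_s(x)=x/(1+x)-s$, which is positive only when $s<1$ and $x>s/(1-s)$.
\end{itemize}
A two-atom feasible measure therefore places weight on an atom $x_1$ with $g_s(x_1)<0$ and an atom $x_2$ with $g_s(x_2)>0$ (or sits at $0$). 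Its value is
\[
\Phi_s(x_1,x_2)=\frac{f_s(x_1)\,g_s(x_2)-f_s(x_2)\,g_s(x_1)}{g_s(x_2)-g_s(x_1)}.
\]
This yields the announced structure: worst spectra have at most two distinct nonzero eigenvalues.

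\textbf{Step 3: the explicit three-parameter problem (main obstacle).} It remains to maximize $\Phi_s(x_1,x_2)$ over $s\in(0,1)$ and admissible $x_1,x_2$, and to show the supremum is at most $0.11$ bit, i.e.\ about $0.0763$ nats. I would carry this out in three moves.
\begin{enumerate}
\item Show monotonicity in $s$, so that the supremum is approached as $s\to0$. This is the vanishing-distortion regime, in which $x_1\to$ a limit in the region where waterfilling and random coding disagree most.
\item Compute the limiting two-variable function in closed form (with $s\to0$ and $x_1,x_2$ rescaled appropriately).
\item Bound the maximum of that function by a certified calculation: a first-order stationarity analysis plus interval evaluation on a compact grid, with separate tail estimates as $x_{1,2}\to 0,\infty$.
\end{enumerate}
I expect the monotonicity-in-$s$ argument and the rigorous numerical certification of the constant $0.11$ to be the delicate parts. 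If monotonicity fails, the fallback is to bound $\Phi_s$ uniformly on a compact grid in $(s,x_1,x_2)$ with Lipschitz control, handling the boundary regimes analytically.
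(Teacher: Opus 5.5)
Your Steps 1 and 2 are sound. The substitution $x_i=\lambda_iT$, $s=tT$ is an exact reparametrization: a nontrivial feasible $\mu$ forces $s\in(0,1)$, and $D^\star=\int\min\{x,s\}\,d\mu\big/\int x\,d\mu$. The extreme-point argument for a linear objective under one linear constraint correctly yields two atoms, and it is a cleaner substitute for the paper's concavity/path argument in Step III.

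The gap is in Step 3, where all the quantitative content lives, and the one structural claim you make there is false. The supremum is \emph{not} approached as $s\to0$. Since $f_s(x)=\tfrac12\log(1+x)\le\tfrac12\log(1+s)$ for $x\le s$, and $f_s(x)=\tfrac12\log(s+s/x)<\tfrac12\log(1+s)$ for $x>s$, every feasible $\mu$ has gap at most $\tfrac12\log(1+s)$. This bound vanishes as $s\to0$. You have conflated $s\to0$ with vanishing distortion: by the formula for $D^\star$ above, vanishing distortion means $\int x\,d\mu\to\infty$, i.e.\ the positive atom $x_2\to\infty$ at \emph{fixed} $s$. So move 1 cannot be carried out. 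Your fallback grid over $(s,x_1,x_2)$ is not a substitute either: the domain is unbounded in $x_2$ and the supremum sits exactly at $x_2=\infty$, so the ``boundary regime'' is the whole problem.

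What is missing is the localization the paper performs in its Step II (active mass goes to $\{\widetilde t,+\infty\}$). In your coordinates it reads as follows. On the active branch $x>s$, write $y=x/(1+x)$; then the points $(g_s,f_s)=(y-s,\tfrac12\log(s/y))$ lie on a convex curve. Hence, for fixed $x_1$ with $g_s(x_1)<0$, the slope of the chord from $(g_s(x_1),f_s(x_1))$ is quasi-convex along that curve. The intercept $\Phi_s(x_1,x_2)$ is therefore maximized at an endpoint of $x_2\in(s/(1-s),\infty)$. The left endpoint gives $0$ (semi-flat spectrum), so one takes $x_2=\infty$, where $g_s=1-s$ and $f_s=\tfrac12\log s$. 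Likewise, if $x_1\in(s,s/(1-s))$, both atoms lie on that convex curve, and moving $x_1$ down to $s$ only raises the chord; so one may take $x_1=\alpha\le s$. This leaves
\[
G(s,\alpha)=\frac{(1-s)\log(1+\alpha)+\frac{\alpha^2}{1+\alpha}\log s}{2\bigl(1-s+\frac{\alpha^2}{1+\alpha}\bigr)},\qquad 0<\alpha\le s<1 .
\]
This is exactly the paper's $\Phi=\tfrac12\bigl[\theta\log(1+\alpha)+(1-\theta)\log(\widetilde t\,\widetilde T)\bigr]$, with $s=\widetilde t\,\widetilde T$ and $\theta/(1-\theta)=(1-s)(1+\alpha)/\alpha^2$. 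The paper eliminates $\theta$ by stationarity in $\alpha$ at fixed $\theta$, which gives $s=(2\alpha+\alpha^2)/(1+\alpha)$. The resulting one-variable function peaks at about $0.108$ bit, at $\alpha\approx0.35$ and $s\approx0.61$. That is an interior value of $s$, so your proposed monotonicity in $s$ points the wrong way.
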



\begin{figure}
    \centering
    \includegraphics[width=0.8\linewidth]{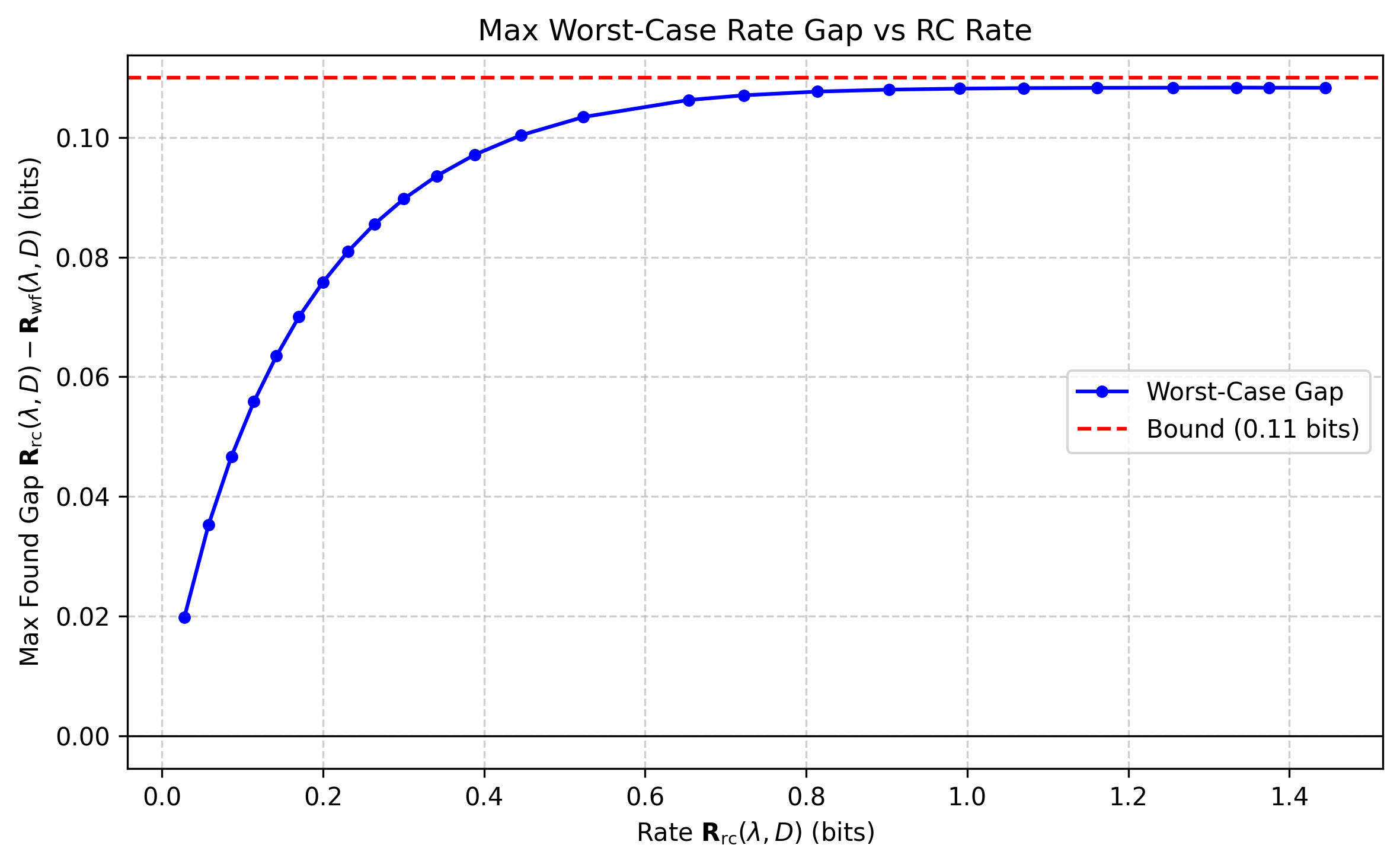}
    \vspace{-3mm}
    \caption{Maximum rate gap found at each rate $R$.}
    \vspace{-5mm}
    \label{fig:worst_gap}
\end{figure}

An intuitive way to see why the gap is bounded is to consider the most extreme of $\trace
\Sigma_X=n$ cases: the identity and the rank-1 case, for which $\RRrc(\lambda, \Dstar) =
\RRwf(\lambda, \Dstar)$ by a simple computation.
In fact, the gap vanishes for all matrices with semi-flat spectra $\lambda = (n/m,\dots,n/m,
0,\dots,0), m\leq n$, a phenomenon we further discuss in Sec.~\ref{subsec:geom_int}.

\paragraph{Discussion and Open Questions.}
Our results prove the \emph{existence} of a universal codebook whose performance is uniformly within a constant rate gap of the $\Sigma_X$-aware (waterfilling) oracle benchmark, showing that universality is not, by itself, an information-theoretic bottleneck. Our random-coding-based proof is nonconstructive and does not yield an explicit codebook design or efficient encoder/decoder pair. Designing explicit and computationally efficient constructions is the most immediate open direction. 

Often, low complexity quantizers are based on lattices. Indeed, for any fixed $\Sigma_X$ a lattice randomly drawn from the natural Haar-Siegel measure will be a good quantizer  with probability $1-e^{-\Omega(n)}$. This follows from the results of~\cite{orw22} that show that the covering radius of a random lattice is typically near-optimal with respect to any (fixed) norm. However, \emph{no lattice can be near-optimal simultaneously for all $\Sigma_X$}, and the reason is simple: for any fixed lattice $L\subset\RR^n$ there is a rotation $U^\top L$ that aligns its directions with the natural basis. Consequently, for this rotation $U^\top L=\prod_{i=1}^n (\alpha_i\ZZ)$ for some $\alpha_1,\ldots,\alpha_n$. The integer lattice is a bad quantizer, and therefore any lattice quantizer must lose at least $\frac{1}{2}\log\frac{2\pi e}{12}\approx 0.254$ bits with respect to the waterfilling benchmark for some $\Sigma_X$ (see~\cite{ordentlich2026high} for more details). It therefore follows that $\Sigma_X$-universal near optimal schemes cannot solely rely on lattice quantizers.

One example of a lattice-based algorithm that provides a practical solution to the problem studied here is the GPTQ algorithm~\cite{frantar2023gptq} (with appropriate shaping/entropy coding~\cite{ordentlich2026high}). Its high-rate gap to the waterfilling benchmark, for particular $\Sigma_X$, is $\frac{1}{2}\log\frac{2\pi e}{12}+\frac{1}{2}\log(\mathrm{AM-GM}(\Sigma_X))$ where $\mathrm{AM-GM}(\Sigma_X)$ is the ratio between arithmetic-mean and geometric-mean of the squared diagonal elements in the Cholesky decomposition of $\Sigma_X$, which can be unbounded in general~\cite{ordentlich2026high}. When $W$ is a matrix consisting of $a\gg 1$ rows (rather than the vector case considered here), some of this gap can be reduced by sending to the decoder $o(na)$ bits of information on $\Sigma_X$ (which has negligible effect on the rate). One such example is the WaterSIC quantization scheme~\cite{lifar2026watersic}. 



\paragraph{Related literature.}

The problem of vector quantization is classical~\cite{gersho2012vector}, and its asymptotic
behavior (for iid sources and additive distortion) is given by the famous rate-distortion
formulas, e.g.~\cite[Part V]{PWbook24}. A recent wave of interest, however, focuses on computing 
quantized inner-product and matrix multiplication.

\textit{On the practical side,} the pioneering work of \cite{dettmers2022gpt3} demonstrated that
substantial compression is possible via scaling plus uniform rounding (INT8 weight quantization),
thus establishing the field of post-training quantization (PTQ). Notable PTQ works include 
SmoothQuant \cite{xiao2024smoothquant}, which introduced calibration-based methods, i.e. those
which depend on statistics of activations $X$ via $\Sigma_X$.
GPTQ~\cite{frantar2023gptq} and LDLQ~\cite{chee2024quip}, which are equivalent, simultaneously introduced
an algorithm for $\Sigma_X$-dependent quantization. Going beyond simple
integer-rounding,~\cite{tseng2024quipsharp} consider lattice, ~\cite{savkin2025nestquant} consider
nested lattice and~\cite{tseng2025qtip} consider trellis quantization methods, respectively. Going
beyond simple quadratic losses, are~\cite{tseng2025yaqa} and~\cite{badri2023hqq}. 
We note that
\cite{tseng2024quipsharp} also reintroduced random Hadamard transform (RHT) as a way of mitigating outliers,
following earlier usage in quantization of gradients, and classically. See
\cite{ashkboos2024quarot,liu2025spinquant,chen2025wush} for other applications of RHT in PTQ.

\textit{On the theoretical side,} the work~\cite{ordentlich2025optimal} established fundamental
limits of quantized matrix multiplication by leveraging nested lattice quantization. The GPTQ/LDLQ
algorithm was understood as Babai's nearest-plane algorithm applied after a Cholesky factorization
of $\Sigma_X$ in \cite{chen2025geometry,birnick2025lattice}, and as a successive interference
cancellation (SIC) algorithm in~\cite{ordentlich2026high}, thus connecting weight-only
quantization to lattice decoding and
approximate closest vector problems \cite{conway1982fast,babai86}.
Authors of \cite{ordentlich2026high} developed theoretical
high-rate analysis of GPTQ, showed it can be arbitrarily far from waterfilling and proposed
an improved algorithm which provably matches waterfilling to within 0.255 bit in the
high-rate regime. 

The question considered here ($\Sigma_X$-oblivious quantization) in the special case of diagonal
$\Sigma_X$ falls under the umbrella of the 
 \textit{compression with distortion as side-information} proposed in~\cite{MWZ08}. 
This viewpoint connects modern LLM quantization to a long tradition in lattice decoding and
approximate closest vector problems \cite{conway1982fast,babai86}.

\section{Technical Overview}
\label{sec:ProofSketch}

As in any source-coding problem, once the codebook $\CCC=\{c_1,\ldots,c_{M}\}\subset\RR^n$ is fixed, the optimal encoder computes $i^*=i^*(W)=\argmin_{i\in[M]} d_{\Sigma_X}(W,c_i)$, and sends the index $i^*$ to the decoder, which in turn outputs $\Wh=c_{i^*}$. Thus, the distortion of the codebook $\CCC$ is
\begin{align}
n D(\CCC,\Sigma_X)=\EE\left[\min_{i\in[M]}d_{\Sigma_X}(W,c_i)\right],    
\end{align}
where the expectation is with respect to $W\sim\m{N}(0,I_n)$. Even if we could design $\CCC$ based on $\Sigma_X$, the distortion $D(\CCC,\Sigma_X)$ must satisfy (see Sec.~\ref{sec:main_results}) the waterfilling lower bound 
$$
D(\CCC, \Sigma_X) \geq \DDwf(R, \Sigma_X)\,;
$$
the lower bound is also asymptotically achievable in the limit of large $n$ (see Proposition~\ref{prop:oracle_wf}).
The challenge is to find a single codebook $\CCC$ with $M=2^{nR}$ codewords in $\RR^n$ that attains small $D(\CCC, \Sigma_X) - \DDwf(R, \Sigma_X)$ 
simultaneously for all $\Sigma_X$. 

As is standard, we prove the existence of such a codebook $\CCC$ by drawing a random code with $M=2^{nR}$ iid codewords from a distribution $P_{\Wh}$. We show in Theorem~\ref{thm:union} that for appropriate choice of $P_{\Wh}$ we have that 
\begin{align}\label{eq:ideas_prob_codebook}
\Pr_{\CCC}\left[\sup_{\Sigma_X} \left(D(\CCC,\Sigma_X)-\DDrc(\spec(\Sigma_X), R-\eps)\right)<\eta\right] \geq 1 - \exp\paren{-\poly(n)}    
\end{align}
holds for any $\eta,\eps>0$ and $n$ large enough, where $\spec(\Sigma_X)$ is the vector of the eigenvalues of $\Sigma_X$ and $\DDrc$ is defined in Eq.~\eqref{eq:rd_d_G_intro}.  Consequently there must exist a fixed rate-$R$ codebook $\CCC$ with
\begin{align}
D(\CCC,\Sigma_X)\leq \DDrc(\spec(\Sigma_X), R-\eps)+\eta,~~~\forall\Sigma_X\in\mathbb{S}_+^n \text{ with }\trace(\Sigma_X)=n.    
\end{align}

\subsection{Codebook Distribution}\label{subsec:ideas_codebook}
How should we choose $P_{\Wh}$? For a given $\Sigma_X$ with spectral decomposition $\Sigma_X = U \Lambda U^\top$, the optimal $P_{\Wh}$ follows from the waterfilling solution. Specifically, for water-level $1/t$ chosen so that $\Rwf$ defined in~\eqref{eq:wf_cruve} equals $R$, the optimal distribution is 
\begin{align}
P^*_{\Wh}(\Sigma_X,R)&=P^*_{\Wh}(\Sigma_X,t)=\m{N}(0,U~\Gamma(\Lambda,t)~U^\top),\nonumber\\
&\text{where}~~\Gamma(\Lambda,t)=\diag\left(\max\left\{1-\frac{t}{\lambda_1},0\right\},\cdots,\max\left\{1-\frac{t}{\lambda_n},0\right\}\right).   \label{eq:wfOptDist}
\end{align}
We need to choose a single $P_{\Wh}$ that ``works well'' for all $\Sigma_X$. Since there is no preference to any $U\in\m{O}_n$, it makes sense to take an isotropic Gaussian $P_{\Wh}$. Observing that for any fixed $\Sigma_X$ and $t>0$ the covariance matrix for $P^*_{\Wh}(\Sigma_X,t)$ satisfies $U~\Gamma(\Lambda,t)~U^\top \preceq I_n$, we will take 
\begin{align}
P_{\Wh}(\tau)=\m{N}(0,\tau^2 I_n),    
\end{align}
for some $0<\tau<1$. 

An appealing feature of the isotropic Gaussian distribution is that drawing $M$ iid vectors from $\m{N}(0,\tau^2 I_n)$ is equivalent to first drawing them iid from $\m{N}(0,I_n)$ and then scaling all of them by $\tau$. The consequence of this simple fact is that while we cannot perfectly match our codebook distribution to $P^*_{\Wh}(\Sigma_X,R)$,  the flexibility in the choice of $\tau = \tau(\Sigma_X, R)$ allows for a better match. 
Consequently, we draw the $M$ codewords of $\CCC$ from the $\N(0,I_n)$ distribution. The encoder, that knows $\Sigma_X$, computes $\tau(\Sigma_X,R)$ that provides the smallest expected distortion, and sends a description of $\tau$ to the decoder.\footnote{As we will see below, some further gain can be attained by allowing $\tau$ to also depend on the source realization $w\in\RR^n$.} With this procedure, the effective codebook $\tilde{\CCC}=\tau\CCC$ is drawn from $P_{\Wh}=\m{N}(0,\tau^2(\Sigma_X,R) I_n)$. The encoder then finds $i^*=\argmin_{i\in[M]}d_{\Sigma_X}(W,\tilde{c}_i)=\argmin_{i\in[M]}d_{\Sigma_X}(W,\tau c_i)$ and sends $i^*$ as well as a high-resolution description of $\tau$ in bits to the decoder. 




\subsubsection{Geometric Intuition}\label{subsec:geom_int}

To get some intuition to why the ``universal'' codebook distribution $P_{\Wh}=\m{N}(0,\tau^2(\Sigma_X,R)I_n)$ works well simultaneously for all $\Sigma_X$, let us restrict attention to the family of covariance matrices with semi-flat spectrum. In particular, for $m\leq n$ let
\begin{align}
\m{S}_m^n=\left\{\Sigma_X=U\Lambda U^\top~:~U\in\m{O}_n,~\lambda_1=\cdots=\lambda_m=\frac{n}{m},~\lambda_{m+1}=\cdots=\lambda_n=0 \right\},    
\end{align}
be the collection of PSD matrices with $m\leq n$ equal and non-zero singular values and $m-n$ zero singular values. From~\eqref{eq:wfOptDist} we see that the $\Sigma_X$-matched optimal codebook distribution is of the form
\begin{align}
P^*_{\Wh}(\Sigma_X,R)=\m{N}\left(0,U\cdot\diag\left(\tau^2,\cdots,\tau^2,0,\cdots,0\right)U^\top \right)    
\end{align}
for some $\tau>0$. Thus the optimal procedure for random coding is to draw iid Gaussian codewords within the subspace $U_{[m]}$ spanned by the first $m$ singular vectors. Our universal distribution, on the other hand, draws isotropic iid Gaussian codewords, and is hence very far from the optimal distribution. However, since the encoder searches for the nearest codeword under the $d_{\Sigma_X}$ metric, it effectively projects both $W$ and the codebook $\CCC$ to $U_{[m]}$ and finds the nearest codeword in $\ell_2$-metric within this subspace. It therefore follows that what dictates performance of a random code $\CCC$ under $d_{\Sigma_X}$ metric (for $\Sigma_X$ with semi-flat spectrum) is the distribution of $U^\top_{[m]}c$, where $c\sim P_{\Wh}$. Thus, our universal $P_{\Wh}=\m{N}(0,\tau^2 I_n)$ is simultaneously optimal for all semi-flat $\Sigma_X$ (with all possible $m\in[n]$), provided that we judiciously choose $\tau=\tau(\Sigma_X,R)$. Inspection of our $\Sigma_X$-universal rate-distortion tradeoff~\eqref{eq:rd_rc_G_intro} shows that indeed $$\DDrc(\spec(\Sigma_X),R)=\DDwf(\spec(\Sigma_X),R),$$ for all semi-flat $\Sigma_X$. Whenever the spectrum of $\Sigma_X$ is not semi-flat the distribution of $U^\top c$ for $c\sim P_{\Wh}$ does not match that of $U^\top c$ under the optimal $P^*_{\Wh}(\Sigma_X,R)$, and consequently in these cases our $\Sigma_X$-universal rate-distortion tradeoff~\eqref{eq:rd_rc_G_intro} is worse than the waterfilling rate-distortion tradeoff. Nevertheless, it turns out that the loss for this mismatch is at most $0.11$ bits, as shown in Theorem~\ref{thm:worst_case}.

\subsection{Sketch of Proof}\label{subsec:ideas_proof_sketch}

After we obtained intuition for the choice of using a random iid isotropic Gaussian codebook, with scale $\tau$ determined by the encoder, we move on to giving an overview of the proof of Theorem~\ref{thm:union}. The detailed rigorous proof is given in Appendix~\ref{sec:rc_gaussian_W}.

\medskip

Let us first fix $\Sigma_X=U\Lambda U^\top$, and analyze the performance of a codebook $\CCC=\{c_1,\ldots,c_M\}$, $M=2^{nR}$, with $c_i\stackrel{iid}{\sim}\m{N}(0,I_n)$. We will show that $\CCC$ is ``good'' for $\Sigma_X$ with probability $1-\exp(-e^{\Omega(n)})$, and from this we will deduce that a random $\CCC$ is ``good'' for all $\Sigma_X$ by a covering argument. 
\medskip

The codebook $\CCC$ can describe a fixed $w\in\RR^n$ with distortion $\leq D$ if at least one of its $\tau$-scaled codewords is inside the region $w+\sqrt{D}\m{B}_{\Sigma_X}$, where
\begin{align}
\m{B}_{\Sigma_X}=\left\{e\in\RR^n~:e^\top \Sigma_X e \leq n \right\}.
\end{align}
Thus, the key to analyzing the tradeoff between rate and distortion is understanding how the success probability of a single codeword behaves as a function of $D$. Since the codewords are $\m{N}(0,I_n)$, their distribution is invariant to rotation, and therefore the success probability of a single codeword is 
\begin{align}
p_\suc(w,\tau,\Sigma_X,D)&=\phi_{\tau^2}\left(w+\sqrt{D}\m{B}_{\Sigma_X}\right)\nonumber\\
&=\phi_{\tau^2}\left(U^\top w+\sqrt{D}\m{B}_{\Lambda}\right)=p_\suc(U^\top w,\tau,\Lambda,D),    
\label{eq:pnAsGaussianMeasure}
\end{align}
where $\phi_{\tau^2}$ denotes the probability distribution for $\m{N}(0,\tau^2 I_n)$.
When $p_{\suc}<2^{-n(R+\eps)}$, it is very unlikely to find a codeword in
$w+\sqrt{D}\m{B}_{\Sigma_X}$, and on the other hand, when $p_\suc>2^{-n(R-\eps)}$ we are very
likely to find a codeword in $w+\sqrt{D}\m{B}_{\Sigma_X}$. Thus, what we are looking for is the
critical $D$ for which $-\frac{1}{n}\log p_{\suc}(U^\top w,\tau,\Lambda,D)\approx R$. Since the
codebook's scale $\tau$ needs to be sent from the encoder to the decoder anyway, we may let it
depend not only on $\Lambda$ but also on $U^\top w$. Therefore, given $U^\top w,\Lambda$ and $R$
we choose $\tau=\tau(U^\top w,\Lambda,R)$ for which the critical $D$ is small. A tedious but
straightforward calculation shows that the optimal choice is
\begin{align}
\tau=\tau(U^\top w,\Lambda,R)=\paren{T \sum_{j} \frac{(U^\top w)_j^2 \lambda_j^2}{(1+\lambda_j T)^2}}^{1/2}\paren{\sum_j \frac{\lambda_j}{1+\lambda_j T}}^{-1/2}, 
\label{eq:tauSketch}
\end{align}
where $T=T(\Lambda,R)$ is such that $\Rrc(\Lambda,T)=R$, and $\Rrc(\Lambda,T)$ is defined in~\eqref{eq:rd_rc_G_intro}. Let
\begin{align}
\Drc(U^\top w)=\Drc(\Lambda, R,U^\top w) = \frac 1n \sum_{i=1}^n \frac{(U^\top w)_i^2 \lambda_i}{1+\lambda_i T},   
\end{align}
where here as well $T=T(\Lambda,R)$. In Lemma~\ref{lem:codeword_success} we prove that 
\begin{align}
-\log p_\suc(U^\top w,\tau,\Lambda,\Drc(U^\top w))=-\log \phi_{\tau^2}\left(U^\top w+\sqrt{\Drc(U^\top w)}\m{B}_{\Lambda}\right)\approx nR.   
\label{eq:critD}
\end{align}
Thus, $\Drc(U^\top w)$ is the critical distortion for fixed $w\in\RR^n$ and $\Sigma_X$. Note that for $W\sim\m{N}(0,I_n)$ we have
\begin{align}
\EE_W\left[\Drc(U^\top W) \right] =  \frac 1n \sum_{i=1}^n \frac{\lambda_i}{1+\lambda_i T} = \Drc(\Lambda, R),
\end{align}
where $\Drc(\Lambda, R)$ is the distortion in~\eqref{eq:rd_d_G_intro}.

The proof of Lemma~\ref{lem:codeword_success} uses standard large deviation techniques, but note that what we really need for the analysis is a quantitative lower bound on $p_\suc$, and this requires some more work beyond Chernoff bound. The choice of $\tau$ from~\eqref{eq:tauSketch} is the one that minimizes the large deviations exponent, whereas the parameter $T$ in our rate-distortion tradeoff~\eqref{eq:rd_rc_G_intro} is just a rescaling of the parameter $t$ in Chernoff's bound $\Pr(X>D)\leq e^{-t D}\EE[e^{tX}]$.

\medskip

By~\eqref{eq:critD}, for $\eta>0$ we have that $p_\suc(U^\top w,\tau, \Lambda,\Drc(U^\top w)+\eta/2)\geq 2^{-n(R-\eps_\eta)}$ for some $\eps_\eta>0$ (and $n$ large enough). Thus, $\forall w\in\RR^n$ 
\begin{align}
P_{\mathrm{failure}}(w)&=\Pr_{\CCC}\left(\frac{1}{n}\min_{i\in[M]}d_{\Sigma_X}(w,\tau c_i)> \Drc(U^\top w)+\eta/2\right)\nonumber\\
&=1-\left(1-p_\suc\left(U^\top w,\tau,\Lambda,\Drc(U^\top w)+\eta/2\right)\right)^M \leq \exp (-2^{n\eps}), 
\label{eq:Pfailure}
\end{align}
for some $\eps>0$. For a \emph{fixed} code $\CCC$ define
\begin{align}
\m{E}_{\mathrm{failure}}(\CCC)=\left\{w\in\RR^n~:~\frac{1}{n}\min_{i\in[M]}d_{\Sigma_X}(w,\tau c_i) > \Drc(U^\top w)+\eta/2 \right\}.
\end{align}
Assuming we can always encode $w$ to $0$ (e.g., by setting $\tau=0$ if needed), for any $W\sim P_W$ we have
\begin{align}
\frac{1}{n} D(\CCC,\Lambda)&=\EE\left[\min_{i\in[M]}d_{\Sigma_X}(W,\tau c_i) \right]\leq \EE\left[\Drc(U^\top W)+\frac{\eta}{2}+\Ind\{W\in \m{E}_{\mathrm{failure}}(\CCC)\}d_{\Sigma_X}(W,0) \right]\nonumber\\
&\leq \EE\left[\Drc(U^\top W)\right]+\frac{\eta}{2}+\sqrt{P_W[\m{E}_{\mathrm{failure}}(\CCC)]}\cdot\sqrt{\EE[d^2_{\Sigma_X}(W,0)]},
\end{align}
where the last inequality is Cauchy-Schwarz. If $\EE_W[d^2_{\Sigma_X}(W,0)] \leq \poly(n)$ under $W\sim P_W$,\footnote{e.g., $\EE_W[d^2_{\Sigma_X}(W,0)] \leq O(n^2)$ for $W\sim \N(0,I_n)$} it follows that whenever $P_W\sqb{\m{E}_{\mathrm{failure}}(\CCC)}$ is sufficiently small, say $\leq e^{-n}$, the codebook $\CCC$ attains the desired $$\frac{1}{n} D(\CCC,\Lambda)\leq \EE\left[\Drc(U^\top W)\right]+\eta\,.$$
$P_W\sqb{\m{E}_{\mathrm{failure}}(\CCC)}$ can indeed be bounded: using~\eqref{eq:Pfailure} and Markov's inequality we obtain that this holds for the vast majority of codebooks:
\begin{align}
\Pr_\CCC\left(P_W[\m{E}_{\mathrm{failure}}(\CCC)]>e^{-n} \right) &\leq e^n\EE_{\CCC}\left[P_W[\m{E}_{\mathrm{failure}}(\CCC)]\right]= e^n \EE_{\CCC,W}[P_{\mathrm{failure}}(W)]\nonumber\\
&\leq \exp (-2^{n\eps}+n).
\end{align}
From this we conclude that for fixed $P_W$ and any fixed $\Sigma_X$
\begin{align}
\Pr_{\CCC}\left(\left[\frac{1}{n} D(\CCC,\Lambda)- \EE_W\left[\Drc(U^\top W)\right]>\eta \right]\right)\leq \exp(-2^{n\eps'}).   
\end{align}
Specializing this to $P_W=\m{N}(0,I_n)$ we obtain
\begin{align}
\Pr_{\CCC}\left(\left[\frac{1}{n} D(\CCC,\Lambda)- \Drc(\Lambda, R)\right]>\eta\right)\leq \exp(-2^{n\eps'}).   
\end{align}

We have therefore obtained that for any fixed $\Sigma_X$ the probability of drawing a ``bad'' codebook with $\left[\frac{1}{n} D(\CCC,\Lambda)- \Drc(\Lambda, R)\right]>\eta$ is double-exponentially small. From here, it is clear how to prove that a randomly drawn $\CCC$ will have 
\begin{align}
\Pr_{\CCC}\left(
\sup_{\Sigma_X}\left[\frac{1}{n} D(\CCC,\Lambda_X)- \Drc(\Lambda, R)\right]>\eta\right)<\exp(-2^{n\eps''}).
\end{align}
All we need is to find a dense cover of PSD matrices in $R^{n\times n}$ with trace $n$ whose size is $\exp(-2^{o(n)})$. In particular, we need to find a collection of $N=\exp(-2^{o(n)})$ PSD matrices with the property that for any valid PSD matrix $\Sigma_X$ there exists a matrix $\Sigma_i$, $i\in[N]$ such that:
\begin{equation}\label{eq:ideas_perturb}
\Drc(\spec(\Sigma_X), R)\approx \Drc(\spec(\Sigma_i), R)\qquad\text{and}\qquad \frac{1}{n}D(\CCC,\Sigma_X)\approx \frac{1}{n}D(\CCC,\Sigma_i)\,.
\end{equation}
Since $N$ is allowed to be so large, these two constraints can be met to arbitrary resolution (though the proof for this requires a lot of bookkeeping and is somewhat technical, see Sec.~\ref{subsec:proof_of_union}).

\medskip

We end this overview with listing some of the technical issues that our sketch of proof above ignored, and briefly mention how addressing them, as we of course do in the actual proofs, affects the results.

\textbf{$\tau$ quantization.} While our sketch assumed that $\tau$ can be conveyed to the decoder in perfect resolution, in reality, some of our $nR$ bits budget is allocated to the description of $\tau$. In order to compress $\tau$ that depends on $U^\top W$, we require a high probability bound on $\|U^\top W\|_{\infty}$. This norm constraint is also needed for the proof of our large deviations result (Lemma~\ref{lem:codeword_success}).

\textbf{Norm bound for codewords.} The perturbation argument~\eqref{eq:ideas_perturb} for the dense cover must account for quantization and change in $\tau$. Since these errors are multiplied by codewords from $\CCC$, we require a uniform norm bound $\|c_i\|_2<n^B$ for some $B>1$ (say $B=10$) to make their contribution to end-to-end distortion negligible.
While the probability that this occurs is overwhelmingly large, it is only $1-\exp(-\poly(n))$ rather than double exponential. For this reason the probability of drawing a $\Sigma_X$-universal   codebook is only $1-\exp(-\poly(n))$ (see Eq.~\eqref{eq:ideas_prob_codebook}) rather than the $1-\exp(-2^{\Omega(n)})$ that our sketch of proof gives. Because event $\{\|c_i\|_2<n^B~~\forall i\in[M]\}$ is independent of $\Sigma_X$, it contributes only a single term to the union bound. Consequently, our dense grid is still allowed to be of size $\exp(-2^{o(n)})$.

\textbf{Rate-penalty.} In the overview above, we assumed that if $-\ln p_\suc(D)=R$, then for any $\eta>0$ we have $-\ln p_\suc(D+\eta)=R-\eps_\eta$ for some $\eps_\eta>0$. This is indeed the case whenever $D'(R)$ is finite. However, our results are for the supremum over all $\Sigma_X$ with trace $n$, and this cannot be guaranteed for all such matrices at all rates. Consequently, in Theorem~\ref{thm:union} there is both a rate-penalty $\eps>0$ and a distortion penalty $\eta>0$ (which both can be made arbitrarily small for $n$ large enough), whereas in the sketch above we only had a distortion penalty. 

\section*{Disclosure of LLM Assistance}
The authors used ChatGPT to assist with editing, code generation for Fig.~\ref{fig:worst_gap}, and technical steps (e.g., perturbation analysis and derivative computations) in the proofs of main theorems. The final optimization step after the spectrum reduction in Theorem~\ref{thm:worst_case} was also suggested by ChatGPT. All model-generated code, computations, and proof steps were independently verified and adjusted by the authors. The authors take full responsibility for the correctness of all analytical and numerical results in the paper. 

\bibliography{wquant}
\bibliographystyle{alphaurl} 

\newpage
\appendix

\section{Preliminaries and Notation}\label{sec:prelim}
\paragraph{Notation.}
We write $\Sigma \succeq 0$ to denote that the matrix $\Sigma\in\R^{n\times n}$ is positive semidefinite; we define the set of positive semidefinite matrices as $\mathbb{S}_+^n = \sset{\Sigma\in\R^{n\times n}:\, \Sigma \succeq 0}$. 

For $\Sigma \in \mathbb{S}_{+}^n$ we denote the spectral decomposition as $$\EVD(\Sigma) = U \Lambda U^\top\,,$$
where $U \in \mathcal{O}_n$ is orthogonal and $\Lambda$ is diagonal. 

$\diag(v)$ for $v\in\R^n$ denotes an $n\times n$ diagonal matrix with $\diag(v)_{ii} = v_i$. $\spec(A)$ for $A \in \R^{n\times n}$ denotes the vector of the eigenvalues of $A$. 

For $A \in \R^{n\times a}$ and $\Sigma \in \mathbb{S}_+^n$, denote 
$$
\|A\|_{\Sigma} \triangleq \sqrt{\trace(A^\top \Sigma A)}\,.
$$

\begin{proposition}[Hanson-Wright Concentration Inequality]\label{prop:hanson_wright}
    For $X \sim \N(0, I_n)$ and $A \in \R^{n\times n}$, for every $t\geq 0$,
    $$
    \Pr\sqb{ | X^\top A X - \trace A| > t } \leq 2\exp\paren{-c \min \paren{\frac {t^2} {K^4 \| A\|_F^2}, \frac{t}{K^2 \|A\|_{op}}}}\,,
    $$
    for universal constants $c, K$.
\end{proposition}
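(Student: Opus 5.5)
The plan is to reduce to a weighted sum of independent centered $\chi^2_1$ variables and then run a Bernstein-type Chernoff argument. Since $X^\top A X = X^\top S X$ with $S = (A+A^\top)/2$, and $\trace S = \trace A$, $\|S\|_F \le \|A\|_F$, $\|S\|_{op}\le \|A\|_{op}$, it suffices to prove the bound for the symmetric matrix $S$. Write $S = V D V^\top$ with $V\in\mathcal{O}_n$ and $D=\diag(d_1,\dots,d_n)$. By rotation invariance, $g = V^\top X\sim\N(0,I_n)$, so
$$X^\top A X - \trace A = \sum_{i=1}^n d_i (g_i^2 - 1),$$
with $g_i$ i.i.d.\ standard normal. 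Moreover $\sum_i d_i^2 = \|S\|_F^2$ and $\max_i |d_i| = \|S\|_{op}$.

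The key step is a one-dimensional MGF estimate. For $|\theta|\le 1/4$ we have the exact formula $\E e^{\theta(g^2-1)} = e^{-\theta}(1-2\theta)^{-1/2}$. Combining it with $-\tfrac12\log(1-x) - \tfrac x2 \le \tfrac{x^2}{2}$ for $|x|\le 1/2$ (applied with $x=2\theta$) gives
$$\E e^{\theta(g^2-1)} \le e^{2\theta^2}.$$
This elementary inequality is the only place where any care is needed, and I expect it to be the main (though mild) obstacle. It is checked by a Taylor expansion, $-\log(1-x) = x + \sum_{k\ge2} x^k/k \le x + x^2$ for $|x|\le 1/2$.

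By independence, for $0\le s\le 1/(4\|S\|_{op})$ we then get $\E \exp\bigl(s\sum_i d_i(g_i^2-1)\bigr)\le \exp(2s^2\|S\|_F^2)$. Markov's inequality yields
$$\Pr\Bigl[\sum_i d_i(g_i^2-1) > t\Bigr]\le \exp\bigl(-st + 2s^2\|S\|_F^2\bigr).$$
Choose $s=\min\bigl(t/(4\|S\|_F^2),\,1/(4\|S\|_{op})\bigr)$. In the first case the exponent is $-t^2/(8\|S\|_F^2)$. In the second case $2s^2\|S\|_F^2\le st/2$, so the exponent is at most $-t/(8\|S\|_{op})$. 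Hence the exponent is at most $-\tfrac18\min\bigl(t^2/\|S\|_F^2,\ t/\|S\|_{op}\bigr)$.

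The lower tail follows by applying the same bound to $-S$. A union bound over the two tails gives the factor $2$. Finally, replacing $\|S\|_F,\|S\|_{op}$ by the larger $\|A\|_F,\|A\|_{op}$ yields the stated inequality with, e.g., $c=1/8$ and $K=1$.
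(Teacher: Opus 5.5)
Your proof is correct, and it takes a different route from the paper, which gives no proof at all. The paper quotes Hanson--Wright as a black-box fact in its general form for vectors with independent sub-Gaussian coordinates, where $K$ bounds their $\psi_2$-norms. The standard proof of that version goes through decoupling of the off-diagonal chaos and comparison with a Gaussian chaos.

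You instead use the rotation invariance of $\N(0,I_n)$ to diagonalize the symmetrized form. This reduces the problem to the weighted sum $\sum_i d_i(g_i^2-1)$ of independent centered $\chi^2_1$ variables, and a one-line Bernstein--Chernoff bound with the exact $\chi^2$ MGF finishes it; this is essentially the Laurent--Massart argument. Each step checks out:
\begin{itemize}
\item the bound $e^{-\theta}(1-2\theta)^{-1/2}\le e^{2\theta^2}$ holds for $|\theta|\le 1/4$;
\item the constraint $s\le 1/(4\|S\|_{op})$ guarantees $|s d_i|\le 1/4$;
\item both cases of your choice of $s$ give exponent at most $-\tfrac18\min\bigl(t^2/\|S\|_F^2,\ t/\|S\|_{op}\bigr)$, hence $c=1/8$, $K=1$.
\end{itemize}

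Your route gives a self-contained, elementary argument with explicit constants. It gives up generality: it uses Gaussianity essentially and does not extend to sub-Gaussian vectors that are not rotation invariant, which is what the decoupling machinery handles. The paper uses the proposition only in Claim~\ref{claim:tau_quant_bound}, with Gaussian codewords, $A=\Sigma_X\succeq 0$, and only the upper tail. Your version suffices there, and the symmetrization and lower-tail steps are not even needed.

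Two cosmetic points:
\begin{itemize}
\item Write $\ln$ rather than $\log$ in your Taylor inequality. In this paper $\log$ means $\log_2$, and $-\log_2(1-x)\le x+x^2$ is false; for example it fails at $x=0.1$.
\item Mention the degenerate case $S=0$ (e.g.\ antisymmetric $A$). There your choice of $s$ is undefined, but the deviation is identically zero, so the bound holds trivially.
\end{itemize}
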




\section{Random Coding: Worst-Case $W$}\label{subsec:rc_general_W}

In this section, for fixed $\Sigma_X \in \mathbb{S}_+^n$, we characterize the rate-distortion of quantizing a fixed vector $W \in \R^n$ under a distortion metric $d_{\Sigma_X}$ using a random coding scheme. In particular, for a given $\Sigma_X = U^\top \Lambda U$ with $\Lambda = \diag(\lambda_1,\dots,\lambda_n)$, fixed vector $W \in \R^n$, and constant $\Rstar > 0$, we define a generalized distortion function $\DDrc^{\lambda}(U^\top W, \Rstar)$ and demonstrate that it is achievable in Theorem~\ref{thm:fixed_sigma}.

\paragraph{General Rate-Distortion Function.}  Let $V, \lambda = (\lambda_1,\dots,\lambda_n)^\top \in \R^n$ be such that $\lambda_i \geq 0$ for all $i\in\sqb{n}$ and $\sum_i \lambda_i = n$. We define \emph{general random-coding rate-distortion function} in dimension $n$ to be the following parametric curve for $T > 0$:
\begin{equation}\label{eq:rd_rc}
    \Drc^{\lambda}(V, T) = \frac 1n \sum_{i=1}^n \frac{V_i^2 \lambda_i}{1+\lambda_i T}\qquad\text{and}\qquad \Rrc^{\lambda}(T) = \frac 1{2 n}\sum_{i=1}^n \log(1+\lambda_i T)\,.\tag{RDRC}
\end{equation}
Throughout, $\log$ denotes $\log_2$ and $\ln$ denotes the natural logarithm.
Denote $\Trc^{\lambda}(R)$ be a unique value $T$, s.t. $\Rrc^{\lambda}(T) = R$ (note that $\Trc^\lambda(R)$ is independent of $V$). Let
\begin{equation}\label{eq:rd_d}
    \DDrc^{\lambda}(V, R) \triangleq \Drc^{\lambda}(V, \Trc^\lambda(R))\,.\tag{$\DDrc$}
\end{equation}
We consider a task of quantizing a given vector $W \in \R^n$ under a distortion
function~\eqref{eq:dist_intro}. 

\begin{condition}[Admissible $W, \Sigma_X$]\label{cond:W_Sigma}
    We consider $W \in \R^n, \Sigma_X\in\mathbb{S}^n_{+}$ satisfying 
\begin{enumerate}
    \item $\mathrm{EVD}\paren{\Sigma_X} = U \Lambda U^\top$ for $U \in \mathcal{O}_n$ and $\Lambda = \diag\paren{\lambda_1,\dots,\lambda_n}\succeq 0$ with $\trace(\Lambda) = n$;
    \item $\|U^\top W\|_{\infty} \leq n^{\alpha}$ for a known constant $\alpha$.
\end{enumerate}
\end{condition}

In Theorem~\ref{thm:fixed_sigma} we show that there exists an encoder-decoder pair that, for any constant target rate $\Rstar > 0$ and fixed admissible $W, \Sigma_X$ (Cond.~\ref{cond:W_Sigma}), achieves rate $\Rstar$ and distortion $\DDrc^{U^\top W, \lambda}(\Rstar)$ asymptotically, with high probability over the randomness $S$ shared between encoder and decoder (from which a codebook is generated).\footnote{In the regime $\alpha < 1/4$, see the statement of Thm.~\ref{thm:fixed_sigma}.}

\begin{theorem}[Achievability of Random-Coding Rate-Distortion: Nonasymptotic Guarantee]\label{thm:fixed_sigma}
    Fix any constants $\Rstar, \eps_\circ, \eta_\circ > 0$ and $\alpha > 0$. There exists an
    encoder $f: \R^n \times \mathbb{S}_+^n \times [0,1] \to
    \sqb{2^{nR}}$, a decoder $g: \sqb{2^{nR}} \times [0,1] \to \R^n$ with $R \leq \Rstar + \eps_\circ$ and a (shared) random
    variable $S\in[0,1]$ with the following property. For any fixed $W, \Sigma_X$ (satisfying Cond.~\ref{cond:W_Sigma}) we set $\Wh =
    g(f(W,\Sigma_X,S),S)$. Then for any sufficiently large $n \ge n_0 = n_0 (\eps_\circ,
    \eta_\circ, \Rstar, \alpha)$  and any $\beta > \alpha -
    1/4$ in case $\alpha \geq 1/4$ and $\beta = 0$ otherwise, we have
    $$
    \Pr_{S}\sqb{\frac 1 n d_{\Sigma_X}(W, \Wh) \leq \DDrc^{\lambda}(U^\top W, \Rstar) + n^{2\beta} \eta_\circ} \geq 1 - \exp\paren{-2^{n\eps_\circ(1-cn^{4(\alpha-\beta)-1})}}\,,
    $$
    where $d_{\Sigma_X}(W,\Wh)$ is the distortion function in Eq.~\eqref{eq:dist_intro} and $c = c(\eps_\circ,\eta_\circ,\Rstar,\alpha)$.
\end{theorem}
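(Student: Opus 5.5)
We will follow the random-coding sketch of Section~\ref{subsec:ideas_proof_sketch}, specialized to a fixed $W$ and a fixed $\Sigma_X$. The shared randomness $S$ generates $M=2^{n\Rstar'}$ codewords $c_1,\dots,c_M\simiid\N(0,I_n)$, with $\Rstar'=\Rstar+\eps_\circ/2$. The encoder works in three steps.
\begin{enumerate}
\item It computes $V=U^\top W$, $T=\Trc^\lambda(\Rstar)$, and the scale $\tau$ from \eqref{eq:tauSketch}.
\item It quantizes $\tau$ to $\hat\tau$ on a grid of polynomial resolution over $[0,\poly(n)]$. Condition~\ref{cond:W_Sigma}(2) gives $\tau\le \poly(n)$, so this costs only $O(\log n)=o(n)$ bits, absorbed into the $\eps_\circ/2$ rate slack.
\item It sends $\hat\tau$ together with $i^*=\argmin_i d_{\Sigma_X}(W,\hat\tau c_i)$.
\end{enumerate}
The decoder outputs $\hat\tau c_{i^*}$. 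The total rate is $R\le\Rstar+\eps_\circ$.

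The core step is the single-codeword lower bound (the content of Lemma~\ref{lem:codeword_success}). We need, for $D=\DDrc^\lambda(V,\Rstar)+n^{2\beta}\eta_\circ$,
\[
p:=\Pr_{c\sim\N(0,I_n)}\Big[\sum_i\lambda_i(V_i-\tau c_i)^2\le nD\Big]\ \ge\ 2^{-n(\Rstar+\eps_\circ/2)+n\eps_\circ(1-cn^{4(\alpha-\beta)-1})}\,.
\]
By rotation invariance the sum is a sum of independent noncentral chi-square terms $Z_i=\lambda_i(V_i-\tau c_i)^2$. We exponentially tilt each coordinate with parameter $s$, chosen so that the tilted mean is $n\DDrc^\lambda(V,\Rstar)$. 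With $\tau$ as in \eqref{eq:tauSketch} this tilt corresponds to $s\propto T$. The tilted coordinates are Gaussian, $c_i\sim\N\bigl(\tfrac{\lambda_i s\tau V_i}{1+\lambda_i s\tau^2},\tfrac{1}{1+\lambda_i s\tau^2}\bigr)$, and the log-MGF at this $s$ evaluates to $-\tfrac12\sum\ln(1+\lambda_iT)=-n\Rstar\ln 2$, up to the optimized term. This is a direct completion-of-squares computation.

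Writing $p=\E_{\tilde P}[e^{-\text{(log-likelihood ratio)}}\Ind\{\cdot\}]$, we restrict to the window $\sum Z_i\in[nD-n^{2\beta}\eta_\circ n,\,nD]$. This pays $e^{s\cdot n\cdot n^{2\beta}\eta_\circ}$ in exponent relative to the optimal rate, which is where the rate gain $\eps$ comes from; choosing $D$ larger by $n^{2\beta}\eta_\circ$ gives back a linear-in-$n$ exponent. We then need the tilted probability of the window to be at least, say, $1/2$. The tilted variance of $\sum Z_i$ is $O(\sum\lambda_i^2(1+V_i^2))$. The quantity $\sum_i\lambda_iV_i^2$ can be as large as $n^{1+2\alpha}$ when $\Lambda$ is concentrated, so the variance can be of order $n^{2+4\alpha}/(\text{stuff})$. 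We therefore use Chebyshev or Hanson--Wright (Proposition~\ref{prop:hanson_wright}) with deviation $n^{1+2\beta}\eta_\circ$. Comparing variance to deviation squared yields the failure term $n^{4(\alpha-\beta)-1}$, which explains the threshold $\beta>\alpha-1/4$. The same bound controls the fluctuation of the log-likelihood ratio inside the window. Degenerate $\lambda$'s (many zeros, or $T\to\infty$) are handled because only $\lambda_i>0$ coordinates enter, and $\trace\Lambda=n$ bounds everything polynomially.

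Given $p$, independence of the codewords gives $\Pr[\text{failure}]\le(1-p)^M\le\exp(-Mp)\le\exp(-2^{n\eps_\circ(1-cn^{4(\alpha-\beta)-1})})$. Finally we account for replacing $\tau$ by $\hat\tau$. On the success event the chosen codeword has $\|c\|\le\poly(n)$, since otherwise its distortion already exceeds the bound, so the perturbation $|\tau-\hat\tau|\cdot\|\Sigma_X\|\poly(n)$ can be made $o(1)$ with polynomially fine grid resolution. Alternatively, one can run the argument directly with $\hat\tau$, since $p$ is continuous in $\tau$ at polynomial scale. The main obstacle is the second paragraph: obtaining a genuinely lower (not Chernoff upper) bound on $p$ uniformly over all spectra $\lambda$, with the variance control under only $\|V\|_\infty\le n^\alpha$.
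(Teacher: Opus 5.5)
The architecture matches the paper's: an isotropic Gaussian codebook scaled by an encoder-chosen $\tau$, an exponential tilt at $|t|=T/(2\tau^2)$, Chebyshev under the tilted law, and then $(1-p)^M$. Running Chebyshev directly at deviation $n^{1+2\beta}\eta_\circ$ is equivalent to the paper's rescaling $W\mapsto n^{-\beta}W$. However, your core lower bound on $p$ is not established, and one claimed step is false.

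\textbf{Variance.} Your estimate $O(\sum_i\lambda_i^2(1+V_i^2))$, with powers of $\tau\le n^\alpha$ absorbed, is of order $n^{2+4\alpha}$ for $\lambda=(n,0,\dots,0)$. Against $(n^{1+2\beta}\eta_\circ)^2$, Chebyshev is then useless whenever $\beta<\alpha$. The ``$/(\text{stuff})$'' you need is exactly a factor $n$, uniformly over spectra, and $\trace\Lambda=n$ does not supply it. The paper starts from the exact tilted variance $\frac{4V_j^2\lambda_j^2\tau^2}{(1+\lambda_jT)^3}+\frac{2\lambda_j^2\tau^4}{(1+\lambda_jT)^2}$. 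Using $(1+\lambda_jT)^2\ge 2\lambda_jT$, $\frac{\lambda_j}{1+\lambda_jT}\le\frac1T$ and $\frac T2\sum_j\frac{\lambda_j}{1+\lambda_jT}\le n\Rnat$, it gets $\Var[\sum_jX_j]\le n(\Dstar/|t|+\Rnat/|t|^2)$. Then $|t|=T/(2\tau^2)\ge T/(2\|V\|_\infty^2)\ge \Rnat/\|V\|_\infty^2$ (since $\tau\le\|V\|_\infty$ and $T\ge 2\Rnat$), together with $\Dstar\le\|V\|_\infty^2/T$, gives $O(n\|V\|_\infty^4)$. Large $\lambda_j$ are killed by the $(1+\lambda_jT)$ denominators, not by the trace.

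\textbf{Window.} Your window $[n\Dstar,nD]$ has the tilted mean at its left end, so its tilted mass is essentially $\Pr_{\tilde\mu}[\sum_jX_j\ge \E_{\tilde\mu}\sum_jX_j]$. This need not be $\ge 1/2$: for $\lambda=(n,0,\dots,0)$ the tilted $X_1$ is essentially a scaled central $\chi^2_1$, giving about $0.32$. Chebyshev and Hanson--Wright give no lower bound on it at all; you would need an anti-concentration input. The paper avoids this by data processing: $n\Rnat=D(\tilde\mu\|\mu)\ge d(\tilde p_n\|p_n)\ge -\ln 2+\tilde p_n\ln\frac{1}{p_n}$. So only the upper-tail statement $\tilde p_n=\Pr_{\tilde\mu}[\sum_jX_j\le nD]\to 1$ is needed.

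\textbf{Linear gain from the slack.} Your displayed target, $p\ge 2^{-n\Rstar+n\eps_\circ/2-o(n)}$, asks for a linear-in-$n$ gain from the distortion slack, and this fails. Take $\Lambda=I_n$ and $|V_i|=n^\alpha$ for all $i$. Then $\tau^2=n^{2\alpha}(1-2^{-2\Rstar})$ and $|t|=2^{2\Rstar}/(2n^{2\alpha})$. Chernoff at this tilt gives
$$\ln p\le -n\Rnat+|t|\,n^{1+2\beta}\eta_\circ=-n\Rnat+O(n^{1-2(\alpha-\beta)}),$$
which is $-n\Rnat+o(n)$ for every admissible $\beta<\alpha$ (e.g.\ $\beta=0$ when $\alpha<1/4$), up to negligible changes for your $\hat\tau$. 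So with $M=2^{n(\Rstar+\eps_\circ/2)}$ your failure probability is at least $\exp(-2^{n\eps_\circ/2+o(n)})$.

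The paper extracts no gain from the slack. There $\eta$ only makes $\tilde p_n\to1$, Lemma~\ref{lem:codeword_success} gives $\ln p_n\ge -n\Rnat(1+O(n^{4\alpha-1}))$, and the $\eps$ in the double exponent comes entirely from oversizing the codebook to $2^{n(\Rstar+\eps)}$. The paper sets $\eps=\eps_\circ/2$, so its own bookkeeping also yields $\eps_\circ/2$ in the double exponent. Since your $\tau$ description costs only $O(\log n)$ bits, the natural repair is to give the codebook rate $\Rstar+\eps_\circ-O(\log n/n)$.

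\textbf{Minor.} Success does not bound $\|c\|$ when $\Sigma_X$ is singular, so your first argument for replacing $\tau$ by $\hat\tau$ fails. Your second one works: lower-bound the $\hat\tau$ single-codeword probability by the $\tau$ one at slightly smaller distortion, minus $\Pr[c^\top\Sigma_Xc>n^3]\le e^{-\Omega(n^2)}$, which is negligible against $2^{-n\Rstar}$.
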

\begin{remark}
    In case of $\alpha < 1/4$, the distortion bound above simplifies to $\DDrc^{\lambda}(U^\top W, \Rstar) + \eta$ (since $\beta = 0$).
\end{remark}

\subsection{Proof of Theorem~\ref{thm:fixed_sigma}}

\paragraph{Quantization Scheme.} 
    From the shared randomness $S$, $f,g$ generate a Gaussian codebook $\CCC =
    \sset{c_1,\dots,c_{M=2^{n(\Rstar+\eps)}}}$ for $\eps = \eps(\eps_\circ)$ to be chosen
    later. Denote $\Wt = U^\top W$. We define:
    \begin{itemize}
        \item \textbf{Encoder $f$:} Let $T \coloneq \Trc^\lambda(\Rstar)$ and define the scaling parameter $\tau =
	\tau(\Wt, \lambda)$ as 
        \begin{equation}\label{eq:tau}
        \tau = \paren{T \sum_{j} \frac{\Wt_j^2 \lambda_j^2}{(1+\lambda_j T)^2}}^{1/2}\paren{\sum_j \frac{\lambda_j}{1+\lambda_j T}}^{-1/2}\,.\tag{$\tau$ def.}
        \end{equation}
        We set $f$ to be a tuple
        $$
        f(W, \Sigma_X, S) = \paren{\argmin_{i \in \sqb{M}} d_{\Sigma_X}(W, \tau c_i), q(\tau)}\,,
        $$
        where $q(\tau) = \delta \|\Wt\|_{\infty} \lfloor \tau/(\delta \|\Wt\|_{\infty}) \rceil$ is a rounding quantization scheme of precision $\delta$ and recall the distortion function
    $$
    d_{\Sigma_X}(W, C) \coloneq (W - C)^\top \Sigma_X (W - C)\,.
    $$
        \item \textbf{Decoder $g$:}
        $$
        g(i, q(\tau), S) = q(\tau) \cdot c_i\,.
        $$
    \end{itemize}

\paragraph{Rate-Distortion Bound.}  
    In the quantization scheme above, 
    $$
    R = \underbrace{\Rstar + \eps}_{\text{Gauss. codebook}} + \underbrace{\frac1n \log(1/\delta)}_{\tau\text{ quant.}}\,.
    $$
    The rest of the proof is to obtain a high probability bound on the resulting distortion that, given $\Wh = g(f(W,\Sigma_X, S), S) = q(\tau)\cdot c_i$, can be expressed as:
    \begin{align*}
        d_{\Sigma_X}(W,\Wh) = d_{\Sigma_X}(W,q(\tau)c_i) = (W - q(\tau) c_i)^\top \Sigma_X (W - q(\tau) c_i)\,.\label{eq:dist_expand}
    \end{align*}
    In what follows, denote $\Dstar \coloneq \DDrc^{\lambda}(\Wt, \Rstar)$.

    Before giving the proof we state two helpful claims that bound the effect of quantizing $\tau$ in the scheme above. Proofs of Claim~\ref{claim:tau_bound} and \ref{claim:tau_quant_bound} are found in this subsection below.

    \begin{claim}[Bound on $\tau$.]\label{claim:tau_bound}
        The value $\tau$ in Eq.\eqref{eq:tau} satisfies
        $$
        0 \leq \tau \leq \|\Wt\|_{\infty} \leq n^{\alpha}\,.
        $$
    \end{claim}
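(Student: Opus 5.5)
Nonnegativity is immediate from \eqref{eq:tau}: $T>0$ and $\lambda_j\ge 0$, so both factors are nonnegative square roots. (If every $\lambda_j=0$ the trace condition fails, so $\sum_j \lambda_j/(1+\lambda_j T)>0$ and $\tau$ is well defined.) The last inequality $\|\Wt\|_\infty\le n^{\alpha}$ is item 2 of Condition~\ref{cond:W_Sigma}. What remains is the middle inequality $\tau\le\|\Wt\|_\infty$.

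For the middle inequality, first pull out the largest coordinate. Since $\Wt_j^2\le\|\Wt\|_\infty^2$ for every $j$,
\[
\tau^2 \le \|\Wt\|_\infty^2\cdot \frac{T\sum_j \lambda_j^2/(1+\lambda_jT)^2}{\sum_j \lambda_j/(1+\lambda_j T)}.
\]
It then suffices to show that the ratio on the right is at most $1$. This follows termwise:
\[
\frac{T\lambda_j^2}{(1+\lambda_jT)^2}=\frac{\lambda_j}{1+\lambda_jT}\cdot\frac{\lambda_jT}{1+\lambda_jT}\le \frac{\lambda_j}{1+\lambda_jT},
\]
because $\lambda_jT/(1+\lambda_jT)\in[0,1)$. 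Summing over $j$ gives a ratio at most $1$, hence $\tau^2\le\|\Wt\|_\infty^2$, and taking square roots gives $\tau\le\|\Wt\|_\infty$.

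There is no real obstacle here. The only points needing care are the termwise comparison and noting that the denominator is nonzero.
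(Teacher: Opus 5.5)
Your proposal is correct and follows essentially the same route as the paper: bound $\Wt_j^2 \le \|\Wt\|_\infty^2$, then use $T\lambda_j/(1+\lambda_j T)\le 1$ termwise to compare numerator and denominator, with the final inequality coming from Condition~\ref{cond:W_Sigma}. Your additional remarks on nonnegativity and on the denominator being positive (since $\trace(\Lambda)=n$ and $T>0$) are valid details that the paper leaves implicit.
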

    
    \begin{claim}[Bound on distortion from $\tau$ quantization]\label{claim:tau_quant_bound} Given $|q(\tau) - \tau| = \delta_\tau \leq \delta n^{\alpha}$, with probability at least $1 - 2^{n(\Rstar+\eps)}\exp(-Ct)$ for a universal constant $C$ and any $t>1/n$,
    $$
    d_{\Sigma_X}(W, q(\tau)c_i) \leq \paren{ \sqrt{d_{\Sigma_X}(W, \tau c_i)} + \delta n^{\alpha} \sqrt{ n(1+t)}}^2\,.
    $$
    \end{claim}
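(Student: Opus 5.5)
The plan is to compare the two quantized reconstructions $q(\tau)c_i$ and $\tau c_i$ in the Hilbert seminorm $\|x\|_{\Sigma_X}=\sqrt{x^\top\Sigma_X x}$, and then control the codeword term with Hanson--Wright plus a union bound over the codebook. The triangle inequality for $\|\cdot\|_{\Sigma_X}$ gives
\[
\sqrt{d_{\Sigma_X}(W,q(\tau)c_i)}=\|W-\tau c_i-(q(\tau)-\tau)c_i\|_{\Sigma_X}\le \sqrt{d_{\Sigma_X}(W,\tau c_i)}+\delta_\tau\,\|c_i\|_{\Sigma_X}.
\]
Since $\delta_\tau\le \delta n^\alpha$ by hypothesis, the claim reduces to showing that, with the stated probability,
\[
\|c_i\|_{\Sigma_X}^2=c_i^\top\Sigma_X c_i\le n(1+t).
\]
Squaring both sides of the triangle inequality then gives the bound in Claim~\ref{claim:tau_quant_bound}.

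A subtlety is that the index $i$ is the argmin and so depends on the codebook. To avoid handling this dependence directly, I would prove the norm bound simultaneously for all $M=2^{n(\Rstar+\eps)}$ codewords. Each $c_j\sim\N(0,I_n)$ is independent of $\Sigma_X$, which is fixed, and $\E[c_j^\top\Sigma_X c_j]=\trace\Sigma_X=n$ by Cond.~\ref{cond:W_Sigma}. Applying Prop.~\ref{prop:hanson_wright} with $A=\Sigma_X$ and deviation $nt$ gives
\[
\Pr\big[c_j^\top\Sigma_X c_j>n(1+t)\big]\le 2\exp\!\Big(-c\min\Big(\tfrac{n^2t^2}{K^4\|\Sigma_X\|_F^2},\tfrac{nt}{K^2\|\Sigma_X\|_{op}}\Big)\Big).
\]

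Next I would bound the matrix norms using only the trace constraint: $\|\Sigma_X\|_{op}\le\trace\Sigma_X=n$ and $\|\Sigma_X\|_F^2=\sum_i\lambda_i^2\le(\sum_i\lambda_i)^2=n^2$. The exponent is therefore at least $c\min(t^2/K^4,\,t/K^2)$. For $t\ge1$ this is $\gtrsim t$.

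This is the step I expect to be the main obstacle, because for small $t$ the exponent behaves like $t^2$, not $t$. The worst case is rank one: then $c^\top\Sigma_X c=n\,g^2$ with $g$ a standard Gaussian, so the tail is genuinely $\exp(-\Theta(t))$ for large $t$ and no better. I would handle this by a direct computation for the upper tail of a nonnegative combination of $\chi^2_1$ variables with weights $\lambda_i/n\le1$ summing to $1$. The Chernoff/MGF bound $\E e^{s\sum_i(\lambda_i/n)g_i^2}\le (1-2s)^{-1/2}$ for $s<1/2$, which holds by log-convexity in the weights, gives a tail $\lesssim e^{-Ct}$ uniformly over spectra once $t$ is bounded away from $0$. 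The universal constant $C$ absorbs both this and the prefactor $2$.

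For the claim as stated, with $t>1/n$, the extra factor is harmless: it only needs to be compared with the union-bound factor $2^{n(\Rstar+\eps)}$, and the claim will be applied with $t$ polynomially large in $n$. Summing over the $M$ codewords then gives failure probability at most $2^{n(\Rstar+\eps)}\exp(-Ct)$, and on the complementary event the squared triangle inequality yields the claim.
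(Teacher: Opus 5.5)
Your proposal is correct and follows the paper's proof: the triangle inequality for the seminorm $x\mapsto\sqrt{x^\top\Sigma_X x}$, Hanson--Wright applied to each $c_j^\top\Sigma_X c_j$ with $\trace\Sigma_X=n$, and a union bound over all $2^{n(\Rstar+\eps)}$ codewords to handle the codebook-dependent index $i$. Your extra care about the $t^2$ regime and the prefactor $2$ tightens a step the paper glosses over, since it simply writes the tail as $\exp(-Ctn/\|\Sigma_X\|_{op})$. Your union-bound remark settles this: whenever $Ct\le n(\Rstar+\eps)\ln 2$, the stated failure bound $2^{n(\Rstar+\eps)}e^{-Ct}$ is at least $1$, so only the regime $t\geq 1$, where Hanson--Wright is sub-exponential, actually needs an argument.
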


    The main part of the argument is essentially contained in the following Lemma~\ref{lem:codebook_success}. The proof, which also explains the expressions for $\Drc$ and $\Rrc$ is proven in a separate
    section~\ref{subsec:proof_codebook_success} due to its importance. 

    \begin{mylem}{\ref{lem:codebook_success}}\textnormal{\textbf{(Gaussian Book Success)}} Let $\eps,\eta >0$ be constants and $\CCC = \sset{c_1,\dots,c_{M= 2^{n(\Rstar + \eps)}} }$, $c_i \sim_{i.i.d.} \N(0, I_n)$ be a randomly generated Gaussian codebook.

    For admissible $W, \Sigma_X$ (Cond.~\ref{cond:W_Sigma}) with $\alpha < 1/4$, 
        $$
        \Pr_{\CCC}\sqb{ \frac 1 n \min_{i \in \sqb{M}} d_{\Sigma_X}(W, \tau(W,\Sigma_X) \cdot c_i) \leq  \DDrc^{\lambda}(U^\top W, \Rstar)+\eta} \geq 1 - \exp\paren{-2^{n\eps(1-c n^{4\alpha - 1})}}\,,
        $$
        where $c = c(\eps,\eta, \Rstar, \alpha)$ is an explicit constant function, $\DDrc^{\lambda}(U^\top W, \Rstar)$ is defined in Eq.~\eqref{eq:rd_d}, and $\tau(W,\Sigma_X)$ is defined in Eq.~\eqref{eq:tau}.
\end{mylem}

\begin{proof}(of Theorem~\ref{thm:fixed_sigma})
We apply Lemma~\ref{lem:codebook_success} to $W' = n^{-\beta}W$ for $\eta = \eta(\eta_\circ)$ to be chosen later and any constant $\beta > \alpha - 1/4$ in case $\alpha \geq 1/4$ and $\beta = 0$ otherwise. We have $\|U^\top W'\|_{\infty} = n^{-\beta} \|U^\top W\|_{\infty} \leq n^{\alpha - \beta} = o(n^{1/4})$, and therefore,
\begin{align*}
    \Pr_{\CCC}\sqb{ \min_{i \in \sqb{M}} d_{\Sigma_X}(W, \tau c_i) \leq  n \b(\underbrace{\DDrc^{\lambda}(\Wt, \Rstar)}_{\Dstar}+n^{2\beta} \eta\b)} &= \Pr_{\CCC}\sqb{ \min_{i \in \sqb{M}} d_{\Sigma_X}(W', \tau n^{-\beta} c_i) \leq  n \paren{\DDrc^{\lambda}(U^\top W', \Rstar)+\eta}}\\ &\geq 1 - \exp\paren{-2^{n\eps(1-cn^{4(\alpha - \beta) - 1})}}\,.
\end{align*}
From Claim~\ref{claim:tau_bound}, the simple rounding quantizer $q(\tau) = \delta \|\Wt\|_{\infty} \lfloor \tau/(\delta \|\Wt\|_{\infty}) \rceil$ achieves $|q(\tau) - \tau| \leq \delta n^\alpha$, and therefore, by Claim~\ref{claim:tau_quant_bound} and a union bound, with probability at least $1 - \exp\paren{-2^{n\eps(1-cn^{4(\alpha-\beta)-1})}} - 2^{n(\Rstar+\eps)}\exp(-Ct)$, 
\begin{equation}\label{eq:dist_final_unsimpl}
    d_{\Sigma_X}(W,q(\tau) c_i) \leq \paren{ \sqrt{n (\Dstar + n^{2\beta} \eta)} + \delta n^{\alpha} \sqrt{ n(1+t)}}^2\,.
\end{equation}
It remains to simplify the expression in Eq.~\eqref{eq:dist_final_unsimpl}. Let $A > 0$ be any constant and set 
\begin{equation}\label{eq:delta_param_def}
\delta \eqcolon \min\sset{ \frac{1}{4\sqrt{1+t}} \paren{\frac{1}{2\ln 2 \cdot\Rstar} + \eta}^{-1/2}n^{-A - 2\alpha -1}, \frac{1}{\sqrt{2(1+t)}}n^{-(A+2\alpha+1)/2}}\,.
\end{equation}
Notice that since $\forall i, \frac{\lambda_i}{1+2T\lambda_i} \leq \frac 1 {2T}$, we have $\Dstar \leq \frac{n^{2\alpha}}{2T} \leq \frac {n^{2\alpha}}{2\ln 2 \cdot \Rstar}$,
where the last inequality is derived by $\ln 2 \cdot \Rstar = \frac1{2n} \sum_i \ln(1+2\lambda_i T) \leq T$. Plugging in the $\delta$ value in Eq.~\eqref{eq:delta_param_def} into Eq.~\eqref{eq:dist_final_unsimpl}, we obtain 
\begin{align*}
    d_{\Sigma_X}(W,q(\tau) c_i) &\leq n \paren{\Dstar + n^{2\beta} \eta + 2 \delta\sqrt{\Dstar + n^{2\beta} \eta}\cdot n^{\alpha} \sqrt{1+t} + \delta^2n^{2\alpha}(1+t)}\\
    &\leq n \paren{ \Dstar + n^{2\beta} \eta +  \frac 12 n^{-A-1} + \frac12 n^{-A-1}} = n(\Dstar + n^{2\beta} \eta) + n^{-A}\,.
\end{align*}
We denote $c', C_1, C_2, C_3, C_4, C_{12}$ to be explicit constants depending on $\eps, \eta, \Rstar, \alpha$ (but not $W$ or $\Sigma_X$). Now plug in $t = 2^{n\eps} + C^{-1} n \ln 2 (\Rstar + \eps)$. The condition in Eq.~\eqref{eq:dist_final_unsimpl} holds with probability at least 
$$
1- \exp\paren{-2^{n\eps(1-cn^{4(\alpha-\beta)-1})}} - \exp\paren{n\ln 2(\Rstar+\eps)}\exp(-Ct) = 1 - \exp{\paren{-2^{n\eps(1-c'n^{4(\alpha-\beta)-1})}}}\,.
$$
Finally, the rate of this quantization scheme is $R = \Rstar + \eps + \frac 1n \log(1/\delta)$, which we now bound:
\begin{align*}
    \log(1/\delta) &\leq \max\sset{ C_1 \log n + \frac12 \log(1+t) + 2\alpha \log n, C_2 \log n  + \frac12 \log(1+t) + \alpha \log n}\\
    &\leq C_{12} \log n + \frac12 \log(1+t) + 2\alpha\log n\,.
\end{align*}
Plugging in the expression for $t$, we obtain
$$
\log(1+t) \leq C_3 + n \eps  + C_4 \log n\,,
$$
and therefore, 
$$
R \leq \Rstar + \frac 32 \eps + C\cdot \frac{\log n + \alpha\log n}{n}\,.
$$
For sufficiently large $n$, the RHS is $\leq \Rstar + 2\eps$. Moreover, for sufficiently large $n$, our final distortion bound simplifies to $n(\Dstar + n^{2\beta} \cdot 2\eta)$. Setting $\eps = \eps_\circ / 2, \eta = \eta_\circ/2$, we conclude the proof.
\end{proof}

\paragraph{Proof of Claim~\ref{claim:tau_bound}.}
        Since for all $j \in \sqb{n}$, $\Wt_j^2 \leq \|\Wt\|_{\infty}^2$ and $\frac {\lambda_j}{1 + \lambda_j T} \leq \frac 1 {T}$,
        \begin{align*}
            T \sum_{j} \frac{\Wt_j^2 \lambda_j^2}{(1+\lambda_j T)^2} \leq \|\Wt\|_{\infty}^2 \sum_j \frac{\lambda_j}{1+\lambda_j T}\,,
        \end{align*}
        yielding $0 \leq \tau \leq \|\Wt\|_{\infty}$.\qed

\paragraph{Proof of Claim~\ref{claim:tau_quant_bound}.} 
        \begin{align*}
            \sqrt{d_{\Sigma_X}(W, q(\tau)c_i)} &= \sqrt{ \E_{X} \sqb{\b\| (W - q(\tau)c_i)^T X \b\|^2_2} }\\
            &\leq \sqrt{ d_{\Sigma_X}(W, \tau c_i) } + \sqrt{ \E_{X} \sqb{\b\| (\delta_{\tau}\cdot c_i)^T X \b\|^2_2} }\\
            &= \sqrt{ d_{\Sigma_X}(W, \tau c_i) } + \delta_\tau \sqrt{   c_i^T \Sigma_X c_i } \,,
        \end{align*}
        where $\delta_{\tau} = |q(\tau) - \tau| \leq \delta n^\alpha$. By Hanson-Wright inequality (Prop.~\ref{prop:hanson_wright}) for any codeword $j$, the second term can be bounded as
        $$
        \Pr\sqb{ | c_j^T \Sigma_X c_j - \trace \Sigma_X| > nt } \leq 2\exp\paren{-c \min \paren{\frac {t^2n^2} {K^4 \| \Sigma_X\|_F^2}, \frac{tn}{K^2 \|\Sigma_X\|_{op}}}}\,,
        $$
        for universal constants $c, K$, and therefore, with probability at least $1 - \exp(- C tn / \|\Sigma_X\|_{op}) \geq 1 - \exp(- C t)$,
        $$
        \delta_\tau \sqrt{ c_j^T \Sigma_X c_j } \leq \delta n^\alpha \sqrt{n(1+t)}\,.
        $$
        The statement of the Claim follows by a union bound over all $2^{n(\Rstar+\eps)}$ codewords.
        \qed

\subsection{Proof of Lemma~\ref{lem:codebook_success}: Success Probability of Random Gaussianå Code}\label{subsec:proof_codebook_success}

\begin{mycond}{\ref{cond:W_Sigma}}\textnormal{\textbf{(Admissible $W, \Sigma_X$)}}
    We consider $W \in \R^n, \Sigma_X\in\mathbb{S}^n_{+}$ satisfying 
\begin{enumerate}
    \item $\mathrm{SVD}\paren{\Sigma_X} = U \Lambda U^\top$ for $U \in \mathcal{O}_n$ and $\Lambda = \diag\paren{\lambda_1,\dots,\lambda_n} \succeq 0$ with $\trace(\Lambda) = n$;
    \item $\|U^\top W\|_{\infty} \leq n^{\alpha}$ for a known $\alpha > 0$.
\end{enumerate}
\end{mycond}

For $W, C \in \R^n$, $\tau \in \R$, and $\Sigma \in \mathbb{S}^n_{+}$, recall that we define the distortion function to be
\begin{equation}\label{eq:def_e_coebook_lem}
    d_{\Sigma_X}(W, \tau C)  = (W - \tau C)^\top \Sigma_X (W - \tau C)\,. \tag{$E$ def.}
\end{equation}
For admissible $(W, \Sigma_X)$ (Cond.~\ref{cond:W_Sigma}), denote $T = \Trc^\lambda(\Rstar)$ to be a unique solution to $\Rrc^{\lambda}(T) = \Rstar$ and define
        \begin{equation}\label{eq:def_tau_codebook_lem}
        \tau(W, \Sigma_X) = \paren{T \sum_{j} \frac{(U^\top W)_j^2 \lambda_j^2}{(1+\lambda_j T)^2}}^{1/2}\paren{\sum_j \frac{\lambda_j}{1+\lambda_j T}}^{-1/2}\,.\tag{$\tau$ def.}
        \end{equation}

\begin{lemma}[Gaussian Book Success]\label{lem:codebook_success} Let $\eps,\eta >0$ be constants and $\CCC = \sset{c_1,\dots,c_{M= 2^{n(\Rstar + \eps)}} }$, $c_i \sim_{i.i.d.} \N(0, I_n)$ be a randomly generated Gaussian codebook.

For admissible $W, \Sigma_X$ (Cond.~\ref{cond:W_Sigma}) with $\alpha < 1/4$, 
        $$
        \Pr_{\CCC}\sqb{ \frac 1 n \min_{i \in \sqb{M}} d_{\Sigma_X}(W, \tau(W,\Sigma_X) \cdot c_i) \leq  \DDrc^{ \lambda}(U^\top W,\Rstar)+\eta} \geq 1 - \exp\paren{-2^{n\eps(1-c n^{4\alpha - 1})}}\,,
        $$
        where $c = c(\eps,\eta, \Rstar, \alpha)$ is an explicit constant function, $\DDrc^{\lambda}(U^\top W, \Rstar)$ is defined in Eq.~\eqref{eq:rd_d}, and $\tau(W,\Sigma_X)$ is defined in Eq.~\eqref{eq:def_tau_codebook_lem}.
\end{lemma}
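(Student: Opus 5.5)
Since the codewords are i.i.d. and rotation invariant, I would first reduce to $\Sigma_X=\Lambda$ diagonal and $w:=\Wt=U^\top W$ with $\|w\|_\infty\le n^\alpha$. Set $p:=p_\suc=\Pr_{c\sim\N(0,I_n)}[\frac1n\sum_i\lambda_i(w_i-\tau c_i)^2\le \Dstar+\eta]$ with $\Dstar=\DDrc^\lambda(w,\Rstar)$. The failure probability is $(1-p)^M\le\exp(-Mp)$, so it suffices to prove
\[
p\ \ge\ 2^{-n\Rstar-n\eps c n^{4\alpha-1}}.
\]
Then $Mp\ge 2^{n\eps(1-cn^{4\alpha-1})}$. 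The whole lemma is therefore a quantitative large-deviation \emph{lower} bound for the weighted sum of independent noncentral chi-squares $Z=\sum_i\lambda_i(w_i-\tau c_i)^2$.

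I would prove this bound by exponential tilting (a Cramér-type change of measure). Write $\tau=0$ only in the degenerate case, where the claim is trivial.
\begin{enumerate}
\item Tilt each $c_i$ by $e^{-s\lambda_i(w_i-\tau c_i)^2}$ with $s>0$. Under the tilted law $Q$ the coordinates are independent Gaussians with variance $1/(1+2s\lambda_i\tau^2)$ and mean $\mu_i=\frac{2s\lambda_i\tau w_i}{1+2s\lambda_i\tau^2}$.
\item The log-moment generating function is explicit:
\[
-\tfrac12\sum_i\ln(1+2s\lambda_i\tau^2)-\sum_i\frac{s\lambda_i w_i^2}{1+2s\lambda_i\tau^2}.
\]
\item Choose $s$ so that $2s\tau^2=T$. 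Choose $\tau$ from \eqref{eq:tau} so that $\E_Q[Z]/n=\Dstar$; I expect a direct check to confirm this, with \eqref{eq:tau} being exactly the stationarity condition.
\item The exponent then becomes $\frac12\sum\ln(1+\lambda_iT)=n\Rstar\ln2$ up to the term $s(n\Dstar-\E_Q Z)$, which vanishes at this choice.
\end{enumerate}

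With the measure change $dP/dQ=e^{\psi(s)}e^{sZ}$ (with $e^{\psi}$ equal to the inverse of the exponentiated exponent above), we get
\[
p\ \ge\ e^{-n\Rstar\ln 2}\,\E_Q\bigl[e^{s(Z-n\Dstar)}\mathbf 1\{n\Dstar\le Z\le n(\Dstar+\eta)\}\bigr]\ \ge\ 2^{-n\Rstar}\,Q\bigl(Z\in[n\Dstar,\,n(\Dstar+\eta)]\bigr).
\]
Here I may slightly shift the target mean to $\Dstar+\eta/2$ by perturbing $s$. This costs only $O(\eta)$ in the exponent, which is absorbed into $\eps$ by picking the shift small relative to $\eps$. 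It remains to show that $Q(|Z-\E_QZ|\le n\eta/2)\ge 1-o(1)$, or even just $\ge 2^{-n\eps c n^{4\alpha-1}}$.

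\textbf{Main obstacle.} This last step is the concentration of $Z$ under $Q$, uniformly over spectra, with no lower bound on the $\lambda_i$ and possibly huge $\lambda_i$. Under $Q$, $Z$ is a quadratic form in Gaussians, so I would apply Hanson--Wright (Prop.~\ref{prop:hanson_wright}) together with a Gaussian linear-term tail. The quadratic part has matrix $\diag\bigl(\lambda_i\tau^2/(1+\lambda_iT)\bigr)$. Each entry is at most $\tau^2/T\le n^{2\alpha}/T$, and $T\ge\Rstar\ln2\cdot 2$ is bounded below by the bound $\ln2\,\Rstar\le T$ used in Theorem~\ref{thm:fixed_sigma}. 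Hence the Frobenius norm is at most $\sqrt n\,n^{2\alpha}/T$ and the operator norm at most $n^{2\alpha}/T$. The linear part has variance $\sum_i\lambda_i^2 w_i^2(\cdots)^2\lesssim n\cdot n^{2\alpha}\cdot n^{2\alpha}/T^2$ after using $\lambda_i/(1+\lambda_iT)\le 1/T$.

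Both deviations at scale $n\eta$ therefore have probability at most $\exp(-c\,n\eta^2T^2/n^{4\alpha})$, which is small precisely because $\alpha<1/4$. This is where the $n^{4\alpha-1}$ term enters the final exponent. The bookkeeping of constants, including taking $T$ large when $\Rstar$ is large, will determine the explicit $c(\eps,\eta,\Rstar,\alpha)$.
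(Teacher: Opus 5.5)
\textbf{There is a genuine gap in the last step, where the tilted-measure estimate is turned into a lower bound on $p$.} The earlier steps are correct and match the paper's computation (its $t=-T/(2\tau^2)$ is your $-s$): the reduction to a single-codeword bound, the tilt with $2s\tau^2=T$, the check that $\E_Q Z=n\Dstar$ with exponent exactly $n\Rstar\ln 2$, and the Hanson--Wright concentration of $Z$ under $Q$.

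Because $\E_Q Z=n\Dstar$ exactly, the bound $p\ge 2^{-n\Rstar}Q\bigl(n\Dstar\le Z\le n(\Dstar+\eta)\bigr)$ needs a lower bound on $Q(Z\ge \E_Q Z)$. That is an anti-concentration statement, and neither Chebyshev nor Hanson--Wright provides it. The proposed shift does not repair this as written, for two reasons.
\begin{itemize}
\item The accounting ``$O(\eta)$ in the exponent, absorbed into $\eps$'' is off. The exponent moves with the target at rate $s=T/(2\tau^2)$ per unit of $Z$, and $s$ is unbounded as $\tau\to 0$. Moreover, any loss of order $n$ in $\log p$, however small its constant, only gives $Mp\ge 2^{n(\eps-\xi)}$ for some $\xi>0$, not the stated $2^{n\eps(1-cn^{4\alpha-1})}$. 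Done correctly, the shift is actually free: with $I$ the Cram\'er rate function of the lower tail of $Z$ and $s'$ the shifted tilt, convexity gives $I\bigl(n(\Dstar+\eta/2)\bigr)+s'n\eta/2\le I(n\Dstar)=n\Rstar\ln 2$.
\item More seriously, you would then need concentration under the shifted measure $Q'$. Your Hanson--Wright estimates used $2s\tau^2=T$ to bound the diagonal entries by $\tau^2/T$. Under $Q'$ the entries are $\lambda_i\tau^2/(1+2s'\lambda_i\tau^2)$. If $\E_P Z$ is close to $n(\Dstar+\eta/2)$, then $s'\approx 0$. A single eigenvalue of order $n$ then contributes to $Z/n$ a non-concentrating term of size about $\tau^2$, which can be comparable to $\eta$. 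So $Q'\bigl(|Z-\E_{Q'}Z|\le n\eta/2\bigr)\ge 1-o(1)$ fails in general.
\end{itemize}

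The paper avoids the two-sided window entirely. It applies data processing to the one-sided event $A=\{Z\le n(\Dstar+\eta)\}$:
\[
n\Rstar\ln 2=D(Q\|P)\ \ge\ d\bigl(Q(A)\,\|\,P(A)\bigr)\ \ge\ Q(A)\ln\frac{1}{P(A)}-\ln 2 .
\]
This gives $\ln p\ge -(n\Rstar\ln 2+\ln 2)/Q(A)$. Only $1-Q(A)\le \Var_Q(Z)/(n\eta)^2=O(n^{4\alpha-1})$ is then needed, and this is exactly where $n^{4\alpha-1}$ enters the lemma. Your Hanson--Wright bound under $Q$ plugs directly into this argument. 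Alternatively, you could keep your window and prove $Q(Z\ge \E_Q Z)\ge c_0>0$. Paley--Zygmund together with the hypercontractive bound $\E Y^4\le 81(\E Y^2)^2$ for the degree-two Gaussian chaos $Y=Z-\E_Q Z$ gives this.
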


The proof of Lemma~\ref{lem:codebook_success} relies on the following bound on a probability that a single codeword $c_i \sim \N(0, I_n)$ achieves small distortion $d_{\Sigma_X}$.

\begin{lemma}\label{lem:codeword_success}
    In the setting of Lemma~\ref{lem:codebook_success}, denote 
    $$
    p_n \coloneq \Pr_{c_i \sim \N(0, I_n)}\sqb{ \frac 1 n d_{\Sigma_X}(W, \tau(W, \Sigma_X)\cdot c_i) \leq \DDrc^{\lambda}(U^\top W, \Rstar) + \eta}\,.
    $$
    Then, 
    $$
    \ln p_n \geq - n \Rstar (1+ O(n^{4\alpha - 1})) = - n \Rstar (1 + o(1))\,.
    $$
\end{lemma}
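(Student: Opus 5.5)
The plan is a change of measure (exponential tilting) that moves the codeword distribution onto the target ellipsoid. We lower bound $p_n$ by the data-processing form of the KL inequality, with the tilt chosen so that its KL cost is exactly $n\Rstar$.

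\textbf{Reduction.} Since $c_i\sim\N(0,I_n)$ is rotation invariant, we replace $W$ by $\Wt=U^\top W$ and $\Sigma_X$ by $\Lambda$. The event becomes
$$A=\Big\{c:\ \tfrac1n\textstyle\sum_i \lambda_i(\Wt_i-\tau c_i)^2\le \Dstar+\eta\Big\},\qquad \Dstar=\DDrc^\lambda(\Wt,\Rstar),$$
with $T=\Trc^\lambda(\Rstar)$ and $\tau$ as in \eqref{eq:def_tau_codebook_lem}. The degenerate case $\tau=0$ is harmless: then $\lambda_j\Wt_j=0$ for all $j$, so $\Dstar=0$ and $A$ holds for every $c$, giving $p_n=1$.

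\textbf{The tilted measure.} Define $Q=\bigotimes_i \N(\mu_i,\sigma_i^2)$ with
$$\sigma_i^2=\frac{1}{1+\lambda_iT},\qquad \mu_i=\frac{\lambda_i T\,\Wt_i}{\tau(1+\lambda_iT)}.$$
This is the Gaussian tilt $dQ/dP\propto \exp\{-s\sum_i\lambda_i(\Wt_i-\tau c_i)^2\}$ with $2s\tau^2=T$. Two computations, both relying on the specific $\tau$, are the heart of the argument.

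(i) The residual is $\Wt_i-\tau\mu_i=\Wt_i/(1+\lambda_iT)$. Hence
$$\E_Q\Big[\tfrac1n\textstyle\sum_i\lambda_i(\Wt_i-\tau c_i)^2\Big]=\tfrac1n\textstyle\sum_i\frac{\lambda_i\Wt_i^2}{(1+\lambda_iT)^2}+\frac{\tau^2}{n}\sum_i\frac{\lambda_i}{1+\lambda_iT}.$$
Substituting $\tau^2$, the second sum equals $\frac1n\sum_i T\lambda_i^2\Wt_i^2/(1+\lambda_iT)^2$. Adding the two sums gives exactly $\Dstar$.

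(ii) The divergence is
$$D(Q\|P)=\tfrac12\textstyle\sum_i\big(\ln(1+\lambda_iT)+\mu_i^2+\sigma_i^2-1\big).$$
We have $\sum_i\mu_i^2=T\sum_i\lambda_i/(1+\lambda_iT)$ and $\sum_i(1-\sigma_i^2)=\sum_i\lambda_iT/(1+\lambda_iT)$, which are equal and cancel. Therefore $D(Q\|P)=n\Rstar$ in nats. This cancellation is precisely why $\tau$ is the exponent-optimal scale.

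\textbf{Lower bound via data processing.} Applying data processing for KL to the indicator of $A$ gives
$$Q(A)\ln\frac{Q(A)}{P(A)}\le D(Q\|P)+\ln 2,\qquad\text{so}\qquad \ln p_n\ge-\frac{n\Rstar+\ln2}{Q(A)}.$$
It remains to show $Q(A)\ge 1-O(n^{4\alpha-1})$, using Chebyshev and the slack $\eta$.

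Under $Q$, write $c_i=\mu_i+\sigma_i g_i$ with $g_i$ i.i.d.\ standard normal. Each summand is then
$$\lambda_i\Big(\frac{\Wt_i}{1+\lambda_iT}-\tau\sigma_i g_i\Big)^2.$$
The coefficients are controlled as follows:
\begin{itemize}
\item $\lambda_i/(1+\lambda_iT)\le 1/T$, and $T\ge\ln 2\cdot\Rstar$ is bounded below.
\item $\lambda_i\tau^2\sigma_i^2\le\tau^2/T$.
\item $|\Wt_i|\le n^\alpha$ and $\tau\le n^\alpha$ by Claim~\ref{claim:tau_bound}.
\end{itemize}
Each summand is therefore a scaled noncentral $\chi^2_1$ with coefficients $O(n^{2\alpha})$, so its variance is $O(n^{4\alpha})$. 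The normalized sum has variance $O(n^{4\alpha-1})$, and Chebyshev gives $Q(A^c)\le O(n^{4\alpha-1})/\eta^2$. Combining with the previous display yields $\ln p_n\ge -n\Rstar(1+O(n^{4\alpha-1}))$. The additive $\ln 2$ contributes only $O(1/n)\le O(n^{4\alpha-1})$.

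\textbf{Main obstacle.} The real work is the algebra in (i) and (ii), which the stated $\tau$ is designed to make exact. Beyond that, I expect the delicate point to be uniformity: the constant in $O(n^{4\alpha-1})$ must depend only on $(\Rstar,\eta,\alpha)$ and not on $\lambda$. Large $\lambda_i$ (up to $n$) and small $\tau$ must not blow up the variance bound. I would handle this through the bounds $\lambda_i/(1+\lambda_iT)\le 1/T$ and $\lambda_i\tau^2\sigma_i^2\le\tau^2/T$. These avoid any need to control the $\mu_i$ individually, since $\mu_i$ enters only through the residual $\Wt_i/(1+\lambda_iT)$.
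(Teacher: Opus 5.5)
Your proof is correct and follows essentially the same route as the paper. Both use the exponential tilt with $2s\tau^2=T$ (the paper's $t=-T/(2\tau^2)$), compute the tilted mean exactly as $n\,\DDrc^{\lambda}(U^\top W,\Rstar)$ and the KL cost as $\frac12\sum_i\ln(1+\lambda_iT)$, apply the data-processing bound $\ln p_n\ge -(D(Q\|P)+\ln 2)/Q(A)$ (which follows cleanly from $Q(A)\ln\frac{1}{P(A)}\le D(Q\|P)+h(Q(A))$), and close with Chebyshev using an $O(n^{4\alpha})$ per-coordinate variance.

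The differences are cosmetic. You tilt the Gaussian codeword law directly into an explicit product of $\N(\mu_i,\sigma_i^2)$, whereas the paper tilts the noncentral-$\chi^2$ summands; this is equivalent because the Radon--Nikodym derivative depends only on the summands. You also handle the degenerate case $\tau=0$ explicitly, which the paper's $t$ leaves undefined.
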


\begin{proof}(of Lemma~\ref{lem:codebook_success}) Denote $\Dstar = \DDrc^{\lambda}(U^\top W, \Rstar)$, and $\tau = \tau(W, \Sigma_X)$. Given the $\EVD\paren{\Sigma_X} = U\Lambda U^\top$, by definition of $d_{\Sigma_X}$, for all $i \in \sqb{M}$,
        \begin{align*}
            d_{\Sigma_X}(W, \tau \cdot c_i) &= (W - \tau c_i)^\top U\Lambda U^\top (W - \tau c_i)\\
            &= (U^\top W - \tau U^\top c_i)^\top \Lambda (U^\top W - \tau U^\top c_i)\\
            &\eqcolon (\Wt - \tau \ct_i)^\top \Lambda (\Wt - \tau \ct_i)\,,
        \end{align*}
        where we denote $\Wt = U^\top W$ and $\ct_i = U^\top c_i$. Note that $\ct_i \sim \N(0, I_n)$ are independent.
        
        We show that for the optimal choice of $\tau$ in Eq.~\eqref{eq:def_tau_codebook_lem} and $M = 2^{n\paren{\Rstar+\eps}}$, with probability at least $1 - \exp\paren{-2^{n\eps(1-c \|\Wt\|^4_\infty / n)}}$ (where $\|\Wt\|^4_\infty / n \leq n^{4\alpha - 1}$), 
        $$
        \frac 1 n \min_{i\in\sqb{M}} d_{\Sigma_X}(W, \tau c_i) = \frac 1 n \min_{i\in\sqb{M}} \sum_{j} \lambda_j \paren{\Wt_j - \tau \ct_{ij} }^2 \leq \Dstar+\eta\,.
        $$
        By Lemma~\ref{lem:codeword_success}, for any $i\in \sqb{M}$,
        \begin{equation*}
            \ln p_n = \ln \Pr\sqb{ \frac 1 n \sum_{j} \lambda_j \paren{\Wt_j - \tau \ct_{ij} }^2 \leq \Dstar + \eta} \geq -n\Rstar(1+O(n^{4\alpha - 1}))\,.
        \end{equation*}
        Then, since $c_1,\dots,c_M$ are independent, 
        \begin{align*}
            \Pr\sqb{\frac 1 n \min_{i \in M} d_{\Sigma_X}(W, \tau c_i) \leq \Dstar+\eta } \geq 1 - \paren{1 - p_n}^M &\geq 1 - \exp\paren{ - 2^{-n\Rstar(1+O\paren{\|\Wt\|^4/n})} 2^{n\Rstar + n \eps}}\\
            &= 1 - \exp\paren{-2^{n\eps(1-O\paren{\|\Wt\|^4/n})}}\\
            &\geq 1 - \exp\paren{-2^{n\eps(1-O\paren{n^{4\alpha-1}})}}\,,
        \end{align*}
        which concludes the proof of Lemma~\ref{lem:codebook_success}. 
    \end{proof}

    \begin{proof}(of Lemma~\ref{lem:codeword_success}) Denote $\Rnat = \Rstar \ln 2$, $\Dstar = \DDrc^{\lambda}(U^\top W, \Rstar)$, and $\tau = \tau(W, \Sigma_X)$.
        First, for all $j \in \sqb{n}$ denote $\mu_j = \mathcal{L}\paren{ \lambda_j \paren{\Wt_j - \tau \ct_{ij} }^2 }$, where $\ct_{ij}\sim\N(0,1)$ and rewrite 
        $$
        p_n = \Pr_{X_j \sim \mu_j}\sqb{\frac 1 n \sum_j X_j \leq \Dstar + \eta}\,.
        $$
        Let $t = - T/(2\tau^2)$, where $T, \tau$ are defined in Eq.~\eqref{eq:def_tau_codebook_lem}. We have  
        \begin{align*}
            \ln \E_{X\sim\mu_j}\sqb{e^{tX}} = \frac{t \Wt_j^2 \lambda_j}{1 - 2 t \lambda_j \tau^2} - \frac12 \ln \paren{1 - 2 t \lambda_j \tau^2} \eqcolon \Phi_j(t)\,.
        \end{align*}
        A direct computation yields 
        \begin{align}
            \Phi'_j(t) &= \frac{\Wt_j^2 \lambda_j}{(1-2t \lambda_j \tau^2)^2} + \frac{\lambda_j \tau^2}{1 - 2t\lambda_j \tau^2}\label{eq:der_t}\,.
        \end{align}
        For each $j \in \sqb{m}$, define a new probability measure $\tilde \mu_j$ as
        $
        \frac{d\tilde\mu_j}{d\mu_j}(x) = e^{t x - \Phi_j(t)}\,.
        $
        We have $\int_{\R} d\tilde \mu_j = e^{-\Phi_j (t)} \int_{\R} e^{tx}d\mu_j = 1$. Since the exponential tilting above is applied to a noncentral $\chi^2$ distribution $\mu$, the transformation simply shifts/rescales the parameters of the $\chi^2$ distribution. We can explicitly compute
        $$
        \tilde \mu_j = \mathcal{L}\paren{ \N\paren{m_j, s_j^2}^2 }\,,\qquad\text{where } m_j = \frac{\Wt_j \lambda_j^{1/2}}{1 - 2t \lambda_j \tau^2} \text{ and } s_j^2 = \frac{\lambda_j\tau^2}{1-2t\lambda_j\tau^2}\,.
        $$
        Plugging in the choice of $\tau$ in Eq.~\eqref{eq:def_tau_codebook_lem} and $T = -2 t \tau^2$ into Eq.~\eqref{eq:der_t}, we obtain 
        \begin{align*}
        \E_{X \sim \tilde\mu}\sum_j X_j &= \sum_j \int x e^{tx - \Phi_j(t)} d \mu_j(x)  = \sum_j \frac d {dt} \ln \int e^{tx} d \mu_j(x) = \sum_j \Phi_j'(t) \\
        &\qquad\qquad= \sum_j \frac{\Wt_j^2 \lambda_j}{1 - 2t \lambda_j \tau^2} = \sum_j \frac{\Wt_j^2 \lambda_j}{1 + \lambda_j T} = n \Dstar\,.
        \end{align*}
        Then, 
        \begin{align*}
            D(\tilde\mu \| \mu) = \sum_j D(\tilde\mu_j \| \mu_j) = \sum_j t \Phi_j'(t) - \Phi_j = \frac12 \sum_j \ln (1- 2t\lambda_j \tau^2) = \frac12 \sum_j \ln (1 + \lambda_j T) = n\Rnat\,.
        \end{align*}
        Recall that we defined $p_n = \Pr_{X_j \sim \mu_j}\sqb{\sum_j X_j \leq n \paren{\Dstar + \eta}}$ and let $\tilde p_n = \Pr_{X_j \sim \tilde\mu_j}\sqb{\sum_j X_j \leq n \paren{\Dstar + \eta}}$. By DPI,
        \begin{align*}
            n\Rnat = D(\tilde\mu \| \mu) \geq d\paren{ \tilde p_n \| p_n} \geq - h\paren{\tilde p_n} + \tilde p_n \ln \frac 1 {p_n} \geq - \ln 2 + \tilde p_n \ln \frac 1 {p_n}\,,
        \end{align*}
        which yields 
        \begin{equation}\label{eq:p_n_lower_bd}
        \ln p_n \geq \frac{-n \Rnat - \ln 2}{\tilde p_n}\,.
        \end{equation}
        By Chebyshev's inequality, 
        $$1 - \tilde p_n = 1- \Pr_{X_j \sim \tilde\mu_j}\sqb{ \sum_j X_j \leq \E \sum_j X_j + n\eta} \leq \frac{\Var\sqb{\sum_j X_j}}{n^2\eta^2}\,.$$
        In what follows we show that 
        \begin{equation}\label{eq:var_bound}
            \Var\sqb{\sum_j X_j} = O\paren{n \|\Wt\|_\infty^4} = o(n^2)\,,\tag{Var}
        \end{equation}
        yielding $1 - \tilde p_n = O\paren{\|\Wt\|_\infty^4/n} = o(1)$. Plugging this into Eq.~\eqref{eq:p_n_lower_bd}, we get 
        $$
        \ln p_n \geq \frac{-n\Rnat - \ln 2}{1 - O\paren{\|\Wt\|_\infty^4/n}} =  -n\Rnat(1+O\paren{\|\Wt\|_\infty^4/n})
        $$
        for sufficiently large $n$. 

        It remains to obtain the bound in Eq.~\eqref{eq:var_bound}. From the definition of $\Rnat$ and $\frac{x}{1+x}\leq \ln(1+x) \leq x$ for $x > -1$, 
        $$
        |t|\tau^2 \sum_j \frac{\lambda_j}{1-2t\lambda_j\tau^2} \leq 
        n \Rnat = \frac12 \sum_j \ln(1- 2 t \lambda_j\tau^2) \leq |t|\tau^2 n\,. 
        $$
        Using $\frac {\lambda_j} {1-2t\lambda_j\tau^2} \leq \frac 1 {2|t| \tau^2}$ and $(1-2t\lambda_j \tau^2)^2\geq 4|t|\lambda_j \tau^2$, we obtain
        \begin{align*}
            \Var\sqb{X_j} = \Lambda_j''(t) &= \frac{4\Wt_j^2 \lambda_j^2 \tau^2}{(1-2t\lambda_j \tau^2)^3} + \frac{2 \lambda_j^2 \tau^4}{(1-2 t \lambda_j \tau^2)^2}\\
            &\leq \frac{4\Wt_j^2 \lambda_j^2 \tau^2}{(1-2t\lambda_j \tau^2) \cdot 4 |t| \lambda_j \tau^2} + \frac {\tau^2} {|t|} \cdot \frac{ \lambda_j}{1-2 t \lambda_j \tau^2} \\
            &= \frac 1 {|t|}\cdot  \frac{\Wt_j^2 \lambda_j}{1-2t\lambda_j \tau^2}  + \frac {\tau^2} {|t|} \cdot \frac{ \lambda_j}{1-2 t \lambda_j \tau^2}\,,
        \end{align*}
        and therefore, 
        \begin{equation}\label{eq:bd_var}
        \Var\sqb{\sum_j X_j} \leq n\cdot \paren{\frac{\Dstar} {|t|} + \frac{\Rnat}{|t|^2}  }\,.
        \end{equation}
        We show a lower bound on parameter $|t|$. From the way we choose $\tau$ in Eq.~\eqref{eq:tau},
        \begin{align}\label{eq:t_lower_bd}
            |t| = \frac{\sum_j \frac {\lambda_j} {1-2t\lambda_j\tau^2} }{\sum_j \frac {2\Wt_j^2 \lambda_j^2} {(1-2t\lambda_j\tau^2)^2}}\geq \frac{2 \Rnat}{\|\Wt\|_\infty^2} \frac{\sum_j \frac {\lambda_j \|\Wt\|_\infty^2} {1-2t\lambda_j\tau^2} }{\sum_j \frac {2\Wt_j^2 \lambda_j} {1-2t\lambda_j\tau^2}} \geq \frac{\Rnat}{\|\Wt\|_\infty^2}\,,
        \end{align}
        where we used $\frac {\lambda_j} {1-2t\lambda_j\tau^2} \leq \frac 1 {2|t| \tau^2} \leq \frac 1 {2 \Rnat}$. Plugging \eqref{eq:t_lower_bd} into the bound for variance in \eqref{eq:bd_var} and using $\Dstar\leq \frac {\|\Wt\|^2_{\infty}} {T} \leq \frac {\|\Wt\|^2_{\infty}} {2\Rnat}$, we obtain 
        $$
        \Var\sqb{\sum_j X_j} \leq n\cdot \|\Wt\|_\infty^4\paren{\frac{1 }{2(\Rnat)^2} + \frac{1}{\Rnat}  } = O\paren{n \|\Wt\|_\infty^4 }\,,
        $$
        which is the desired inequality in Eq.~\eqref{eq:var_bound}. Here we used that $\Rnat = \Theta(1)$.
        \qed

\end{proof}

\section{Random Coding: Gaussian Isotropic $W$}\label{sec:rc_gaussian_W}

In this section we describe quantization of a Gaussian isotropic vector $W \sim \N(0, I_n) \in\R^n$ using random coding.

\paragraph{Rate-Distortion Function.}  Let $\lambda = (\lambda_1,\dots,\lambda_n)^\top \in \R^n$ be such that $\lambda_i\geq 0$ for all $i\in\sqb{n}$ and $\sum_i \lambda_i = n$. We define \emph{random-coding rate-distortion function} in dimension $n$ to be the following parametric curve for $T > 0$:
\begin{equation}\label{eq:rd_rc_G}
    \Drc^{\lambda}(T) = \frac 1n \sum_{i=1}^n \frac{\lambda_i}{1+\lambda_i T}\qquad\text{and}\qquad \Rrc^{\lambda}(T) = \frac 1{2 n}\sum_{i=1}^n \log(1+\lambda_i T)\,.\tag{RDRC}
\end{equation}
Throughout, $\log$ denotes $\log_2$ and $\ln$ denotes the natural logarithm.
Denote $\Trc^\lambda(R)$ to be a unique value $T$, s.t. $\Rrc^\lambda(T) = R$. Let
\begin{equation}\label{eq:rd_d_G}
    \DDrc^{\lambda}(R) \triangleq \Drc^{\lambda}(\Trc^\lambda(R))\,.\tag{$\DDrc$}
\end{equation}
We consider a task of quantizing a vector $W \sim\N(0, I_n) \in \R^n$ under a distortion function 
\begin{equation}\label{eq:dist_fn_app_G}
d_{\Sigma_X}(W, \Wh) = \E_{X} \sqb{\b\| \paren{W - \hat W}^T X \b\|^2_2} = (W- \Wh)^\top \Sigma_X (W - \Wh)\,,\tag{$d_{\Sigma_X}$}
\end{equation}
where $X \in \R^n$ is a random vector with $\E XX^\top = \Sigma_X \in \mathbb{S}_+^n$. Here we assume $\EVD\paren{\Sigma_X} = U \Lambda U^\top$ for $U \in \mathcal{O}_n$ and $\Lambda = \diag\paren{\lambda_1,\dots,\lambda_n}\succeq 0$ with $\trace(\Lambda) = n$. Our goal is to obtain an upper bound on $\E_{W} \sqb{d_{\Sigma_X}(W, \Wh)}$.

In Theorem~\ref{thm:fixed_sigma_G} we show that there exists an encoder-decoder that, for any constant target rate $\Rstar > 0$ and admissible $\Sigma_X$ as above, achieves rate $\Rstar$ and expected distortion $\DDrc^{\lambda}(\Rstar)$ asymptotically, with high probability over the randomness $S$ shared between encoder and decoder (from which the codebook is generated).

\begin{theorem}[Achievability of RC RD for Gaussian Input: Nonasymptotic Guarantee]\label{thm:fixed_sigma_G}

    Fix any constants $\Rstar, \eps_\circ, \eta_\circ, B > 0$. There exists an encoder $f: \R^n \times \mathbb{S}_+^n \times [0,1] \to \sqb{2^{nR}}$, a decoder $g: \sqb{2^{nR}} \times [0,1] \to \R^n$ with $R \leq \Rstar + \eps_\circ$, and a (shared) random variable $S\in \sqb{0,1}$ with the following property.
    
    For $W\sim\N(0,I_n)$, any $\Sigma_X$ (with $\EVD(\Sigma_X) = U \Lambda U^\top$ and $\Lambda = \diag(\lambda_1,\dots,\lambda_n)$) as above we set $\Wh = g(f(W, \Sigma_X, S), S)$. Then for any sufficiently large $n \geq n_0 = n_0 (\eps_\circ,\eta_\circ, \Rstar, B)$ we have
    $$
    \Pr_S\sqb{\E_W \sqb{\frac 1 n d_{\Sigma_X}(W,\Wh)} \leq \DDrc^{\lambda}(\Rstar) + \eta_\circ} \geq 1 - \exp\paren{- n^{B}}\,,
    $$
    where $d_{\Sigma_X}(W,\Wh)$ is the distortion function in Eq.~\eqref{eq:dist_fn_app_G}.
\end{theorem}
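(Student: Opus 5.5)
We derive Theorem~\ref{thm:fixed_sigma_G} from Theorem~\ref{thm:fixed_sigma}, following the Markov/Cauchy--Schwarz argument sketched in Section~\ref{subsec:ideas_proof_sketch}. Fix $\Sigma_X=U\Lambda U^\top$ with $\trace\Lambda=n$. The scheme is the one from Theorem~\ref{thm:fixed_sigma} with some fixed $\alpha<1/4$ (so $\beta=0$), plus a fallback: if $\|U^\top W\|_\infty>n^{\alpha}$, the encoder sends a flag and the decoder outputs $\Wh=0$. This costs one extra bit, i.e.\ rate $1/n$. Since $\Wt=U^\top W\sim\N(0,I_n)$, a Gaussian tail bound with a union bound over coordinates gives
\begin{equation*}
P_W\bigl[\|\Wt\|_\infty>n^\alpha\bigr]\le 2n\exp\bigl(-n^{2\alpha}/2\bigr).
\end{equation*}

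Next, for a fixed codebook (i.e.\ fixed $S$) define the failure set
\begin{equation*}
\m{E}(S)=\Bigl\{w:\ \|U^\top w\|_\infty\le n^\alpha,\ \tfrac1n d_{\Sigma_X}(w,\Wh)>\DDrc^{\lambda}(U^\top w,\Rstar)+\eta\Bigr\}.
\end{equation*}
Theorem~\ref{thm:fixed_sigma} says that for every admissible $w$, $\Pr_S[w\in\m{E}(S)]\le \exp(-2^{n\eps'})$. By Fubini, $\E_S P_W[\m{E}(S)]\le \exp(-2^{n\eps'})$. Markov's inequality then gives
\begin{equation*}
\Pr_S\bigl[P_W[\m{E}(S)]>e^{-n}\bigr]\le e^{n}\exp\bigl(-2^{n\eps'}\bigr)\le \exp(-n^B)
\end{equation*}
for large $n$.

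On the good event $P_W[\m{E}(S)]\le e^{-n}$, we split $\E_W[\tfrac1n d_{\Sigma_X}(W,\Wh)]$ into three terms.
\begin{enumerate}
\item On the success set, the contribution is at most $\E_W[\DDrc^{\lambda}(\Wt,\Rstar)]+\eta$. Since $\E\Wt_i^2=1$, this equals $\DDrc^{\lambda}(\Rstar)+\eta$.
\item On the fallback set, $d_{\Sigma_X}(W,0)=W^\top\Sigma_XW$. Cauchy--Schwarz with $\E(W^\top\Sigma_XW)^2\le 3(\trace\Sigma_X)^2=3n^2$ bounds the contribution by $O(n)\cdot n^{1/2}e^{-n^{2\alpha}/4}=o(1)$.
\item On $\m{E}(S)$, the decoder output is $q(\tau)c_i$, so we need a bound on $d_{\Sigma_X}(W,q(\tau)c_i)$.
\end{enumerate}

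The third term is the main obstacle. The simplest fix is to modify the encoder: whenever the chosen codeword gives distortion larger than $d_{\Sigma_X}(W,0)$, it sends the flag and outputs $0$ instead. This modification only helps on the success set. With it, $d_{\Sigma_X}(W,\Wh)\le W^\top\Sigma_X W$ always holds. Cauchy--Schwarz then bounds the failure contribution by $\tfrac1n\sqrt{e^{-n}}\sqrt{3}\,n=o(1)$.

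Choosing $\eta=\eta_\circ/2$ and $n$ large makes the total at most $\DDrc^{\lambda}(\Rstar)+\eta_\circ$. The rate is $\Rstar+\eps_\circ/2+1/n\le\Rstar+\eps_\circ$. The remaining care is uniformity: the constant $c$ and $n_0$ from Theorem~\ref{thm:fixed_sigma} must not depend on $W$ or $\lambda$, which holds because they depend only on $(\eps,\eta,\Rstar,\alpha)$. The exponent $1-cn^{4\alpha-1}\to1$, so the double-exponential bound dominates $e^{n+n^B}$.
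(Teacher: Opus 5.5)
Your proof is correct, but it handles the failure set differently from the paper.

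\textbf{The paper's route.} The paper keeps the encoder as a pure nearest-codeword search. It then conditions on the codebook event $E_{\CCC}=\{\forall i:\ \|c_i\|_2\le n^{B}\}$ (with $B>10$).
\begin{itemize}
\item On $E_{\CCC}$, every admissible $W$ (with $\|U^\top W\|_\infty\le n^\alpha$) has distortion deterministically at most $n^{3B}$. So the failure set contributes at most $e^{-n}n^{3B}$.
\item The non-admissible set, where $\tau=0$ and hence $\Wh=0$, is handled by a direct Gaussian tail integral of $n\|U^\top W\|_\infty^2$.
\end{itemize}
The price is that the success probability is capped by $\Pr[E_{\CCC}^c]\le\exp(-c'n^{2B})$. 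The paper's argument therefore gives only $1-\exp(-c''n^{2B})$, whose exponent is polynomial in $n$, not a doubly-exponential bound.

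\textbf{Your route.} Your ``compare with $0$'' fallback gives the pointwise bound $d_{\Sigma_X}(W,\Wh)\le W^\top\Sigma_X W$ with no control of codeword norms. Cauchy--Schwarz with $\E(W^\top\Sigma_XW)^2\le 3n^2$ then disposes of both the failure set and the fallback set. This is exactly the argument sketched in Section~\ref{subsec:ideas_proof_sketch}. For fixed $\Sigma_X$ it yields the stronger bound $1-e^{n}\exp(-2^{n\eps'})$. It costs at most one bit, and in fact none, since sending $q(\tau)=0$ already makes the decoder output $0$.

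\textbf{Why the paper does it its way.} The choice is driven by Theorem~\ref{thm:union}. The perturbation step there (Claim~\ref{claim:dist_perturb_union}) needs $\max_i\|c_i\|_2\le n^B$ anyway. Stating the fixed-$\Sigma_X$ result conditionally on $E_{\CCC}$ lets it feed directly into the covering and union-bound argument. In that argument the overall probability is in any case only $1-\exp(-\mathrm{poly}(n))$, so the doubly-exponential bound your route gives would not improve Theorem~\ref{thm:union}.
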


In Theorem~\ref{thm:union} we show that, in fact, the quantization scheme in Theorem~\ref{thm:fixed_sigma_G} achieves the rate-distortion guarantee simultaneously for all $\Sigma_X \in \mathbb{S}_+^n$ with $\trace(\Sigma_X) = n$, with high probability over the shared randomness (i.e., the codebook $\CCC$)

\begin{mythm}{\ref{thm:union}}\textnormal{\textbf{(Achievability of RC RD for Gaussian Input: Worst-Case $\Sigma_X$)}}
    Fix any constants $\Rstar, \eps_\circ, \eta_\circ, B > 0$. There exist an encoder $f: \R^n \times \mathbb{S}_{+}^n \times[0,1] \to \sqb{2^{nR}}$, a decoder $g: \sqb{2^{nR}} \times [0,1] \to \R^n$ with $R \leq \Rstar + \eps_\circ$, and a (shared) random variable $S\in [0,1]$ with the following property. 
    
    For $W\sim\N(0,I_n)$ and $\Sigma_X\in \mathbb{S}^n_+$ we set $\Wh(\Sigma_X) = g\paren{f(W, \Sigma_X, S), S}$. Then, for sufficiently large $n \geq n_0 = n_0(\eps_\circ,\eta_\circ, \Rstar, B)$, we have 
    $$
    \Pr_S\sqb{\sup_{\substack{\Sigma_X \in \mathbb{S}_+^n\\ \trace(\Sigma_X) = n}} \paren{\frac 1 n \E_W \sqb{d_{\Sigma_X}(W,\Wh(\Sigma_X))} -  \DDrc^{\spec(\Sigma_X)}(\Rstar)} \leq \eta_\circ} \geq 1 - \exp\paren{- n^B}\,,
    $$
    where $d_{\Sigma_X}(W,\Wh)$ is the distortion function in Eq.~\eqref{eq:dist_fn_app_G}.
\end{mythm}

\subsection{Proof of Theorem~\ref{thm:fixed_sigma_G}}

The proof uses the result of Theorem~\ref{thm:fixed_sigma}, which obtains a distortion guarantee for any fixed vector $W\in \R^n$. We adjust the quantization scheme to not rely on the Theorem~\ref{thm:fixed_sigma} assumption $\|U^\top W\|_\infty \leq n^\alpha$ (Cond.~\ref{cond:W_Sigma}): in the unlikely case of large $\|U^\top W\|_\infty$, the decoder returns $0$. 

\paragraph{Quantization Scheme.} 
    Fix any constant $\alpha \in (0,1/4)$.
    From the shared randomness $S$, $f,g$ generate a Gaussian codebook $\CCC = \sset{c_1,\dots,c_{M=2^{n(\Rstar+\eps)}}}$ for $\eps = \eps(\eps_\circ)$ to be chosen later. 
    Denote $\Wt = U^\top W$. We define:
    \begin{itemize}
        \item \textbf{Encoder $f$:} Let $T \coloneq \Trc^\lambda(\Rstar)$ and define the scaling parameter $\tau = \tau(\Wt, \lambda)$ as
        \begin{equation}\label{eq:tau_G}
        \tau = \begin{cases}
            \paren{T \sum_{j} \frac{\Wt_j^2 \lambda_j^2}{(1+\lambda_j T)^2}}^{1/2}\paren{\sum_j \frac{\lambda_j}{1+\lambda_j T}}^{-1/2} &\text{if }\|\Wt\|_\infty \leq n^{\alpha}\\
            0 &\text{otherwise.}
            \end{cases}
            \tag{$\tau$ def.}
        \end{equation}
        We set $f$ to be a tuple
        $$
        f(W, \Sigma_X, S) = \paren{\argmin_{i \in \sqb{M}} d_{\Sigma_X}(W, \tau c_i), q(\tau)}\,,
        $$
        where $q(\tau) = \delta \|\Wt\|_{\infty} \lfloor \tau/(\delta \|\Wt\|_{\infty}) \rceil$ is a rounding quantization scheme of precision $\delta$ and recall the distortion function 
        $$
        d_{\Sigma_X}(W, C) = (W - C)^\top \Sigma_X (W - C)\,.
        $$
        \item \textbf{Decoder $g$:}
        $$
        g(i, q(\tau), S) = q(\tau) \cdot c_i\,.
        $$
    \end{itemize}

\paragraph{Rate-Distortion Bound.}  
Denote 
    In the quantization scheme above, 
    $$
    R = \underbrace{\Rstar + \eps}_{\text{Gauss. codebook}} + \underbrace{\frac1n \log(1/\delta)}_{\tau\text{ quant.}}\,.
    $$
    The rest of the proof is to obtain a high probability bound on the resulting distortion that, given $\Wh = g( f(W,\Sigma_X, S),S) = q(\tau)\cdot c_i$, can be expressed as:
    \begin{align*}
        d_{\Sigma_X}(W,\Wh) = d_{\Sigma_X}(W,q(\tau) c_i) =(W - q(\tau) c_i)^\top \Sigma_X (W - q(\tau) c_i)\,.\label{eq:dist_expand2}
    \end{align*}
    In what follows, denote $\Dstar = \DDrc^{\lambda}(\Rstar)$. Let $\eta = \eta(\eta_\circ)$ be a constant to be chosen later. By Theorem~\ref{thm:fixed_sigma}, with an appropriate choice of parameter $\delta = \delta (\eps, \eta, \Rstar, \alpha, n)$ in the quantization scheme above, we have for any fixed $W$ such that $\|U^\top W\|_\infty \leq n^\alpha$ and sufficiently large $n \geq n_0 = n_0(\eps,\eta,\alpha,\Rstar)$,
    $$
    \Pr_{\CCC}\sqb{\frac 1 n d_{\Sigma_X}(W, \Wh) \leq \DDrc^{\lambda}(U^\top W, \Rstar) + \eta} \geq 1 - \exp\paren{-2^{n\eps(1-cn^{4\alpha - 1})}}\,,
    $$
    where $c = c(\eps,\eta,\alpha,\Rstar)$. 
    At the same time, the rate can be bounded as $R \leq \Rstar + \eps_\circ$ (for an appropriately chosen $\eps = \eps(\eps_\circ)$). Denote the event of the codebook failure as $F(W, \CCC) : \R^n \times \paren{\R^n}^{M}\to \sset{0,1}$:
    \begin{equation}\label{eq:codebook_failure}
        F(W,\CCC) = \sset{ \|U^\top W\|_\infty \leq n^{\alpha}\text{ and } \frac 1 n d_{\Sigma_X}(W, \Wh) > \DDrc^{\lambda}(U^\top W, \Rstar) + \eta }\,.\tag{$F$ def.}
    \end{equation}
    Rewriting the result of Theorem~\ref{thm:fixed_sigma}, we obtain that 
    $$\forall W\in \R^n:\ \|U^\top W\|_\infty \leq n^{\alpha}: \qquad \Pr_{\CCC}\sqb{F(W,\CCC)} \leq \exp\paren{-2^{n\eps(1-cn^{4\alpha - 1})}}\,.$$
    Denote $E_\CCC$ to be the event $E_\CCC = \sset{\forall i\in\sqb{M}:\ \| c_i\|_2 \leq n^{B}}$ for some fixed constant $B>10$ and note that $$
    \Pr_\CCC\sqb{ E_\CCC^c} \leq M \cdot e^{-C n^{2B}} = e^{n\ln2 (\Rstar + \eps) - C n^{2B}} \leq  \exp\paren{- c' n^{2B}}\,,
    $$
    for constant $c' = c'(\eps, \Rstar)$ and sufficiently large (constant) $n$. We denote $\CCC | E_\CCC$ to be the distribution of $\CCC$ conditioned on $E_\CCC$ and note that, since $\Pr_\CCC\sqb{E_\CCC} \geq 1/2$, Theorem~\ref{thm:fixed_sigma} in fact yields
    $$\forall W\in \R^n:\ \|U^\top W\|_\infty \leq n^{\alpha}: \qquad \Pr_{\CCC|E_\CCC}\sqb{F(W,\CCC)} \leq 2\exp\paren{-2^{n\eps(1-cn^{4\alpha - 1})}}\,.$$
    We show in steps 1-2 that for any constant $A > 0$ and sufficiently large $n$,
    $$
    \Pr_{\CCC|E_\CCC}\sqb{\E_W \sqb{d_{\Sigma_X}(W,\Wh)} \leq n (\DDrc^{\lambda}(\Rstar) + \eta) + n^{-A}} \geq 1 - \exp\paren{-2^{n\eps(1-c'n^{4\alpha - 1})}}\,,
    $$
    and then conclude the proof in step 3 via a union bound.

    \paragraph{Step 1: Bound on $\Pr_{\CCC | E_\CCC}\sqb{\Pr_{W\sim\N(0,I_n)}\sqb{F(W,\CCC) | \CCC} > \pstar }$.} Denote the event $E_W = \sset{\|U^\top W\|_{\infty} \leq n^{\alpha}}$. All expressions below assume $W \sim \N(0, I_n)$.
    \begin{align*}
        \E_{\CCC | E_\CCC}\sqb{\Pr_{W}\sqb{F(W,\CCC) | \CCC} } = \Pr_{W, \CCC | E_\CCC} \sqb{F(W,\CCC)}= \Pr_{W}\sqb{\Pr_{\CCC | E_\CCC} \sqb{F(W,\CCC)|W}}\leq \Pr_{W} \sqb{E_W} \cdot 2\exp\paren{-2^{n\eps(1-cn^{4\alpha - 1})}}\,.
    \end{align*}
    Then, by Markov's inequality, 
    \begin{equation}\label{eq:markovs}
        \Pr_{\CCC|E_\CCC}\sqb{\Pr_{W}\sqb{F(W,\CCC) | \CCC} > \pstar } \leq \frac{\Pr_{W} \sqb{E_W} \cdot 2\exp\paren{-2^{n\eps(1-cn^{4\alpha -1 })}}}{\pstar}\,.
    \end{equation}

    \paragraph{Step 2: Bound on $\E_W\sqb{d_{\Sigma_X}(W,\Wh)}$ in case of codebook success in Eq.~\eqref{eq:markovs}.} Fix any codebook $\CCC$ for which $\Pr_{W}\sqb{F(W,\CCC) | \CCC} \leq \pstar$ and $E_\CCC$ hold. Expand 
    \begin{align*}
        \E_{W}\sqb{d_{\Sigma_X}(W,\Wh)} = &\underbrace{\E_{W}\sqb{d_{\Sigma_X}(W,\Wh) \I_{E_W}}}_{\one} + \underbrace{\E_{W}\sqb{d_{\Sigma_X}(W,\Wh) \I_{E_W^c}}}_{\two}\,.
    \end{align*}
    First, 
    \begin{align*}
    E_{W}\sqb{d_{\Sigma_X}(W,\Wh) \I_{E_W}} \leq  &\pstar \E_{W}\sqb{d_{\Sigma_X}(W,\Wh)\I_{E_W} | F(W,\CCC)} + \\
    &\qquad\qquad\Pr_{W}\sqb{F(W,\CCC)^c}\E_{W}\sqb{d_{\Sigma_X}(W,\Wh) \I_{E_W} | F(W,\CCC)^c}\,.
    \end{align*}
    For any $W$, we have, since $q(\tau) \leq n^{\alpha}$ by Claim~\ref{claim:tau_bound},
    $$
    d_{\Sigma_X}(W,\Wh) = (W-q(\tau)c_i)^\top \Sigma_X (W - q(\tau)c_i) \leq (\|U^\top W\|_\infty+ n^{\alpha}\|U^\top c_i\|_\infty)^2 n\,,
    $$
    and therefore, since the above is $\leq n^{3B}$ for $W \in E_W$ and $\CCC \in E_\CCC$,
    \begin{equation}\label{eq:I_1}
    \pstar E_{W}\sqb{d_{\Sigma_X}(W,\Wh)\I_{E_W} | F(W,\CCC)} \leq \pstar \cdot n^{3B}\,.\tag{$\one.1$}
    \end{equation}
    Moreover, by the definition of $F(W,\CCC)$ in Eq.~\eqref{eq:codebook_failure}, 
    \begin{align}
        \Pr_{W}\sqb{F(W,\CCC)^c} \E_{W}\sqb{d_{\Sigma_X}(W,\Wh)\I_{E_W} | F(W,\CCC)^c} &\leq n\E_W\sqb{\DDrc^{\lambda}(U^\top W, \Rstar)\cdot \I_{F(W,\CCC)^c} \I_{E_W}} + n\eta\notag\\
        &\leq \E_W\sqb{\sum_i \frac{(U^\top W)^2_i \lambda_i}{1+T\lambda_i} \I_{E_W}} + n\eta \notag\\
        &\leq \E_W\sqb{\sum_i \frac{(U^\top W)^2_i \lambda_i}{1+T\lambda_i}} + n\eta \notag\\
        &= n (\Dstar + \eta)\,.\tag{$\one.2$}\label{eq:I.2}
    \end{align}
    In the above we used that $\Rrc^{\lambda}(T)$ is independent of vector $U^\top W$, and therefore both $\DDrc^{ \lambda}(U^\top W, \Rstar)$ and $\DDrc^{\lambda}(\Rstar)$ share the same parameter $T>0$.
    Combining \eqref{eq:I_1} and \eqref{eq:I.2}, we obtain 
    \begin{equation}\label{eq:I}
    \one \leq  n (\Dstar + \eta) + \pstar \cdot n^{3B}\,.\tag{$\one$}
    \end{equation}
    Now in the case of $E_W^c$, we have $\tau = 0$, and therefore, $d_{\Sigma_X}(W,\Wh) = W^\top \Sigma_X W \leq n\|U^\top W\|_\infty^2 $. Note that $U^\top W \sim\N(0,I_n)$, so by the standard Gaussian tail bound, 
    \begin{align}\label{eq:II}
        \two \leq n \E_W\sqb{\|U^\top W\|_\infty^2 \cdot \I_{ \|U^\top W\|_\infty > n^{\alpha}}} =n \int_{t = n^{\alpha}}^{\infty} t^2 \cdot 2n e^{-t^2/2} \leq 4 n^{2+\alpha} e^{-n^{2\alpha}/2}\,.\tag{$\two$}
    \end{align}
    Combining \eqref{eq:I} and \eqref{eq:II} together, we obtain 
    $$
    \E_W \sqb{d_{\Sigma_X}(W,\Wh)} \leq n (\Dstar + \eta) + \pstar \cdot n^{3B} + 4 n^{2+\alpha} e^{-n^{2\alpha}/2} \leq n (\Dstar + \eta) + n^{-A}\,,
    $$
    for sufficiently large constant $n$ and the choice $\pstar = e^{-n}$. 

    \paragraph{Step 3: Bound on the probability of successful codebook $\CCC$.} Recall that 
    $
    \Pr_\CCC\sqb{ E_\CCC^c} \leq  \exp\paren{- c' n^{2B}}
    $.
    From Eq.~\eqref{eq:markovs} and the choice of $\pstar$ above, we have for sufficiently large constant $n$, 
    $$
    \Pr_{\CCC|E_\CCC}\sqb{\Pr_{W}\sqb{F(W,\CCC) | \CCC} > \pstar } \leq \exp\paren{-2^{n\eps(1-cn^{4\alpha - 1})}+n} \leq  \exp\paren{- c' n^{2B}}\,.
    $$
    Then, by a union bound, the probability to select a good codebook is at least $1 - \exp\paren{- c'' n^{2B}}$, so we conclude (since for sufficiently large $n$, $n^{-A} < n\eta$), 
    $$
    \Pr_\CCC\sqb{\E_W \sqb{\frac 1 n d_{\Sigma_X}(W,\Wh)} \leq \Dstar + 2\eta} \geq 1 - \exp\paren{- c'' n^{2B}}\,,
    $$
    from which the statement of the theorem follows.

\subsection{Proof of Theorem~\ref{thm:union}}\label{subsec:proof_of_union}

We show Theorem~\ref{thm:union} by applying the results of Theorem~\ref{thm:fixed_sigma_G} for fixed $\Sigma_X$ with a covering argument. The quantization scheme is the same as in the proof of Theorem~\ref{thm:fixed_sigma_G}; the technical challenge of Theorem~\ref{thm:union} is to show that is succeeds simultaneously with high probability for all $\Sigma_X\in \mathbb{S}_+^n$ with $\trace(\Sigma_X) = n$. 

We first recall some notation. In the quantization scheme of Theorem~\ref{thm:fixed_sigma_G} we have a Gaussian codebook $\CCC = \sset{c_1,\dots,c_M}$ for $M = 2^{n (\Rstar + \eps)}$ (where $\eps = \eps(\eps_\circ)$ is a constant) that is generated from the shared randomness $S\in[0,1]$.
We denote $\EVD(\Sigma_X) = U \Lambda U^\top$ for $\Lambda = \diag(\lambda_1,\dots,\lambda_n)$, 
and recall the distortion
$$
d_{\Sigma_X}(W, C) = (W - C)^\top \Sigma_X (W- C)\,.
$$
The quantizer then sets the optimal scaling factor $\tau = \tau(U^\top W, \Lambda)$ in Eq.~\eqref{eq:tau_G} and returns a tuple $$f(W, \Sigma_X, S) = \paren{\argmin_{i \in \sqb{M}} d_{\Sigma_X}(W, \tau c_i), q(\tau)}\,,$$
where $q(\tau)$ is a simple rounding quantizer. The decoder is defined to be 
$$
g((i, q(\tau)), S) = q(\tau)\cdot c_i\,.
$$
Finally, for some constant $B>10$, we denote $$E_\CCC = \sset{\forall i\in\sqb{M}:\ \| c_i\|_2 \leq n^{B}}\,,\qquad \Pr_\CCC\sqb{ E_\CCC^c} \leq  \exp\paren{- c' n^{2B}} $$
and, denoting the conditional distribution $\CCC | E_\CCC$, the proof of Theorem~\ref{thm:fixed_sigma_G} shows for any $\alpha \in (0, 1/4)$ and $\eta = \eta(\eta_\circ)$ to be chosen later,
    \begin{equation}\label{eq:thm_fixed_cond_result}
    \Pr_{\CCC|E_\CCC}\sqb{\E_W \sqb{\frac1n d_{\Sigma_X}(W,\Wh)} \leq \DDrc^{\lambda}(\Rstar) + \eta} \geq 1 - \exp\paren{-2^{n\eps(1-c'n^{4\alpha - 1})}}\,,
    \end{equation}
for sufficiently large $n \geq n_0 = n_0(\eps,\eta,\Rstar,\alpha)$. Simultaneously, it is shown that the scheme above achieves final rate $R \leq R + \eps_\circ$ for the appropriately chosen $\eps = \eps(\eps_\circ)$.

\paragraph{Step 1: Covering for $U \in \mathcal{O}_n$ and $\Lambda$.}

\begin{fact}[Covering of $\mathcal{O}_n$]\label{fact:covering} For a universal constant $c$,
    $$N(\gamma, \mathcal{O}_n, \|\cdot\|_{op}) \leq (c/\gamma)^{n^2}.$$
    Consequently, there exists a $\gamma$-covering $U_1, \dots, U_N \in \mathcal{O}_n$ such that for all $U \in \mathcal{O}_n$, 
    $$
    \min_{i \in N} \| U_i - U\|_{op} \leq \gamma\,,
    $$
    and $N \leq (c/\gamma)^{n^2} $.
\end{fact}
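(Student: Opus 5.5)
The statement to prove is Fact~\ref{fact:covering}: a covering bound for $\mathcal{O}_n$ in operator norm. I will use a standard volumetric argument. Since $\mathcal{O}_n$ is contained in the unit ball of $(\R^{n\times n},\|\cdot\|_{op})$, it suffices to bound the covering number of a subset of a norm ball in a finite-dimensional normed space.

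Step 1 is to reduce to the packing number. Take a maximal $\gamma$-separated subset $\{U_1,\dots,U_N\}\subset\mathcal{O}_n$ in operator norm. By maximality, every $U\in\mathcal{O}_n$ is within $\gamma$ of some $U_i$. This gives a $\gamma$-covering whose centers lie in $\mathcal{O}_n$ itself, as the Fact requires.

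Step 2 is the volume comparison in $\R^{n^2}$ with the operator-norm unit ball $K=\{A:\|A\|_{op}\le 1\}$. The balls $U_i+\frac{\gamma}{2}K$ are pairwise disjoint. They are all contained in $(1+\frac{\gamma}{2})K$, because $\|U_i\|_{op}=1$. Comparing Lebesgue volumes and using homogeneity of volume in dimension $n^2$ gives
\[
N\paren{\tfrac{\gamma}{2}}^{n^2}\mathrm{vol}(K)\le \paren{1+\tfrac{\gamma}{2}}^{n^2}\mathrm{vol}(K),
\]
so $N\le (1+2/\gamma)^{n^2}\le (3/\gamma)^{n^2}$ for $\gamma\le 1$. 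For $\gamma>1$, and in fact for any $\gamma\ge 2$, a single point suffices, since $\|U-V\|_{op}\le 2$ for all $U,V\in\mathcal{O}_n$. Taking $c=3$ covers all cases, because $(3/\gamma)^{n^2}\ge 1$ whenever $\gamma\le 3$, and for $\gamma>2$ the bound $N=1$ holds directly.

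The only point needing care is that the volume argument is carried out in the full ambient space $\R^{n^2}$ rather than on the manifold $\mathcal{O}_n$. This is legitimate because we only need an upper bound, and disjointness together with containment hold in the ambient space. Nothing else is an obstacle. The resulting exponent $n^2$ is harmless for the union bound later in the paper, since the failure probabilities are double-exponentially small.
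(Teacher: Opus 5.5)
Your volumetric argument is correct and is the standard justification for this fact. The paper itself gives no proof and simply asserts it. The argument goes as follows. Separated points give disjoint balls $U_i+\tfrac{\gamma}{2}K$ inside $(1+\tfrac{\gamma}{2})K$ in the ambient space $\R^{n\times n}$, so $N\le(1+2/\gamma)^{n^2}\le(3/\gamma)^{n^2}$ for $\gamma\le 1$. That is the only regime the paper uses, since it takes $\gamma=2^{-5n\Rstar}$. Your bound applies to every $\gamma$-separated subset, so it also shows that a maximal one exists and is finite.

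The closing case analysis, however, is wrong as written and should be fixed. For $\gamma\in(1,2)$ you have $1+2/\gamma>3/\gamma$, so nothing you proved gives $N\le(3/\gamma)^{n^2}$ there. In fact $c=3$ fails in that range with centers taken in $\mathcal{O}_n$. For $n=1$, $\mathcal{O}_1=\{\pm 1\}$ needs two centers whenever $\gamma<2$, but $3/\gamma<2$ once $\gamma>3/2$.

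Separately, the claim ``$c=3$ covers all cases'' cannot hold for large $\gamma$ with any constant. For $\gamma>c$ we have $(c/\gamma)^{n^2}<1\le N$.

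There are two clean fixes. You can restrict the statement to $\gamma\in(0,1]$ with $c=3$. Alternatively, note that $1+2/\gamma\le 4/\gamma$ for $\gamma\le 2$, and that a single point suffices for $\gamma\in[2,4]$ because the diameter is $2$. This gives $N\le(4/\gamma)^{n^2}$ for all $\gamma\le 4$, which is the best range the form $(c/\gamma)^{n^2}$ can have.
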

\begin{fact}[Covering of $\Lambda$]\label{fact:covering_Lambda} For a universal constant $c$,
    $$
    N\paren{\delta/\sqrt{d}; B_1^d, \|\cdot\|_2} \leq \paren{c+ c/\delta}^d\,. 
    $$
    Consequently, there exists a
    $(\tilde \gamma \sqrt{n})$-covering $\tilde s_1, \dots, \tilde s_{N'} \in \sset{s \in \R_{+}^{n}:\, \sum_j s_j \leq n}$ such that for all $\lambda \in \R^n_+$ with $\sum_j\lambda_j \leq n$,
    $$\min_{i\in N'} \|\tilde s_i - \lambda\|_2 \leq \tilde \gamma\sqrt{n}\,,$$
    and $N' \leq (c/\tilde\gamma)^n$ for a universal constant $c$.
\end{fact}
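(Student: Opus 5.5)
The plan is a standard volumetric (maximal packing) argument, with the one twist that volumes are compared using the $\ell_1$ ball rather than the Euclidean ball. Let $K \subseteq B_1^d$ be any subset and let $x_1,\dots,x_N \in K$ be a maximal family with pairwise Euclidean distances $> \delta/\sqrt d$. By maximality, every point of $K$ lies within $\delta/\sqrt d$ of some $x_i$. So $\{x_i\}$ is a $\delta/\sqrt d$-net of $K$ whose centers lie in $K$ itself, and it suffices to bound $N$.

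The balls $x_i + \tfrac{\delta}{2\sqrt d}B_2^d$ are pairwise disjoint. Since $B_2^d \subseteq \sqrt d\, B_1^d$, each of them is contained in $B_1^d + \tfrac{\delta}{2}B_1^d = (1+\delta/2)B_1^d$. Comparing volumes gives
$$
N \le \frac{(1+\delta/2)^d\,\mathrm{vol}(B_1^d)}{\paren{\frac{\delta}{2\sqrt d}}^d \mathrm{vol}(B_2^d)}.
$$
We then plug in the explicit formulas $\mathrm{vol}(B_1^d) = 2^d/d! \le (2e/d)^d$ and $\mathrm{vol}(B_2^d) = \pi^{d/2}/\Gamma(d/2+1) \ge (c_0/\sqrt d)^d$ (by Stirling). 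The factors of $d$ cancel exactly: $(1/d)^d$ in the numerator against $(1/\sqrt d)^d\cdot(1/\sqrt d)^d$ in the denominator. This leaves $N \le \paren{\tfrac{4e(1+\delta/2)}{c_0\,\delta}}^d = (c + c/\delta)^d$. Taking $K = B_1^d$ proves the first display.

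This cancellation of the $d$-dependence is the only real point. The naive route, covering $B_1^d \subseteq B_2^d$ at Euclidean radius $\delta/\sqrt d$, would give a bound of order $(\sqrt d/\delta)^d$, which is too large for the union bound later. The reason the argument works is that the Euclidean radius $\delta/\sqrt d$ is "the same scale" as $\ell_1$-radius $\delta$.

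For the consequence, take $d=n$ and $K = \{s \in \R^n_+ : \sum_j s_j \le 1\} \subseteq B_1^n$, with $\delta = \tilde\gamma$. The packing centers lie in $K$ by construction, so nonnegativity of the net points comes for free and no projection step is needed. Rescaling everything by $n$ turns $K$ into $\{s \in \R^n_+ : \sum_j s_j \le n\}$ and the radius $\tilde\gamma/\sqrt n$ into $\tilde\gamma\sqrt n$, with the same number of centers $N' \le (c+c/\tilde\gamma)^n$. For $\tilde\gamma \le 1$ this is at most $(2c/\tilde\gamma)^n$, which is the stated form after renaming the constant. The main care point is only the Stirling bookkeeping for the two ball volumes; everything else is routine.
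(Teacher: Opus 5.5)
Your proof is correct and complete. The paper gives no proof of this Fact; it is quoted as a standard covering estimate. Your maximal-packing argument with the volume comparison done against $B_1^d$ is the standard proof. The pieces all check out:
\begin{itemize}
\item the containment $\frac{\delta}{2\sqrt d}B_2^d\subseteq\frac{\delta}{2}B_1^d$ and the Minkowski sum $B_1^d+\frac{\delta}{2}B_1^d=(1+\delta/2)B_1^d$;
\item the volume bounds $2^d/d!\le(2e/d)^d$ and $\mathrm{vol}(B_2^d)\ge(c_0/\sqrt d)^d$;
\item the exact cancellation of the powers of $d$, which yields $N\le\bigl(4e(1+\delta/2)/(c_0\delta)\bigr)^d$.
\end{itemize}

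Running the packing directly inside $K=\{s\in\R^n_+:\sum_j s_j\le 1\}$ is a real convenience over quoting the $B_1^d$ bound as a black box. It gives nonnegative centers in the required set without the factor-$2$ loss of converting an external net into an internal one. If you took $K=\{s\in\R^n_+:\sum_j s_j=1\}$ instead, the same argument would give the paper's subsequent corollary (centers on the trace-$n$ simplex) directly, with no need for the projection step the paper uses there.

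Your restriction to $\tilde\gamma\le 1$ costs nothing. For $\tilde\gamma>c$ the stated bound $(c/\tilde\gamma)^n$ is below $1$ and cannot hold, and in the paper $\tilde\gamma=\gamma/2$ is exponentially small.

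One side remark in your proposal is inaccurate, though it does not affect the proof. The naive bound $(1+2\sqrt n/\tilde\gamma)^n$ would \emph{not} be too large for the later union bound. That union bound runs against a doubly exponential per-point failure probability $\exp\bigl(-2^{n\eps(1-cn^{4\alpha-1})}\bigr)$ and already absorbs $N=2^{O(n^3)}$ points from the rotation net, so an extra $n^{n/2}$ factor is harmless. The $\ell_1$-volume cancellation is needed only to obtain the Fact in its stated form $(c/\tilde\gamma)^n$.
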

Setting $s_i \coloneq \argmin_{s \in \R_+^n: \sum_j s_j = n} \|s - \tilde s_i\|_2$ and applying Fact~\ref{fact:covering_Lambda} with $\tilde \gamma = \gamma/2$, we obtain:
\begin{corollary}[Of Fact~\ref{fact:covering_Lambda}]
    There exists a
    $(\gamma \sqrt{n})$-covering $s_1, \dots, s_{N'} \in \sset{s \in \R_{+}^{n}:\, \sum_j s_j = n}$ such that for all $\lambda \in \R^n_+$ with $\sum_j\lambda_j = n$,
    $$\min_{i\in N'} \|s_i - \lambda\|_2 = \gamma\sqrt{n}\,,$$
    and $N' \leq (c/\gamma)^n$ for a universal constant $c$.
\end{corollary}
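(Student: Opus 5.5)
The plan is to take the covering from Fact~\ref{fact:covering_Lambda}, whose centers lie in the larger set $\{s\in\R^n_+:\sum_j s_j\le n\}$, and push its centers onto the constraint set
$$K \triangleq \Big\{s\in\R_+^n:\ \textstyle\sum_j s_j = n\Big\}$$
by Euclidean projection. The point is that projection onto a closed convex set is non-expansive, so it cannot increase the covering radius.

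First I will apply Fact~\ref{fact:covering_Lambda} with $\tilde\gamma=\gamma/2$. This gives points $\tilde s_1,\dots,\tilde s_{N'}\in\{s\in\R_+^n:\sum_j s_j\le n\}$ with $N'\le (2c/\gamma)^n$. For every $\lambda\in\R^n_+$ with $\sum_j\lambda_j\le n$, some $i$ satisfies $\|\tilde s_i-\lambda\|_2\le \tfrac{\gamma}{2}\sqrt n$. The constant $2c$ is again a universal constant, so after renaming it the count has the claimed form $(c/\gamma)^n$.

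A side remark, in case one worries where the Fact's centers live. The set $\{s\ge0,\sum_j s_j\le n\}$ equals $n\cdot(B_1^n\cap\R^n_+)$. A $\delta/\sqrt n$ covering of $B_1^n$ rescales to a $\delta\sqrt n$ covering of this set. Any external covering can be turned into an internal one: replace each center whose ball meets the set by a point of the set inside that ball. This at most doubles the radius, which is absorbed by the factor $1/2$ in $\tilde\gamma$ (at the cost of another factor $2$ in the universal constant).

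Next, $K$ is nonempty, closed and convex, so the metric projection $P_K$ is well defined and single-valued, and $s_i\triangleq P_K(\tilde s_i)$ is exactly the $\argmin$ in the definition of $s_i$. The projection $P_K$ is $1$-Lipschitz in $\|\cdot\|_2$. Take any $\lambda\in K$, which is the case in the corollary, and pick $i$ from the first step. Since $P_K(\lambda)=\lambda$, we get
$$\|s_i-\lambda\|_2=\|P_K(\tilde s_i)-P_K(\lambda)\|_2\le\|\tilde s_i-\lambda\|_2\le \tfrac{\gamma}{2}\sqrt n\le \gamma\sqrt n .$$
This establishes the covering property, read as ``$\le\gamma\sqrt n$''. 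The ``$=$'' in the statement should be read as ``$\le$''; there is in fact slack by a factor $2$.

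The number of points $s_i$ is at most $N'$, since distinct $\tilde s_i$ may collapse onto the same projection. No step is genuinely hard. The only points needing care are these two:
\begin{itemize}
\item the Fact's centers must be internal, or else one accepts the factor-$2$ loss described in the side remark;
\item the non-expansiveness of $P_K$ must be invoked (via the obtuse-angle characterization of projections onto convex sets), rather than a naive triangle inequality. A naive triangle inequality would give the weaker bound $\|s_i-\lambda\|_2\le 2\|\tilde s_i-\lambda\|_2$, since $\|s_i-\tilde s_i\|_2\le\|\lambda-\tilde s_i\|_2$ by minimality; this is still enough with $\tilde\gamma=\gamma/2$.
\end{itemize}
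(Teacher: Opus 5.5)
Your proof is correct and is the paper's own argument: apply Fact~\ref{fact:covering_Lambda} with $\tilde\gamma=\gamma/2$ and project each center onto $\{s\in\R^n_+:\sum_j s_j=n\}$. The paper's factor $1/2$ is calibrated for exactly the triangle-inequality bound $\|s_i-\lambda\|_2\le 2\|\tilde s_i-\lambda\|_2$ you mention, non-expansiveness just gives extra slack, and you are right that the ``$=$'' should read ``$\le$''. Your concern about whether the Fact's centers are internal is unnecessary, since both the projection bound and the minimality bound $\|s_i-\tilde s_i\|_2\le\|\lambda-\tilde s_i\|_2$ hold for arbitrary $\tilde s_i\in\R^n$.
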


Consider the $\gamma$-covering $U_1, \dots, U_N \in \mathcal{O}_n$ and $(\gamma_2 \sqrt{n})$-covering $s_1, \dots, s_{N'} \in \sset{s \in \R_{+}^{n}:\, \sum_j s_j = n}$ from the facts above for $\gamma$ to be chosen later. Denote $\Lambda_i \coloneq \diag(s_i)$. By a union bound applied to Eq.~\eqref{eq:thm_fixed_cond_result},
$$
\Pr_{\CCC|E_\CCC}\sqb{\sup_{\Sigma_X = U_i \Lambda_j U_i^\top} \paren{\E_W \sqb{\frac 1 n d_{\Sigma_X}(W,\Wh)} - \DDrc^{s_j}(\Rstar)} \leq \eta } \geq 1 - NN'\exp\paren{-2^{n\eps(1-c'n^{4\alpha - 1})}}\,.
$$

\paragraph{Step 2: Perturbation bound on $\E_W \sqb{d_{\Sigma_X}(W,\Wh)}$.} Fix any constant $A > 0$ (say $A = 10$). We show the following distortion perturbation results.
\begin{claim}[Distortion Perturbation]\label{claim:dist_perturb_union}
    Let $\Lambda = \diag(\lambda), \Lambdat = \diag(\lambdat)$ 
    be s.t. $\Lambda, \Lambdat \succeq 0$ and $\trace(\Lambda)=\trace(\Lambdat) = n$ and $U, \Ut \in \mathcal{O}_n$ satisfy
    $$
    \|\Lambda - \Lambdat\|_2 \leq \gamma\sqrt{n} \qquad\text{and}\qquad \|U - \Ut\|_{op} \leq \gamma\,.
    $$
    Denote $$ \Sigma_X = U \Lambda U^\top \quad\text{and}\quad \Sigmat_X = \Ut \Lambdat \Ut^\top\,.$$ If $\CCC = \sset{c_1,\dots,c_M} \in \paren{\R^n}^M$ satisfies $\max_{i\in\sqb{M}} \|c_i\|_{2} \leq n^{B}$, then, in the setup above,  
    $$
     \E_{W\sim \N(0, I_n)} \sqb{d_{\Sigmat_X}(W,\Wh)} \leq \E_{W\sim \N(0, I_n)} \sqb{d_{\Sigma_X}(W,\Wh)} + \frac12 n^{-A} + \gamma n^{O(B)} \cdot 2^{4n\Rstar}\,,
    $$
     for any sufficiently large (constant) $n$, where $O(B)$ hides constants in $A, B, \Rstar$. 
\end{claim}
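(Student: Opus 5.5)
Because the decoder ignores $\Sigma_X$, the claim compares two quantities that are not naturally on the same footing: $\Wh$ under $\Sigmat_X$ is $q(\taut)c_{\tilde i}$, with $\tilde i$ and $\taut$ computed from $(\Ut,\Lambdat)$, whereas under $\Sigma_X$ it is $q(\tau)c_i$ computed from $(U,\Lambda)$. The plan is to bound the $\Sigmat_X$-optimal distortion by the $\Sigmat_X$-distortion of a \emph{suboptimal} competitor. That competitor is the codeword and scale chosen for $\Sigma_X$, i.e.\ the point $q(\tau)c_i$. The encoder for $\Sigmat_X$ minimizes $d_{\Sigmat_X}(W,\taut c_j)$ over $j$ at the fixed scale $\taut$, so I first control the effect of the scale. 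Pointwise in $W$ I would write
$$d_{\Sigmat_X}(W,q(\taut)c_{\tilde i})\le d_{\Sigmat_X}(W,\taut c_{i})+\text{(scale errors)},$$
which uses minimality of $\tilde i$. I then compare $d_{\Sigmat_X}(W,\taut c_i)$ with $d_{\Sigma_X}(W,q(\tau)c_i)$ through three perturbations: changing the metric from $\Sigmat_X$ to $\Sigma_X$, changing the scale from $\taut$ to $\tau$, and changing $\tau$ to $q(\tau)$.

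The metric change uses $\|\Sigma_X-\Sigmat_X\|_{op}\le \|U\Lambda U^\top-\Ut\Lambda\Ut^\top\|_{op}+\|\Lambda-\Lambdat\|_{op}\le 2\gamma\|\Lambda\|_{op}+\gamma\sqrt n\le 3\gamma n$. Hence $|d_{\Sigmat_X}(W,v)-d_{\Sigma_X}(W,v)|\le 3\gamma n\|W-v\|_2^2$. On $E_W$, with $\|c_i\|\le n^B$ and $\tau\le n^\alpha$, this is at most $\gamma n^{O(B)}$. For each scale change I would use the triangle inequality in the $\|\cdot\|_{\Sigmat}$ seminorm, exactly as in Claim~\ref{claim:tau_quant_bound}: $\sqrt{d(W,a c)}\le\sqrt{d(W,bc)}+|a-b|\sqrt{c^\top\Sigmat c}$ with $\sqrt{c^\top\Sigmat c}\le \sqrt n\, n^B$. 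The quantization gaps $|q(\tau)-\tau|$ and $|q(\taut)-\taut|$ are at most $\delta n^\alpha$, and with the tiny $\delta$ of \eqref{eq:delta_param_def} their contribution lies under $\frac14 n^{-A}$ after squaring. Off $E_W$ both scales are $0$ and $\Wh=0$, so the only change is in the metric, which costs $\gamma n\,\E\|W\|^2=\gamma n^2$. The boundary behaviour of $E_W$ differs between $U$ and $\Ut$: since $\|U^\top W-\Ut^\top W\|_\infty\le\gamma\|W\|$, the two events disagree only in a thin shell. I would handle this by Cauchy--Schwarz together with the $d\le n^{3B}$ bound, or more simply by absorbing it into the Gaussian-tail term \eqref{eq:II} after replacing $n^\alpha$ by $n^\alpha\pm\gamma n$.

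The main obstacle is the Lipschitz estimate $|\tau-\taut|$. Here $\tau$ depends on $T=\Trc^\lambda(\Rstar)$, on $\lambda$, and on $\Wt=U^\top W$. I would bound $|T-\Ttilde|$ by the implicit function theorem, using the derivative $\partial_T\Rrc=\frac1{2n}\sum\frac{\lambda_i}{1+\lambda_iT}$ and its sensitivity to $\lambda$. The first is bounded below by roughly $2^{-2\Rstar}/T$, and by concavity $T\le 2^{2\Rstar}$-type bounds hold. I expect this to yield $|T-\Ttilde|\le \gamma\, n^{O(1)}2^{O(\Rstar)}$. Differentiating the formula \eqref{eq:tau_G} in $(T,\lambda,\Wt)$ then gives $|\tau-\taut|\le\gamma n^{O(B)}2^{O(\Rstar)}$. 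The denominator $\sum_j\lambda_j/(1+\lambda_jT)$ is at least $n2^{-2\Rstar}/T$, and I expect this lower bound to be the source of the $2^{4n\Rstar}$-type factor in the statement. If the derivative bounds come out with an exponential constant, it can be absorbed since $\gamma$ will be chosen doubly-exponentially small. Summing the three error sources and taking expectation over $W$ then yields the stated bound.
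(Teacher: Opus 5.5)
Your architecture matches the paper's. You use the $\Sigma_X$-optimal index as a competitor in the $\Sigmat_X$ search at scale $\taut$, pay for the metric change via $\|\Sigma_X-\Sigmat_X\|_{op}=O(\gamma n)$, and pay for scale changes with the seminorm triangle inequality of Claim~\ref{claim:tau_quant_bound}. You then discard tails with Gaussian estimates and reduce everything to a Lipschitz bound on $\tau$ (the paper's Lemma~\ref{lem:tau_perturb_union}). On the threshold in \eqref{eq:tau_G} you are more careful than the paper. The paper works on $\{\|W\|_2\le n^B\}$ with the unthresholded formula and never treats the set where exactly one of $\|U^\top W\|_\infty,\|\Ut^\top W\|_\infty$ exceeds $n^\alpha$; your tail-based disposal of that set is sound.

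\textbf{The gap.} It sits at the step you yourself call the main obstacle: the a priori bounds you propose there are false.
\begin{itemize}
\item Jensen (concavity of $\log$, mean eigenvalue $1$) gives $1+T\ge 2^{2\Rstar}$, the reverse of what you wrote. The only available upper bound is $1+T\le 2^{2n\Rstar}$, and it is essentially attained: $\trace\Lambda=n$ allows $\lambda=(n,0,\dots,0)$, for which $T=(2^{2n\Rstar}-1)/n$.
\item For this spectrum $\frac1n\sum_j\frac{\lambda_jT}{1+\lambda_jT}\approx\frac1n$, so $\partial_T\Rrc\gtrsim 2^{-2\Rstar}/T$ fails.
\item Likewise $\sum_j\frac{\lambda_j}{1+\lambda_jT}\approx\frac1T$, not $\ge n2^{-2\Rstar}/T$.
\end{itemize}
The exponential dependence is genuine. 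Replace $\lambda$ by $\lambdat=(n-\epsilon,\epsilon,0,\dots,0)$ with $\epsilon=\gamma\sqrt{n/2}$ and $\epsilon T\ll 1$. Then $T$ moves by about $\epsilon T^2\approx\gamma 2^{4n\Rstar}/n^{3/2}$, and $\tau$ moves by about $\frac12|\Wt_1|\epsilon T$, i.e.\ by order $|\Wt_1|\gamma 2^{2n\Rstar}/\sqrt n$. So no estimate $|\tau-\taut|\le\gamma n^{O(B)}2^{O(\Rstar)}$ can hold. Your account of where $2^{4n\Rstar}$ comes from is also internally inconsistent: with your two bounds, $\sum_j\frac{\lambda_j}{1+\lambda_jT}$ would be at least $n2^{-4\Rstar}$, producing no factor exponential in $n$.

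\textbf{The fix.} What is true, and what the paper uses, follows from $\max_j\lambda_j\ge 1$. It gives $2n\Rstar\ge\log(1+T)$, i.e.\ $1+T\le 2^{2n\Rstar}$. It also gives $\sum_j\frac{\lambda_j}{1+\lambda_jT}\ge\frac{1}{1+T}$, hence $\partial_T\Rrc\ge\frac{1}{2n\ln 2\,(1+T)}$. Combine this with $|\Rrc^{\lambda}(T)-\Rrc^{\lambdat}(T)|\le\frac{T\gamma}{2\ln 2}$ and the mean value theorem to get $|T-\Ttilde|\le\gamma nT(1+\max(T,\Ttilde))\le\gamma n2^{4n\Rstar}$. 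That, not the denominator of $\tau$, is where $2^{4n\Rstar}$ originates. The bound on $|\tau-\taut|$ then follows by propagating it through the numerator and denominator of \eqref{eq:tau_G}, using the same two facts and $T\ge 2\Rstar\ln 2$. With these bounds substituted, your plan goes through.

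Your fallback remark is right that the precise exponent is immaterial downstream. A cover of size $(c/\gamma)^{n^2+n}$ is affordable against failure probability $\exp(-2^{n\eps'})$ for $\gamma=2^{-Kn\Rstar}$ with any constant $K$. Indeed, carried out literally, the paper's own Step~3 of Lemma~\ref{lem:tau_perturb_union} yields a factor of order $n^5 2^{6n\Rstar}$ rather than $n^4 2^{4n\Rstar}$, which is harmless for exactly this reason. But an absorbable constant still has to be proved finite and linear in $\gamma$. That requires the dimension-dependent bounds above rather than the dimension-free ones you stated.
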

Plugging in $\gamma = 2^{-5n\Rstar}$, the bound in Claim~\ref{claim:dist_perturb_union} simplifies to $n^{-A}$, since for sufficiently large (constant) $n$, $\gamma n^{O(B)}\cdot 2^{4n\Rstar} \leq \frac12 n^{-A}$. Moreover, as we will shortly see, $|\DDrc^{\lambda}(\Rstar) - \DDrc^{s_j}(\Rstar)| \leq \poly(n) \cdot \|\lambda - s_j\|_2 \leq \poly(n) \cdot \gamma$, so, for a fixed codebook $\CCC$ and sufficiently large $n$,
    \begin{align*}
        \sup_{\Sigma_X = U \Lambda U^\top, U\in\mathcal{O}_n} \paren{\E_W \sqb{d_{\Sigma_X}(W,\Wh)} - n \DDrc^{\lambda}(\Rstar)} \leq \sup_{\Sigma_X = U_i \Lambda_j U_i^\top} \paren{\E_W \sqb{d_{\Sigma_X}(W,\Wh)} - n \DDrc^{s_j}(\Rstar)} + n^{-A}\,.
    \end{align*}
    Then, 
    $$
    \Pr_{\CCC|E_\CCC}\sqb{\sup_{\Sigma_X = U \Lambda U^\top, U\in\mathcal{O}_n} \paren{\E_W \sqb{\frac 1 n d_{\Sigma_X}(W,\Wh)} - \DDrc^{\lambda}(\Rstar)} \leq \eta + n^{-A-1}} \geq 1 - NN'\exp\paren{-2^{n\eps(1-c'n^{4\alpha - 1})}}\,.
    $$
    In the above, $NN' \leq (c/\gamma)^{n^2 + n} = 2^{5(n^3+n^2)\Rstar}$, so the probability on the RHS of the equation above is $\geq 1 - \exp\paren{-2^{n\eps(1-c n^{4\alpha - 1})}}$ for a constant $c$. The statement of Theorem~\ref{thm:union} follows by a union bound and the fact that $\Pr_\CCC\sqb{ E_\CCC^c} \leq  \exp\paren{- c' n^{2B}}$ and choosing $n$ such that $n^{-A-1} \leq \eta$.

    It remains to verify that $|\DDrc^{\lambda}(\Rstar) - \DDrc^{s_j}(\Rstar)| \leq \poly(n) \cdot \|\lambda - s_j\|_2$. An explicit calculation (using the implicit function theorem) gives for all $j\in \sqb{n}$
    \begin{align*}
        \frac{\partial n\DDrc^{\lambda}(\Rstar)}{\partial \lambda_j} &= \frac{1}{(1+\lambda_j T)^2} + \sum_i \frac{\lambda_i^2}{(1+\lambda_iT)^2} \cdot \frac{\frac{T}{1+\lambda_j T}}{\sum_i \frac{\lambda_i}{1+\lambda_i T}}\\
        &\leq \frac{1}{(1+\lambda_j T)^2} + \frac{1}{1+\lambda_j T} \leq 2\,,
    \end{align*}
    which yields the desired statement.

\begin{proof}(of Claim~\ref{claim:dist_perturb_union})
    Denote $\tau = \tau(W, \Sigma_X)$ and $\taut = \tau(W, \Sigmat_X)$. In our notation the claim statement is equivalent to, for $W\sim \N(0, I_n)$,
    $$
    \E_W\sqb{ d_{\Sigmat_X} (W, q(\taut) c_i)} \leq \E_W\sqb{ d_{\Sigma_X} (W, q(\tau) c_j)} + \frac12 n^{-A} + \gamma n^{O(B)} \cdot 2^{4n\Rstar}\,,
    $$
    where $i = \argmin_{k} d_{\Sigmat_X} (W, \taut c_k)$ and $j = \argmin_{k} d_{\Sigma_X} (W, \tau c_k)$. 
    
    \textbf{Error from quantizing $\tau, \taut$.} We first uniformly bound the effect of quantizing $\tau, \taut$. Similarly to Claim~\ref{claim:tau_quant_bound},
    \begin{equation*}
        \B|\sqrt{d_{\Sigma_X}(W, q(\tau) c_j)} - \sqrt{d_{\Sigma_X}(W, \tau c_j)}\B|  \leq \delta_\tau \sqrt{c_j^\top \Sigma_X c_j} \leq \delta_\tau n^{B+1}\,. 
    \end{equation*}
    A standard Gaussian tail argument, combined with a bound $\max\sset{q(\tau), \tau} \leq \|U^\top W\|_\infty$, shows that $\E_W \sqb{ \sqrt{d_{\Sigma_X}(W, q(\tau) c_j)} + \sqrt{d_{\Sigma_X}(W, \tau c_j) }} \leq n^{P}$ for a sufficiently large constant $P$, so setting $\delta_\tau \leq \frac18 n^{-B-1 - P - A}$ incurs $o_n(1)$ factors in $R$ (see proof of Theorem~\ref{thm:fixed_sigma}) and achieves 
    $$
    \B|d_{\Sigma_X}(W, q(\tau) c_j) - d_{\Sigma_X}(W, \tau c_j)\B| \leq \frac 18 n^{-A} \quad\text{and} \quad \B|d_{\Sigmat_X}(W, q(\taut) c_i) - d_{\Sigmat_X}(W, \taut c_i)\B| \leq \frac 18 n^{-A}\,.
    $$
    Thus, it is sufficient to show
    $$
    \E_W\sqb{ \min_i d_{\Sigmat_X} (W, \taut c_i)} \leq \E_W\sqb{ \min_i d_{\Sigma_X} (W, \tau c_i)} + \frac14 n^{-A} + \gamma n^{O(B)}\cdot 2^{4n\Rstar} \,.
    $$

    \textbf{Error from the tail $W$ event.}
    First, denote the event $E_W = \sset{\|W\|_2 \leq n^{B}}$. 
    We can expand
    \begin{align*}
        \E_W\sqb{ \min_i d_{\Sigma_X} (W, \tau c_i)} = \E_W\sqb{ \min_i d_{\Sigma_X} (W, \tau c_i) \cdot \I_{E_W}} +  \E_W\sqb{ \min_i d_{\Sigma_X} (W, \tau c_i) \cdot \I_{E_W^c}}\,,
    \end{align*}
    and, by definition of $d_{\Sigma_X} (W, \tau c_i)$, the second term is 
    \begin{align*}
    \E_W\sqb{ \min_i d_{\Sigma_X} (W, \tau c_i) \cdot \I_{E_W^c}} &= \E_W\sqb{ \min_i (W - \tau c_i)^\top \Sigma_X (W - \tau c_i) \cdot \I_{E_W^c}} \\
    &\leq \E_W \sqb{ n \|U^\top W\|^2_\infty (1 + \|U^\top c_i\|_\infty) \cdot \I_{E_W^c}} \\&\leq 2n^{21} \E_W \sqb{ \|U^\top W\|^2_\infty \cdot \I_{E_W^c}}\\ 
    &\leq 2n^{2B+1} \E_W \sqb{ \| W\|^2_2 \cdot \I_{E_W^c}} \\
    &\leq 2n^{2B+1} \int_{t = n^{B}}^\infty t e^{-c_\chi t} \leq c' n^{3B} e^{-c'n^{B}} \leq \frac14 n^{-A}
    \end{align*}
    for sufficiently large $n$,
    where we used $\tau \leq \|U^\top W\|_\infty$, see Claim~\ref{claim:tau_bound}, and $c_\chi, c'$ are universal constants. 

    \textbf{Error from $\Sigma_X$ perturbation.}
    We now show that \begin{equation}\label{eq:trunc_perturb}
        \E_W\sqb{ \min_i d_{\Sigmat_X} (W, \taut c_i) \cdot \I_{E_W}} \leq \E_W\sqb{ \min_i d_{\Sigma_X} (W, \tau c_i) \cdot \I_{E_W}} + \gamma n^{O(B)} \cdot 2^{4n\Rstar}\,,
    \end{equation}
    which, combined with the bound above yields the Claim. We use Lemma~\ref{lem:tau_perturb_union} (proved in Sec.~\ref{subsec:tau_perturb_union}), which bounds $\b|\tau(W, \Sigma_X) - \tau(W, \Sigmat_X)\b|$ in terms of $\| U^\top W - \Ut^\top W \|_\infty$ and $\|\Lambda - \Lambdat\|_2$. Given that $\|U^\top W\|_{\infty}, \|\Ut^\top W\|_\infty \leq \|W\|_2 \leq n^{B}$ under the $E_W$ event, Lemma~\ref{lem:tau_perturb_union} gives a (crude) bound:
    $$
    |\tau - \taut| = \b|\tau(W, \Sigma_X) - \tau(W, \Sigmat_X)\b| \leq \| U^\top W - \Ut^\top W \|_\infty + n^{O(B)} \|\Lambda - \Lambdat\|_2 \cdot 2^{4n\Rstar}\,.
    $$
    Plugging in $\| U^\top W - \Ut^\top W \|_\infty \leq \| U - \Ut\|_{op}\|W\|_{2} \leq \gamma n^{B}$ and $\|\Lambda - \Lambdat\|_2 \leq \gamma \sqrt{n}$, we obtain 
    $$
    |\tau - \taut| \leq \gamma n^{O(B)}\cdot 2^{4n\Rstar}\,.
    $$
    Additionally, 
    \begin{align*}
        \|\Sigmat_X - \Sigma_X \|_{op} &\leq \|U^\top (\Lambda - \Lambdat) U \|_{op} + \|U^\top \Lambdat U - \Ut^\top \Lambdat \Ut\|_{op}\\
        &\leq \|\Lambda - \Lambdat\|_2 + \|(U-\Ut)^\top \Lambdat \Ut\|_{op} + \| \Ut^\top \Lambdat (U - \Ut)\|_{op} + \| (U-\Ut)^\top \Lambdat (U - \Ut)\|_{op}\\
        &\leq \gamma\sqrt{n} + 2\gamma n + \gamma^2 n \leq 4 \gamma n\,.
    \end{align*}
    Again denote $j = \argmin_{k} d_{\Sigma_X} (W, \tau c_k)$, and let $x = W - \tau C_j$, $\xt = W - \taut C_j$. Then,
    \begin{align*}
        \min_i d_{\Sigmat_X} (W, \taut c_i) \cdot \I_{E_W} &\leq d_{\Sigmat_X} (W, \taut c_j) \cdot \I_{E_W}\\ 
        &= (W - \taut c_j)^\top \Sigmat_X (W- \taut c_j) \cdot \I_{E_W} \\
        &= \xt^\top \Sigmat_X \xt \cdot \I_{E_W}\\
        &= x^\top \Sigma_X x \cdot \I_{E_W} + (\xt-x)^\top \Sigmat_X \xt \cdot \I_{E_W} + x^\top \Sigma_X (\xt-x) \cdot \I_{E_W}\\
        &\qquad\qquad+ x^\top (\Sigmat_X - \Sigma_X) \xt \cdot \I_{E_W}\,.
    \end{align*}
    Since $\max \sset{\tau,\taut} \leq \max\sset{\|U^\top W\|_\infty, \|\Ut^\top W\|_\infty} \leq \|W\|_2$, we can bound 
    $$
    \|x\|_2 \cdot \I_{E_W}, \|\xt\|_2 \cdot \I_{E_W} \leq \|W\|_2 (1+ \|C_j\|_2) \cdot \I_{E_W} \leq 2 n^{2B}
    $$
    and, using the bound on $|\tau - \taut|$ above,
    \begin{align*}
        \|x - \xt\|_2 \leq |\tau - \taut| \cdot n^{B} \leq \gamma n^{O(B)} \cdot 2^{4n\Rstar} \,. 
    \end{align*}
    Plugging this into our bound, we obtain 
    \begin{align*}
        \min_i d_{\Sigmat_X} (W, \taut c_i) \cdot \I_{E_W} &\leq \min_i d_{\Sigma_X} (W, \tau c_i) \cdot \I_{E_W} + 2\gamma n^{O(B)} \cdot 2^{4n\Rstar} + 4n^{4B} \cdot 4 \gamma n\\
        &\leq \min_i d_{\Sigma_X} (W, \tau c_i) \cdot \I_{E_W} + \gamma n^{O(B)} \cdot 2^{4n\Rstar}\,,
    \end{align*}
    which concludes the proof of this step and the claim.
\end{proof}
\qed

\section{Worst-Case Gap Between Waterfilling and Random Coding: Theorem~\ref{thm:worst_case}}\label{sec:worst_case}

Recall that for $\Lambda = \diag\paren{\lambda_1,\dots,\lambda_n} \succeq 0$ with $ \trace (\Lambda) = n$, the random-coding rate-distortion is given by a parametric relationship 
\begin{equation}\label{eq:rc_rd_worst}
    \Drc(\Lambda, T) = \frac 1 n \sum_{i=1}^n \frac{\lambda_i}{1+\lambda_i T}\qquad \Rrc(\Lambda, T) = \frac 1 {2 n}\sum_{i=1}^n \log(1+\lambda_i T)\,,\tag{RDRC}
\end{equation}
and the waterfilling rate-distortion by 
\begin{equation}\label{eq:wf_rd_worst}
    \Dwf(\Lambda, t) = \frac{1}{n}\sum_{i=1}^n \min\sset{\lambda_i,t}\qquad \Rwf(\Lambda, t) = \frac 1 {2n} \sum_{i=1}^n \max\sset{0, \log (\lambda_i/t)}\,.\tag{RDWF}
\end{equation}
Throughout, $\log$ denotes the base-2 logarithm, and $\ln$ denotes the natural logarithm. The goal of this section is to quantify, at a fixed target distortion $\Dstar$, the rate overhead of our universal codebook, whose rate-distortion is given in Eq.~\eqref{eq:rc_rd_worst}, compared to the $\Sigma_X$-fine-tuned optimal codebook, whose rate-distortion is given in Eq.~\eqref{eq:wf_rd_worst}.

Denote\footnote{All of the $\Drc, \Dwf, \Rrc, \Rwf$ are monotone in $T, t$.} the implicit functions $\Trc(\Lambda, \Dstar) = T^\star$ s.t. $\Drc(\Lambda, T^\star) = \Dstar$ and $\twf(\Lambda, \Dstar) = t^\star$ s.t. $\Dwf(\Lambda, t^\star) = \Dstar$. Denote 
\begin{equation*}\label{eq:rrrc_rrwf}
    \RRrc(\Lambda, \Dstar) = \Rrc(\Lambda, \Trc(\Lambda, \Dstar)) \qquad\text{and}\qquad \RRwf(\Lambda, \Dstar) = \Rwf(\Lambda, \twf(\Lambda, \Dstar))\,.
\end{equation*}
In this notation, the rate overhead incurred by our universal codebook is 
$$
\sup_{\Lambda} \sset{ \RRrc(\Lambda, \Dstar) - \RRwf(\Lambda, \Dstar)}\,,
$$
where the supremum is over $\Lambda = \diag(\lambda_1,\dots,\lambda_n)\succeq 0$ with $\trace(\Lambda) = n$. In Theorem~\ref{thm:worst_case}, we prove
$$
\sup_{\Dstar \in (0,1)}\sup_{\Lambda} \sset{ \RRrc(\Lambda, \Dstar) - \RRwf(\Lambda, \Dstar)} \leq 0.11\,.
$$
Concretely, we first verify that the maximum rate gap is approached at spectra containing at most $2$ distinct eigenvalue and at vanishing distortions. The resulting gap expression can be directly bounded by $0.11$ bit. 

\subsection{Proof of Theorem~\ref{thm:worst_case}}\label{subsec:worst_case_proof}

\begin{mythm}{\ref{thm:worst_case}}\textnormal{\textbf{(Worst-Case Gap Between Waterfilling and Random Coding)}}
    $$
    \sup_{\Dstar \in (0,1)}\sup_{\Lambda} \sset{ \RRrc(\Lambda, \Dstar) - \RRwf(\Lambda, \Dstar)} \leq 0.11\,,
    $$
    where the supremum is over $\Lambda = \diag(\lambda_1,\dots,\lambda_n)\succeq 0$ with $\trace(\Lambda) = n$.
\end{mythm}

Theorem~\ref{thm:worst_case} follows immediately from a more general version of the same statement in Theorem~\ref{thm:worst_case_measure}. Let $\mathcal{P}([0,\infty))$ denote the space of Borel probability measures on $[0,\infty)$, and define
$$
\mathcal{P}_1([0,\infty)) = \sset{\lambda \in \mathcal{P}([0,\infty)): \int x d\lambda(x) = 1}\,.
$$
We extend the finite-dimension rate-distortion curves in Eq.~\eqref{eq:rc_rd_worst} and \eqref{eq:wf_rd_worst} as follows: for $\mu \in \mathcal{P}_1([0,\infty))$, let
$$
\Drc(\mu,T)
=
\int_0^\infty \frac{x}{1+xT}d\mu(x),
\qquad
\Rrc(\mu,T)
=
\frac12\int_0^\infty \log(1+xT)d\mu(x),
$$
and
$$
\Dwf(\mu,t)
=
\int_0^\infty \min\sset{x,t}d\mu(x),
\qquad
\Rwf(\mu,t)
=
\frac12\int_0^\infty \max\sset{0,\log(x/t)}d\mu(x).
$$
Similarly to above, for each $\Dstar\in(0,1)$, let $\Trc(\mu,\Dstar)$ and $\twf(\mu,\Dstar)$ be defined by
$
\Drc(\mu,\Trc(\mu,\Dstar))=\Dstar$ and $\Dwf(\mu,\twf(\mu,\Dstar))=\Dstar,
$
and set
$$
\RRrc(\mu,\Dstar)
=
\Rrc(\mu,\Trc(\mu,\Dstar)),
\qquad
\RRwf(\mu,\Dstar)
=
\Rwf(\mu,\twf(\mu,\Dstar)).
$$

\begin{theorem}[Worst-Case Gap Between Waterfilling and Random Coding]\label{thm:worst_case_measure} 
    $$
   \sup_{\Dstar \in (0,1)} \sup_{\mu} \sset{ \RRrc(\mu, \Dstar) - \RRwf(\mu, \Dstar)} \leq 0.11\,,
    $$
    where the supremum is over probability measures $\mu \in \mathcal{P}_1([0, \infty))$.
\end{theorem}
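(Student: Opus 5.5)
We argue at the level of measures and reduce the supremum to a finite-dimensional optimization. First, Theorem~\ref{thm:worst_case} follows from Theorem~\ref{thm:worst_case_measure} by taking $\mu=\frac1n\sum_i\delta_{\lambda_i}$, which lies in $\mathcal{P}_1([0,\infty))$ since $\trace\Lambda=n$. For Theorem~\ref{thm:worst_case_measure}, fix $\Dstar\in(0,1)$ and also fix the two parameters $T>0$ and $t>0$. The gap is then the integral
\[
\RRrc-\RRwf=\int_0^\infty \tfrac12\bigl(\log(1+xT)-\max\{0,\log(x/t)\}\bigr)\,d\mu(x),
\]
which is linear in $\mu$. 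The set of admissible $\mu$ is cut out by linear moment constraints: total mass $1$, mean $1$, $\int \frac{x}{1+xT}d\mu=\Dstar$ and $\int\min\{x,t\}d\mu=\Dstar$. By the standard extreme-point (Carath\'eodory/Winkler) theorem for moment problems, the supremum over this set is approached by measures supported on at most $4$ atoms. Taking the supremum over $(T,t)$ afterwards therefore reduces the problem to finitely supported $\mu$ with at most four atoms. If the integrand is unbounded I would work with compactly supported measures and pass to the limit, which is harmless since all integrands grow at most logarithmically or linearly.

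Second, I would cut the number of atoms further, down to at most two distinct positive eigenvalues, possibly together with an atom at $0$; this is the semi-flat direction, where the gap vanishes. The tool is a first-order (KKT/Lagrangian) analysis of the finite-atom problem. At an optimum there are multipliers $\beta_0,\dots,\beta_3$ such that
\[
\psi(x)=\tfrac12\log(1+xT)-\tfrac12\log^+(x/t)-\beta_0-\beta_1x-\beta_2\tfrac{x}{1+xT}-\beta_3\min\{x,t\}
\]
is $\le 0$ everywhere and vanishes on the support of $\mu$. On $[0,t]$ and on $[t,\infty)$ separately, $\psi$ is an explicit combination of $\log(1+xT)$, $\log x$, $x$ and $x/(1+xT)$. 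Counting sign changes of $\psi'$ (after multiplying by $(1+xT)^2x$ it becomes a low-degree polynomial) limits how many tangency points $\psi$ can have on each piece. From this I expect to show that the support consists of at most one point in $(0,t]$, at most one point in $(t,\infty)$, and possibly the point $0$. Next I would eliminate the atom in $(0,t]$: an atom below the water level costs no waterfilling rate, so moving its mass to $0$ while rescaling the remaining atoms to keep the mean equal to $1$ should not decrease the gap. This leaves the family $\mu=(1-p)\delta_0+p\,q\,\delta_{a}+p(1-q)\delta_b$, or at most two distinct positive eigenvalues.

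Third, I would show the gap is monotone in $\Dstar$, so that the worst case is the limit $\Dstar\to0$, i.e.\ $T\to\infty$ and $t\to0$. I would establish this by differentiating $\RRrc-\RRwf$ along the curve via the implicit function theorem, using $d\RRrc/d\Dstar=-1/(2\ln 2\,\Trc)$ and $d\RRwf/d\Dstar=-1/(2\ln2\,\twf)$, which reduces the question to comparing $\Trc$ with $1/\twf$. In the $\Dstar\to0$ limit the two-level gap becomes an explicit function of a mass fraction and a level ratio, roughly
\[
\tfrac12\bigl[\,p\log\text{-type terms coming from } \log(1+xT)-\log(x/t)\,\bigr],
\]
after solving the two distortion equations asymptotically. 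The final step is to maximize this elementary two-variable function by calculus, with a rigorous numerical certificate such as interval arithmetic on a grid plus derivative bounds, and check that the maximum is below $0.11$.

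The main obstacle is the second step: turning the $4$-atom extreme-point structure into the two-level structure. The sign-change count for $\psi$ is fiddly because of the kink at $x=t$, and the mass-moving argument must respect all constraints simultaneously. If the Lagrangian approach stalls there, my fallback is to keep the $4$-atom family, take $\Dstar\to0$ directly, where many terms simplify, and then optimize numerically over the remaining few parameters.
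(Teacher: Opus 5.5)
Your first step (the finite-atom reduction for fixed $(\Dstar,T,t)$) is sound. The elimination step in your second paragraph, however, is false, and it discards exactly the configuration that produces the gap. A mode below the water level costs nothing in $\RRwf$ but costs $\frac12\log(1+xT)>0$ in $\RRrc$. These inactive modes are where the isotropic codebook wastes rate, so they are the source of the gap. Concretely, take $\mu=\theta\delta_{a\Dstar}+(1-\theta)\delta_L$ with $L=(1-\theta a\Dstar)/(1-\theta)$, $\theta\approx0.81$, $a\approx0.88$, and let $\Dstar\to0$. The water level is then $t\approx1.53\,\Dstar$, so the small eigenvalue is strictly inactive, and the gap tends to $\approx0.108$ bit. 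Moving that atom to $0$ and renormalizing gives $\theta\delta_0+(1-\theta)\delta_{1/(1-\theta)}$, a semi-flat spectrum whose gap is exactly $0$. More generally, your support claim (at most one atom in $(0,t]$, one in $(t,\infty)$, possibly $0$) followed by the elimination leaves only semi-flat measures, so it would prove that the gap is identically zero. The two-positive-atom family you then write down does not follow from it. Moreover, if both of its atoms lie above the water level, as your elimination forces, its supremum is only about $0.098$ bit, strictly below the true value. So no argument can reduce the general case to that family.

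Your third step also fails. The slope is $d\RRrc/d\Dstar=-\frac{1}{2\ln2}\int\frac{x}{1+xT}d\mu\big/\int\frac{x^2}{(1+xT)^2}d\mu$, not $-1/(2\ln2\,\Trc)$; for $\mu=\delta_1$ it equals $-(1+T)/(2\ln2)$. The identity ``slope $=$ parameter'' holds only for the Lagrangian waterfilling curve. Moreover, for fixed $\mu$ the gap is not monotone in $\Dstar$. If $\mu$ is supported away from $0$, all modes become active as $\Dstar\to0$ and the gap tends to $0$. This also defeats your fallback of sending $\Dstar\to0$ in a fixed four-atom family. The worst case at vanishing distortion is reached only when an eigenvalue scales like $\Dstar$.

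The device you are missing is the rescaling $x\mapsto x/\Dstar$. It turns the joint supremum over $(\Dstar,\mu)$ into a supremum over probability measures $\widetilde\mu$ on $[0,\infty]$, subject only to $\int\frac{x}{1+x\widetilde T}d\widetilde\mu=\int\min\{x,\widetilde t\}d\widetilde\mu=1$; an atom at $+\infty$ encodes $\Dstar\to0$. In this picture the right reductions are two. First, push the active mass (above $\widetilde t$) to $\{\widetilde t,+\infty\}$: the objective is convex in $y=\frac{x}{1+x\widetilde T}$, while the constraints are linear in $y$ and in the active mass. Second, \emph{merge} the inactive mass into one atom at its barycenter, the opposite of your elimination. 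This increases the gap because $x\mapsto\log(1+x\widetilde T)+A\frac{x}{1+x\widetilde T}$ is concave for $A\ge0$. What remains is $\theta\delta_{\mu_0}+(1-\theta)\delta_{+\infty}$ and the explicit two-parameter function $\theta\log(1+\alpha)+(1-\theta)\log\bigl(1-\frac{\alpha^2\theta}{(1+\alpha)(1-\theta)}\bigr)$, whose maximum is $\approx2\cdot0.108$.
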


Indeed, for any $\Lambda=\diag(\lambda_1,\dots,\lambda_n)\succeq 0$ with $\trace(\Lambda)=n$, the empirical measure $\mu_\Lambda=\frac1n\sum_{i=1}^n\delta_{\lambda_i}$ belongs to $\mathcal P_1([0,\infty))$.

\begin{proof}
Denote the rate-gap
$$
\Delta (\mu, \Dstar) = \frac 12 \int r_{\Trc(\mu, \Dstar), \twf(\mu, \Dstar)}(x) d \mu(x)\,, \qquad r_{T, t}(x) = \begin{cases}
    \log(1+xT) &\text{if } x \leq t\\
    \log\paren{\frac{1+xT}{x/t}} &\text{if } x > t\,.
\end{cases}
$$
It will be convenient to consider the following rescaling:
\begin{equation}\label{eq:rescaling}
\mut \coloneq (x \to x/\Dstar)_{\sharp \mu}\,,\tag{$\mut$}
\end{equation}
so that $\int x d \mut (x) = (\Dstar)^{-1} > 1$. Accordingly, we define $\Dstar(\mut) = \paren{\int x d \mut (x)}^{-1}$. Additionally, rescale the implicitly defined parameters $\twf$ and $\Trc$:
$$
\ttildewf(\mut) \triangleq \twf\paren{ (x \to x\cdot\Dstar)_{\sharp \mut} , \Dstar(\mut)} / \Dstar(\mut) \qquad \text{and} \qquad \Ttilderc(\mut) \triangleq \Trc\paren{ (x \to x\cdot\Dstar)_{\sharp \mut} , \Dstar(\mut)} \cdot \Dstar(\mut)\,,
$$
i.e., $\ttildewf(\mut) = \twf(\mu, \Dstar) / \Dstar$ and $\Ttilderc(\mut) = \Trc(\mu, \Dstar)\cdot\Dstar$ in the original parameters. The distortion conditions then correspond to 
\begin{align}
    1 = \int \frac{x}{1+x \Ttilde} d \mut(x) = \int \min\sset{x,\ttilde} d\mut(x)\,, \quad \text{where}\quad\ttilde = \ttildewf(\mut), \Ttilde = \Ttilderc(\mut)\,.\tag{Dist}\label{eq:dist_reparam}
\end{align}
We can now rewrite the optimization objective in terms of $\mut$:
\begin{equation}\label{eq:objective_reparam}
    \Phi(\mut) \triangleq \Delta\paren{ (x \to x\cdot\Dstar)_{\sharp \mut} , \Dstar(\mut) } = \Delta(\mu, \Dstar) = \frac12 \int r_{\Ttilde, \ttilde}(x) d\mut(x) \tag{$\Phi$}\,.
\end{equation}
Finally, for technical reasons, we will consider $\mut \in \mathcal{P}([0, \infty]) \supseteq \mathcal{P}([0, \infty))$ with the natural extensions of functions $\frac x {1+x \Ttilde}, \log\paren{\frac{1+x\Ttilde}{x/\ttilde}}$. To prove the theorem it is then sufficient to show
\begin{align*}
    \sup_{\mut \in \calA} \Phi(\mut) \leq 0.11\,, \quad \text{where}\quad\calA \triangleq \Big\{ &\mut \in  \mathcal{P}([0, \infty]): \Dstar(\mut) \in (0,1)\Big\}\,.
\end{align*}
In what follows, we will restrict the optimization scope $\calA$ in three steps. For $\mut$ as above, to simplify notation, when clear from the context, we write $\Dstar = \Dstar(\mut)$, $\ttilde = \ttildewf(\mut)$, $\Ttilde = \Ttilderc(\mut)$.

\noindent\textbf{Step I. } Denote, for $c = 0.11 \ln 2$, 
$$
\calA_{\mathrm{bd}} = \sset{\mut \in \calA:\ \ttildewf(\mut) \in \sqb{1, c^{-1} }, \Ttilderc(\mut) \in \sqb{ c, 1},\text{ and } \ttildewf(\mut)\cdot \Ttilderc(\mut) \leq 1}\,.
$$
As a first step, we show 
$$
\sup_{\mut \in \calA} \Phi(\mut) \leq \max\sset{0.11, \sup_{\mut \in \calA_{\mathrm{bd}}} \Phi(\mut)}\,.
$$
Let $\kappa = \kappa(\mut) \triangleq \int_{\ttilde}^{+\infty} d\mut(x)$ be the mass $\mut$ puts to the waterfilling-active modes.\footnote{$\kappa > 0$ from the water-filling distortion condition.} Expanding $\int \min\sset{x,\ttilde} d\mut(x) = \int_{0}^{\ttilde}x d\mut(x) + \kappa \ttilde$, from Eq.~\eqref{eq:dist_reparam} we conclude $\ttilde \geq 1$, $\Ttilde \leq 1$, and $\ttilde \Ttilde \leq 1$, where the last inequality is derived as follows:
   \begin{align*}
       1 = \int \frac{x}{1+x \Ttilde} d \mut(x) \leq \int_{0}^{\ttilde} x d \mut(x) + \frac \kappa \Ttilde = (1 - \kappa \ttilde) + \frac \kappa \Ttilde = 1 + \kappa \paren{-\ttilde + \frac 1 \Ttilde}\,.
   \end{align*}
   We now obtain bounds on $\Phi(\mut)$ in terms of $\ttilde, \Ttilde$:
   \begin{align*}
       \Phi(\mut) &= \underbrace{\frac 1 {2} \int_0^{\ttilde} \log(1+x \Ttilde) d \mut(x)}_{\mathrm{I}} +  \underbrace{\frac 1 {2} \int_{\ttilde}^{+\infty} \log \paren{\frac{1+x \Ttilde}{x / \ttilde}}d\mut(x)}_{\mathrm{II}}\,.
   \end{align*}
   From $\int \min \sset{x, \ttilde} d\mut(x) = 1$ and $\log(1+x) \leq x / \ln 2$, we bound 
   $\mathrm{I} \leq \frac {(1-\kappa)\Ttilde} {2\ln 2}$. To bound $\mathrm{II}$, notice that for $x \geq \ttilde$, $\log \paren{\frac{1+x\Ttilde}{x / \ttilde}} = \log (\ttilde\Ttilde + \ttilde /x) \leq \log(\ttilde\Ttilde + 1) \leq \ttilde \Ttilde /\ln 2$, so overall, 
   \begin{equation}\label{eq:gap_bound_tT}
   \mathrm{I} + \mathrm{II} \leq \frac {(1-\kappa)\Ttilde} {2\ln 2} + \frac{\kappa\ttilde \Ttilde}{2\ln 2}  \leq \frac {\Ttilde} {2\ln 2} + \frac{\Ttilde}{2\ln 2} = \frac {\Ttilde} {\ln 2} \leq \frac 1 {\ttilde \ln2}\,,
   \end{equation}
   where the last inequality used previously derived $\ttilde\Ttilde \leq 1$. Then, if $\ttilde \geq c^{-1}$ or $\Ttilde \leq c$, we obtain the desired bound $\Phi(\mut) \leq 0.11$, which concludes this step. 

   \noindent\textbf{Step II. } Second, we show that the largest gap is obtained at spectra, whose mass in the active tail (i.e., values $\geq \ttilde$) is distributed between endpoints $\ttilde$ and $+\infty$. Concretely, we show that 
   $$
    \sup_{\mut \in \calA_{\mathrm{bd}}} \Phi(\mut) \leq \sup_{\mut \in \calA_{\mathrm{tail}}} \Phi(\mut)\,,
    $$
    where 
    $$
    \calA_{\mathrm{tail}} = \sset{\mut \in \calA_{\mathrm{bd}}:\ \text{for } x\geq \ttildewf(\mut), \theta \in [0,1], \ \mut(x) \propto \theta \cdot \delta_{\ttilde}(x) + (1-\theta) \cdot \delta_{+\infty}(x)}\,.
    $$
    Recall the contribution to the rate gap of the active modes (those $\geq \ttilde$) is $$
    \frac 1 {2} \int_{\ttilde}^{+\infty} \log \paren{\frac{1+x \Ttilde}{x / \ttilde}}d\mut(x) = \frac 1 {2} \int_{\ttilde}^{+\infty} \log \paren{\ttilde \cdot \paren{\frac{x}{1+x \Ttilde}}^{-1}}d\mut(x)\,.
    $$
    The distortion constraints in Eq.~\eqref{eq:dist_reparam} yield an equality $\int_{\ttilde}^{\infty}\frac {x}{1+x \Ttilde} d\mut(x) = 1 - \int_{0}^{\ttilde}\frac {x}{1+x \Ttilde} d\mut(x)$. Note that for $x \in [\ttilde, +\infty]$, $\frac {x}{1+x \Ttilde} \in \sqb{\frac {\ttilde}{1+\ttilde \Ttilde}, \frac 1 \Ttilde}$ and is a monotone function. Moreover, both the objective $f(y) = \log (\ttilde y^{-1})$ and the first moment constraint are convex, so the supremum is obtained at $$\mut(x) \propto \theta \cdot \delta_{\ttilde}(x) + (1-\theta) \cdot \delta_{+\infty}(x)\qquad\text{for }x\geq \ttilde$$ for some $\theta \in [0,1]$. Note that this suggests that the extremal regime occurs at vanishing distortions.

    \noindent\textbf{Step III. } Finally, we show that the inactive modes (those $\leq \ttilde$) of the worst-case spectra $\mut$ equalize, i.e., 
    $$
    \sup_{\mut \in \calA_{\mathrm{tail}}} \Phi(\mut) \leq \sup_{\mut \in \calA_{\mathrm{2pt}}} \Phi(\mut)\,,
    $$
    where 
    $$
    \calA_{\mathrm{2pt}} = \sset{\mut \in \calA_{\mathrm{tail}}:\ \text{for }\theta \in [0,1], \mut_0 \leq \ttildewf(\mut), \ \mut(x) \propto \theta \cdot \delta_{\mut_0}(x) + (1-\theta) \cdot \delta_{+\infty}(x)}\,.
    $$
    For a given $\mut \in \calA_{\mathrm{tail}}$ with $\mut \propto (1-\theta)\cdot \delta_{+\infty}(x)$ for $x > \ttilde$, define a corresponding measure $\nu \in \calA_{\mathrm{2pt}}$ as 
    $$
    \nu \propto \theta \cdot \delta_{\mut_0} + (1-\theta)\cdot \delta_{+\infty}(x)\,, \quad \text{where} \quad \mut_0 = \frac{\int_{0}^{\ttilde} x d \mut(x)} \theta\,. 
    $$
    To complete Step III, it is sufficient to show that $\Phi(\mut) \leq \Phi(\nu)$. Define a path $\mu_s = s \cdot \mut + (1-s) \cdot \nu$ for $s \in [0, 1]$, and $J(s) \triangleq \Phi(\mu_s)$. We will show that $\frac {\partial J(s)} {\partial s} \leq 0$ for all $s\in \sqb{0,1}$, yielding $\Phi(\mut) = J(1) \leq J(0) = \Phi(\nu)$.
    
    For this, we compute:
    \begin{align*}
        \frac{\partial \Ttilderc(\mu_s)}{\partial s} &= -\frac{\frac{\partial \int \frac {x}{1+x\Ttilde} d \mu_s(x) }{\partial s}}{\frac{\partial \int \frac {x}{1+x\Ttilde} d \mu_s(x) }{\partial \Ttilde}} = \frac{ \int \frac {x}{1+x\Ttilde} d (\mut(x) - \nu(x)) }{ \int \frac {x^2}{(1+x\Ttilde)^2} d \mu_s(x) }\,.
    \end{align*}
    Note that for all $s \in \sqb{0,1}$, $\ttildewf(\mu_s)$ is unchanged. Then,
    \begin{align*}
        \frac{\partial 2J(s)}{\partial s} &= \frac { \partial \paren{\int_0^{\ttilde} \log(1+x\Ttilde) d \mu_s(x) + (1-\theta) \log (\Ttilde \ttilde) }} {\partial s}\\
        &=\int_0^{\ttilde} \log(1+x\Ttilde) d (\mut(x) - \nu(x)) + \frac 1 {\ln 2} \cdot \paren{\int_0^{\ttilde} \frac{x}{1+x\Ttilde} d \mu_s(x)+  \frac {1-\theta} {\Ttilde} } \cdot \frac{ \int \frac {x}{1+x\Ttilde} d (\mut(x) - \nu(x)) }{ \int \frac {x^2}{(1+x\Ttilde)^2} d \mu_s(x) }\\
        &= \int_0^{\ttilde} \paren{\log(1+x\Ttilde) + A_s \cdot \frac {x}{1+x\Ttilde} } d (\mut(x) - \nu(x)) \,,
    \end{align*}
    where $A_s \geq 0$. The function $\log(1+x\Ttilde) + A_s \cdot \frac {x}{1+x\Ttilde}$ is concave on $x \in [0, \ttilde]$, yielding the final inequality $\frac{\partial J(s)}{\partial s} \leq 0$.
    \medskip
    It remains to show $\sup_{\mut \in \calA_{\mathrm{2pt}}} \Phi(\mut) \leq 0.11$. Recall that every $\mut \in \calA_{\mathrm{2pt}}$ can be expressed for some $\mut_0 \leq \ttildewf(\mut)$ as $$
    \mut(x) = \theta \cdot \delta_{\mut_0}(x) + (1-\theta) \cdot \delta_{+\infty}(x)\,.
    $$
    We can then express
    $$
    2\Phi(\mut) =  \theta \log(1+\mut_0\Ttilde) + (1-\theta) \log (\ttilde \Ttilde)\,,
    $$
    where from Eq.~\eqref{eq:dist_reparam},
    $$
    1 = \theta\mut_0 + (1-\theta)\ttilde = \frac{\theta\mut_0}{1+\mut_0\Ttilde} + \frac{1-\theta}{\Ttilde}\,.
    $$
    The case $\theta = 1$ is impossible for $\Ttilde > 0$, so we will focus on the $\theta \in [0, 1)$ case.
    We now further simplify this optimization problem to obtain the final $0.11$ bound. 
    
    From the first distortion constraint, we express $\ttilde = (1-\theta)^{-1} - \paren{(1-\theta)^{-1} - 1} \mut_0$. From the second distortion constraint, $\Ttilde = \frac{\theta\mut_0\Ttilde}{1+\mut_0\Ttilde} + 1-\theta$. Denoting $\alpha = \mut_0 \Ttilde$, we then simplify $\ttilde\Ttilde  = (1-\theta)^{-1}\Ttilde - \paren{(1-\theta)^{-1} - 1} \alpha$. Plugging into the rate-gap expression:
    \begin{align*}
        2\Phi(\mut) &\leq \theta\log(1+\alpha) + (1-\theta) \log \paren{1 + \frac{\theta\alpha}{(1+\alpha)(1-\theta)} - \frac{\alpha\theta}{1-\theta}}\\
        &=\theta\log(1+\alpha) + (1-\theta) \log \paren{1 - \frac{\alpha^2 \theta}{(1+\alpha)(1-\theta)} }\,,
    \end{align*}
    where from the positivity of $\ttilde, \Ttilde$ we have $0 \leq \theta < \frac{1+\alpha}{1+\alpha+\alpha^2}$ and $0 \leq \alpha = \mut_0 \Ttilde \leq \ttilde \Ttilde \leq 1$. Denoting $\rho = \frac{\theta}{1-\theta}$ and differentiating with respect to $\alpha$ gives 
    \begin{align*}
        \ln 2 (1-\theta)^{-1}\frac{\partial (2\Phi(\mut))}{\partial \alpha} &= \frac{\rho}{1+\alpha} - \frac{1}{1 - \frac{\alpha^2\rho}{1+\alpha}}\cdot \rho \cdot \frac{2\alpha + \alpha^2}{(1+\alpha)^2}\\
        &=\frac{\rho}{1+\alpha}\paren{1 - \frac{2\alpha+\alpha^2}{1+\alpha - \alpha^2 \rho}}\,.
    \end{align*}
    Setting this to $0$ yields $\rho = \frac{1-\alpha - \alpha^2}{\alpha^2}$. At the boundary points $\alpha \in \sset{0,\frac{\sqrt{5} - 1}{2}}$, the objective function is below the objective of the Theorem. Plugging into the objective, we derive a one-parameter function 
    \begin{align*}
        2\Delta^\star \leq \frac{1-\alpha - \alpha^2}{1-\alpha} \log(1+\alpha) + \frac{\alpha^2}{1-\alpha}\log\paren{1 - \frac{1-\alpha-\alpha^2}{1+\alpha}}\,.
    \end{align*}
    The maximum of this objective is achieved at $\alpha\approx 0.35$ and is $\approx 2\cdot 0.108 < 2\cdot 0.11$, which concludes the proof.

\end{proof}

\section{Proof of Lemma~\ref{lem:tau_perturb_union}}\label{subsec:tau_perturb_union}

\begin{lemma}[$\tau$ Perturbation]\label{lem:tau_perturb_union}
    Let $\Rstar$ be a fixed constant, $C_W = C_W(n) \in\R$, $W \in \R^n, s \in \R_{+}^n$, such that $\|W\|_\infty \leq C_W$ and $\sum_j s_j = n$. Denote
    \begin{equation*}
        \tau(W, s) = \paren{T \sum_{j} \frac{W_j^2 s_j^2}{(1+s_j T)^2}}^{1/2}\paren{\sum_j \frac{s_j}{1+s_j T}}^{-1/2}\,,
        \end{equation*}
    where $T = T(s)$ is a unique solution to $\Rrc^{s}(T) = \Rstar$ (see Eq.~\eqref{eq:rd_rc} for definition of $\Rrc^{s}(T)$).
    If $W, \Wh, s, \sh \in \R^n$ satisfy $\|W\|_\infty, \|\Wh\|_\infty \leq C_W$ and $\sum_j s_j = \sum_j \sh_j = n$ and 
    $$
    \|W - \Wh\|_\infty \leq \gamma_1 C_W \quad\text{and}\quad \|s - \sh\|_2 \leq \gamma_2 \sqrt{n}\,,
    $$
    then 
    $$
    \b|\tau(W, s) - \tau(\Wh, \sh) \b| \leq \gamma_1 C_W + c \cdot C_W \gamma_2 n^4 2^{4n\Rstar}
    $$
    for sufficiently large $n$ and a constant $c = c(\Rstar)$.
\end{lemma}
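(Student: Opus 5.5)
We split the perturbation with the triangle inequality through the intermediate point $(\Wh, s)$:
$$|\tau(W,s)-\tau(\Wh,\sh)| \le |\tau(W,s)-\tau(\Wh,s)| + |\tau(\Wh,s)-\tau(\Wh,\sh)|,$$
and bound the $W$-part and the $s$-part separately.

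\textbf{The $W$-part.} For fixed $s$ (hence fixed $T$), set $a_j = \frac{\sqrt{T}\, s_j}{1+s_jT}$ and $Z=\sum_j \frac{s_j}{1+s_jT}$. Then
$$\tau(W,s) = \frac{\|a\odot W\|_2}{\sqrt Z}.$$
This is a seminorm of $W$ divided by a constant. By the reverse triangle inequality,
$$|\tau(W,s)-\tau(\Wh,s)| \le \frac{\|a\odot (W-\Wh)\|_2}{\sqrt Z} \le \|W-\Wh\|_\infty \left(\frac{\sum_j a_j^2}{Z}\right)^{1/2}.$$
Exactly as in Claim~\ref{claim:tau_bound}, $a_j^2 = \frac{s_j}{1+s_jT}\cdot\frac{Ts_j}{1+s_jT} \le \frac{s_j}{1+s_jT}$, so $\sum_j a_j^2\le Z$. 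Hence the $W$-part is at most $\gamma_1 C_W$, which is precisely the first term of the claimed bound.

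\textbf{The $s$-part.} Here we treat $F(s)=\tau(\Wh,s)^2$ as a smooth function of $s$ on the simplex $\{s\ge 0,\ \sum_j s_j=n\}$, with $T=T(s)$ determined implicitly. Since $\|s-\sh\|_1\le \sqrt n\,\|s-\sh\|_2\le \gamma_2 n$, it suffices to bound $\sup_j |\partial F/\partial s_j|$ along the segment from $s$ to $\sh$ (which stays in the simplex by convexity) and apply the mean value theorem. Because $|\tau-\tilde\tau| = |\tau^2-\tilde\tau^2|/(\tau+\tilde\tau)$ can blow up when $\tau$ is small, we avoid it. Instead we bound $|\partial \tau/\partial s_j| = |\partial F/\partial s_j|/(2\tau)$ directly, or use $|\tau-\tilde\tau|\le\sqrt{|\tau^2-\tilde\tau^2|}$ only where needed. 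The cleaner route is to write $\tau = \|a(s)\odot\Wh\|_2/\sqrt{Z(s)}$ and differentiate $a(s)$ and $Z(s)$, so that $\|\partial_{s_j} a\odot \Wh\|_2 \le C_W\|\partial_{s_j}a\|_2$ with no division by $\tau$.

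The derivative bounds come from the following ingredients.
\begin{itemize}
\item \emph{Implicit derivative of $T$.} From $\Rrc^s(T)=\Rstar$,
$$\frac{\partial T}{\partial s_j} = -\frac{T/(1+s_jT)}{\sum_i s_i/(1+s_iT)} = -\frac{T/(1+s_jT)}{Z}.$$
\item \emph{Upper bound on $T$.} Since $\sum_j s_j = n$, some $s_j\ge 1$. The rate constraint then gives $\log(1+T)\le 2n\Rstar$, so $T\le 2^{2n\Rstar}$.
\item \emph{Lower bound on $T$.} As in the proof of Theorem~\ref{thm:fixed_sigma}, $T\ge \Rstar\ln 2$.
\item \emph{Lower bound on $Z$.} We have $Z \ge \frac{n}{1+nT}\ge \frac{1}{2T}\ge 2^{-2n\Rstar-1}$, since the map $s_j\mapsto s_j/(1+s_jT)$ is concave, so its sum over the simplex is minimized at a vertex.
\end{itemize}
With these, each partial derivative of $a$, $Z$ and $T$ is bounded by $\mathrm{poly}(n)\,2^{O(n\Rstar)}$. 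Dividing by $Z^{1/2}$ and $Z^{3/2}$ produces factors of at most $2^{3n\Rstar}$, for a total of roughly $2^{4n\Rstar}\mathrm{poly}(n)$. Multiplying by $\|s-\sh\|_1\le \gamma_2 n$ gives $c\,C_W\gamma_2 n^4 2^{4n\Rstar}$.

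\textbf{Main obstacle.} The delicate step is the bookkeeping of these exponential factors, especially the factor $Z^{-1}$ that appears both in $\partial T/\partial s_j$ and in the normalization. We need to make sure the total exponent stays at $4n\Rstar$ and that nothing divides by $\tau$. Since the lemma is stated crudely, we will be generous with polynomial factors and use only $T\le 2^{2n\Rstar}$ and $Z\ge 1/(2T)$.
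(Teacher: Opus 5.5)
Your proposal is correct in substance. The $W$-part is exactly the paper's Step~2, but your $s$-part takes a different route. The paper uses finite differences: it first bounds $|T-\Th|\le\gamma_2 n\,2^{4n\Rstar}$ by a mean-value argument in $T$ alone. It then writes $\tau=\sqrt{T}\sqrt{A}/\sqrt{B}$ and bounds $\|d-\dhat\|_2$, $|\sqrt{B}-\sqrt{\Bh}|$ and $|\sqrt{T}-\sqrt{\Th}|$ separately by triangle inequalities. You instead apply the mean value theorem to $\tau(\Wh,\cdot)$ along the segment in the simplex, using the implicit derivative $\partial_jT=-T/((1+s_jT)Z)$ and bounds that hold uniformly on the simplex.

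Your sketch is too loose at one point. Bounding each of $\partial a$, $\partial Z$, $\partial T$ separately by $\mathrm{poly}(n)2^{O(n\Rstar)}$ and then dividing by powers of $Z$ does not land at $2^{4n\Rstar}$. The reason is that $|\partial_jT|$ alone can be as large as $T/Z\approx T^2\sim 2^{4n\Rstar}$. An $O(1)$ bound on $\partial_Ta$ followed by division by $\sqrt Z$ then gives $2^{5n\Rstar}$. You must pair the factors:
\begin{itemize}
\item From $|\partial_Ta_k|\le s_k/(2\sqrt T(1+s_kT))$ you get $\|\partial_Ta\|_2\le\sqrt Z/(2T)$, hence $\|\partial_ja\|_2\le\sqrt T+1/(2\sqrt Z)$.
\item From $\partial_TZ=-\sum_k s_k^2/(1+s_kT)^2\ge -Z/T$ you get $0\le\partial_jZ\le 2$. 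This is the same computation as the paper's bound on $\partial(n\DDrc^{\lambda}(\Rstar))/\partial\lambda_j$.
\end{itemize}
Combine these with $\tau(\Wh,s)\sqrt{Z}\le C_W\sqrt{n/T}$, $T\le 2^{2n\Rstar}$ and $Z\ge 1/(2T)$. This yields $|\tau(\Wh,s)-\tau(\Wh,\sh)|=O(C_W\gamma_2 n^{3/2}2^{3n\Rstar})$, which is stronger than the lemma.

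That is what your differential route buys: it keeps the cancellation between the large $\partial_jT$ and the small $T$-sensitivity of $a$ and $Z$. The paper's route avoids differentiating $T(s)$, but it discards that cancellation; for example, it bounds $|d_j(\sh,T)-d_j(\sh,\Th)|$ by $n^2|T-\Th|$. In fact, the term $6\gamma_2n^42^{4n\Rstar}\cdot C_Wn2^{2n\Rstar}$ in its final display is of order $C_W\gamma_2n^52^{6n\Rstar}$, not $n^42^{4n\Rstar}$ as claimed. This is harmless downstream: shrinking the covering scale $\gamma$ in Claim~\ref{claim:dist_perturb_union} to, say, $2^{-7n\Rstar}$ still leaves a net of size $2^{O(n^3)}$. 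Still, your approach, carried out with the pairing above, actually proves the stated exponent.
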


\paragraph{Proof of Lemma~\ref{lem:tau_perturb_union}.}
    Denote $T,\Th$ to be the solutions to $\Rrc^{s}(T) = \Rstar$ and $\Rrc^{\sh}(\Th) = \Rstar$ respectively, where recall that $\Rrc^{V, s} = \Rrc^s$ does not depend on $V$. 
    \paragraph{Step 1: Bound on $|T - \Th|$.} First, we show that under the Lemma conditions, 
    \begin{equation}\label{eq:T_bound}
        |T - \Th| \leq  \gamma_2 n \cdot 2^{4n\Rstar}\,.
    \end{equation}
    First, since for each $j\in \sqb{n}$, $\b|\log(1+s_j T) - \log(1+\sh_j T)\b| \leq \frac 1 {\ln 2}\cdot \frac{T}{1+\min(s_j,\sh_j)T} \cdot |s_j - \sh_j| \leq T|s_j - \sh_j| / \ln 2 $, we have
    $$
    \b| \underbrace{\Rrc^{ {\sh}}(\Th)}_{\Rstar} - \Rrc^{{\sh}}(T) \b| = \b| {\Rrc^{s}(T)} - \Rrc^{{\sh}}(T) \b|\leq \frac{T \|s - \sh\|_1} {2n\ln 2} \leq \frac{T \|s - \sh\|_2} {2\sqrt{n}\ln 2} \leq \frac{T\gamma_2}{2\ln2}\,.
    $$
    By the mean value theorem, we have for some $\bar T \in \sqb{\min(T,\Th), \max(T,\Th)}$,
    $$
    |T - \Th| = \frac{\b|\Rrc^{ {\sh}}(\Th) - \Rrc^{{\sh}}(T) \b|}{(\Rrc^{{\sh}})'(\bar T)} \leq \frac{T\gamma_2}{2\ln2 \cdot (\Rrc^{{\sh}})'(\bar T)}\,.
    $$
    A direct calculation gives $(\Rrc^{{\sh}})'(\bar T) = \frac1{2n\ln 2} \sum_j \frac{\sh_j}{1 + \sh_j \bar T} \geq \frac{1}{2n\ln 2(1+\bar T)} \geq \frac{1}{2n\ln 2(1+ \max (T, \Th))}$, where we used that $\max_j \sh_j \geq 1$. Moreover, for any $s \in \R_{+}^n$ with $\|s\|_1 = n$, $2 n \Rrc^{{s}}(T) \geq \log(1+T)$, and therefore, $1 + \max(T, \Th) \leq 2^{2n\Rstar}$. This yields 
    $$|T - \Th| \leq \gamma_2 n \cdot T (1 +  \max(T, \Th)) \leq \gamma_2 n \cdot 2^{4n\Rstar}\,.$$

    \paragraph{Step 2: Bound on $|\tau(W, s) - \tau(\Wh, s)|$.} Here we show that 
    \begin{equation}\label{eq:W_bound}
        |\tau(W, s) - \tau(\Wh, s)| \leq \gamma_1 C_W\,.
    \end{equation}
    By triangle inequality and $\frac{s_j}{1+s_j T} \leq \frac 1 {T}$, 
    \begin{align*}
    |\tau(W, s) - \tau(\Wh, s)| &\leq \paren{T \sum_{j} \frac{(W_j - \Wh_j)^2 s_j^2}{(1+s_j T)^2}}^{1/2}\paren{\sum_j \frac{s_j}{1+s_j T}}^{-1/2}\\
    &\leq \gamma_1 C_W \cdot \paren{T \sum_{j} \frac{s_j^2}{(1+s_j T)^2}}^{1/2}\paren{\sum_j \frac{s_j}{1+s_j T}}^{-1/2} \leq \gamma_1 C_W\,.
    \end{align*}

    \paragraph{Step 3: Bound on $|\tau(W, s) - \tau(W, \sh)|$.} Here we bound 
    $|\tau(W, s) - \tau(W, \sh)|$.
    For convenience, denote $d_j = \frac{s_j}{1+s_j T}$ $A (s, T) \eqcolon \sum_{j} W_j^2 d_j^2 $, $B(s, T) \eqcolon \sum_j d_j$ and let $A = A(s, T), \Ah = A(\sh, \Th)$ and $B = B(s, T), \Bh = B(\sh, \Th)$, so that
    $$
    |\tau(W, s) - \tau(W, \sh)| = \B|\frac{\sqrt{TA}}{\sqrt{B}} - \frac{\sqrt{\Th\Ah}}{\sqrt{\Bh}}\B| \leq \sqrt{T} \B| \frac{\sqrt{A}}{\sqrt{B}} - \frac{\sqrt{\Ah}}{\sqrt{\Bh}}\B| + \b|\sqrt{T} - \sqrt{\Th}\b|\cdot \frac{\sqrt{\Ah}}{\sqrt{\Bh}} \eqcolon \one + \two
    $$
    by triangle inequality. To bound $\one$, we use a triangle inequality $\one = \sqrt{T} \B| \frac{\sqrt{A}}{\sqrt{B}} - \frac{\sqrt{\Ah}}{\sqrt{\Bh}}\B| \leq \sqrt{T}\frac{\b|\sqrt{A} -\sqrt{\Ah}\b|}{\sqrt{B}} + \sqrt{T\Ah} \B|\frac{1}{\sqrt{B}} - \frac 1 {\sqrt{\Bh}}\B|$ and notice $$\b|\sqrt{A} - \sqrt{\Ah}\b| = \b| \|W\odot d\|_2 - \|W\odot \dhat\|_2 \b| \leq \|W\|_\infty\| d - \dhat \|_2 \leq C_W \|d - \dhat\|_2\,.$$ Moreover, 
    $$
    \b| \sqrt{B} - \sqrt{\Bh} \b| = \b| \|d\|^{1/2}_1 - \|\dhat\|^{1/2}_1 \b| \leq \frac{\sqrt{n} \| d - \dhat\|_2}{\sqrt{B} + \sqrt{\Bh}}\,.
    $$
    For a uniform lower bound on $B, \Bh$ we use $\min(B, \Bh) \geq \frac 1 {1+T}$, so the combined bound is 
    \begin{align*}
        \one &\leq \sqrt{T} \| d - \dhat \|_2 \sqb{C_W (1+T)^{1/2} + \sqrt{\Ah n } (1+T)^{3/2}}\\
        &\leq 2 \| d - \dhat \|_2 C_W n 2^{2n\Rstar}\,.
    \end{align*}
    where we used $\Ah \leq C_W^2 \frac n {T^2}$ and $1+T \leq 2^{2n\Rstar}$. It remains to bound $\| d - \dhat \|_2$. Denoting $d(s, T) = \paren{\frac{s_j}{1+s_jT}}_j$ (so that $d = d(s, T)$ and $\dhat = d(\sh, \Th)$), we have
    \begin{align*}
        \| d - \dhat \|_2 \leq \| d(s, T) - d(\sh, T)\|_2 + \|d(\sh, T) - d(\sh, \Th)\|_2\,.
    \end{align*}
    Since $\B|\frac{s_j}{1+s_j T} - \frac{\sh_j}{1+\sh_j T}\B| \leq |s_j - \sh_j|$, we have $\| d(s, T) - d(\sh, T)\|_2 \leq \| d(s, T) - d(\sh, T)\|_1 \leq \|s - \sh\|_1 \leq \gamma_2 n$. Moreover, $\B|\frac{s_j}{1+s_j T} - \frac{s_j}{1+s_j \Th}\B| \leq n^2 |T - \Th|$, so we obtain 
    $$
    \| d - \dhat \|_2 \leq \gamma_2 n + 2\gamma_2 n^4 2^{4n\Rstar} \leq 3\gamma_2 n^4 2^{4n\Rstar}\,, 
    $$
where we used $|T - \Th| \leq 2\gamma_2 n 2^{4n\Rstar}$ in Eq.~\eqref{eq:T_bound}.
    To bound $\two$, we use the bound in Eq.~\eqref{eq:T_bound} and the lower bound $T \geq 2\Rstar\ln 2$ obtained in the proof of Thm.~\ref{thm:fixed_sigma}:
    $$
    \b| \sqrt{T} - \sqrt{\Th} \b| = \frac{\b|T-\Th\b|}{\sqrt{T} + \sqrt{\Th}} \leq \frac{\gamma_2 n 2^{4n\Rstar}}{\sqrt{2\Rstar \ln 2}}\,.
    $$
    Then,
    $$
    \two \leq\frac{\gamma_2 n 2^{4n\Rstar}}{\sqrt{2\Rstar \ln 2}} \cdot \frac{C_W\sqrt{n}}{2T} \cdot \sqrt{1 + T} \leq \gamma_2 C_W n^{3/2} 2^{4n\Rstar} / (2\Rstar\ln2)\,.
    $$
    We have for the final bound 
    \begin{align*}
        |\tau(W, s) - \tau(W, \sh)| \leq \one + \two &\leq 6\gamma_2 n^4 2^{4n\Rstar} C_W n 2^{2n\Rstar} + \gamma_2 C_W n^{3/2} 2^{4n\Rstar} / (2\Rstar)\\
        &\leq c \cdot C_W \gamma_2 n^4 2^{4n\Rstar} 
    \end{align*}
    for sufficiently large $n$ and a constant $c = c (\Rstar)$. Analogous bound holds for $|\tau(\Wh, s) - \tau(\Wh, \sh)|$.

    \textbf{Step 4: Final bound.} Combining Steps 2,3 above, we obtain
    \begin{align*}
        \b| \tau(W,s) - \tau(\Wh,\sh)\b| &\leq \b| \tau(W,s) - \tau(\Wh,s)\b| + \b| \tau(\Wh,s) - \tau(\Wh,\sh)\b|\\
        &\leq \gamma_1 C_W + c \cdot C_W \gamma_2 n^4 2^{4n\Rstar}\,.
    \end{align*}
    
\qed

\section{Proof of upper bound in oracle Proposition~\ref{prop:oracle_wf}}\label{sec:ub_wf}

Here we sketch proof of~\eqref{eq:ub_wf}. First, since $f$ and $g$ are allowed to depend on
$\Sigma_X$ we can rotate $\Sigma_X$ to eigenbasis, and thus assume from now on that $\Sigma_X =
\diag(\lambda_1,\ldots,\lambda_n)$. Fix $t$ and let $D_i = \min\{t/\lambda_i, 1\}$. Let
$R_0=\Rwf(\Sigma_X,t)$. Fix arbitrary $\epsilon>0$ and set rate $R=R_0+2\epsilon$. We will show
that by generating codebook $\CCC$ randomly via sampling $1+2^{nR}$ codewords from distribution $\prod_{i=1}^n\mathcal{N}(0, 1-D_i)$ one can
attain distortion
\begin{equation}\label{eq:uwf_1}
	\EE_W \left[ \min_{c\in\CCC} d_{\Sigma_X}(W, c) \right] \le n \Dwf(\Sigma_X, t) +
\left(
e^{-2^{n\epsilon}} + c_1 e^{-c_2 n \epsilon^2} \right) \tr \Sigma_X\,,
\end{equation}
where $c_1,c_2>0$ are some absolute constants. From here the statement of the theorem follows by
setting $\epsilon_n = c\sqrt{\frac{\log n} n}$ with an apropriate $c>0$.

To show~\eqref{eq:uwf_1} let $Y_i \simiid \mathcal{N}(0, 1-D_i)$ independently. Let also $Z_i
\simiid \mathcal{N}(0,1)$. Set
$$ W_i = Y_i + \sqrt{D_i} Z_i $$
and notice that $W=(W_1,\ldots,W_n)\sim \mathcal{N}(0,I_n)$. This coupling of $W$ to $Y$ satisfies
a useful property:
$$ \EE_W[d_{\Sigma_X}(W,Y)] = \EE_W\left[\sum_{i=1}^n \lambda_i D_i Z_i^2\right] =
\Dwf(\Sigma_X,t)\,.$$

On the other hand, we have \textit{information density} 
$$ i(W;Y) := \log {\frac{dP_{W,Y}} {d(P_W \times P_Y)}}(W,Y) = n R_0 + {\frac{\log e } 2} \sum_{i=1}^n
W_i^2 - Z_i^2\,.$$
Note that $\EE[i(W;Y)] = I(W;Y) = nR_0$. Note that while $W_i,Z_i$ are jointly correlated
Gaussians, they are independent for different $i$'s. A fascinating property of information density
(underlying information stability) of Gaussian processes is that its variance is uniformly bounded
regardless of the distribution, see~\cite[(19.32)]{PWbook24}. We will exploit this below to show
``local'' subgaussian estimate on the concentration of $i(W;Y)$.

Define log-MGF function
$$ f(z) := {\frac 1 n} \ln \EE_{W,Z}\left[ e^{z  \sum_{i=1}^n W_i^2-Z_i^2}\right] = 
{\frac 1 n} \sum_{i=1}^n \ln \EE_{W_i,Z_i}\left[ e^{z (W_i^2-Z_i^2)}\right]\,.$$
It is not hard to show that there exists a neighborhood $\mathcal{S}$ of $0$ on the
complex plane $\mathbb{C}$ and a constant $c$ such that 
$$ \sup_{z\in\mathcal{S}}|f(z)| \le c\,,$$
and crucially $\mathcal{S}$ and $c$ can be chosen independent of $\{D_i\}$ (and hence
of $t,\lambda_i$). For this, one only needs to apply Cauchy-Scwharz to reduce to analysis of
log-MGF of $W_i^2$ and $Z_i^2$, which are just squares of $\mathcal{N}(0,1)$. 

In addition to being analytic, $f$ also satisfies $f(0)=f'(0)=0$ and $f''(0) \le 2 \EE[W_i^4 + Z_i^4] \le 12$. Thus,
by Cauchy formula we can uniformly bound $f''(z)$ inside any compact subset of $\mathcal{S}$.
Consequently, there exists a universal $c'>0$ and $z_0>0$ such that for all real $-z_0 < z < z_0$
we have 
$$ f(z) \le 2 c' z^2\,.$$

Applying Chernoff estimate we find that for all $\epsilon < \epsilon_0$  we have 
\begin{equation}\label{eq:uwf_2}
	\PP[i(W;Y) > I(W;Y) + n\epsilon] \le e^{-2c_2 n\epsilon^2}\,,
\end{equation}
where crucially $\epsilon_0,c_2$ are absolute constants.

We are now ready to apply standard finite blocklength rate-distortion upper bound~\cite[Theorem
25.2]{PWbook24}, which claims existence of codebook $\CCC$ with distortion
\begin{equation}\label{eq:uwf_3}
	\EE_W \left[ \min_{c\in\CCC} d_{\Sigma_X}(W, c) \right] \le \EE[ d_{\Sigma_X}(W,Y)] +
\EE[d_{\Sigma_X}(W,0)] e^{-2^{nR}/\gamma}  + \EE[ d_{\Sigma_X}(W,Y) 1\{i(W;Y) > \log_2
\gamma\}]\,,
\end{equation}
where $\gamma$ is arbitrary, but we set it to
$$ \log_2 \gamma = nR_0 + n \epsilon\,.$$
Applying Cauchy-Schwarz and~\eqref{eq:uwf_2} to the last term in~\eqref{eq:uwf_3} we
obtain~\eqref{eq:uwf_1}.

\section{Additive rate-distortion for quantization of a colored $X$}\label{sec:colo_x}

It is worth mentioning that the rate-distortion region~\eqref{eq:rd_rc_G_intro} we obtained also characterizes the tradeoff between rate and distortion of a particular quantization scheme in a different, but closely related setup.

In particular, let $X\sim\m{N}(0,\Sigma_X)$ be a Gaussian vector in $\RR^n$ to be quantized under the standard quadratic distortion measure $D=\frac{1}{n}\EE\|\hat{X}-X\|_2^2$. Clearly the optimal rate-distortion tradeoff for this problem is given by
\begin{align}
R(D)=\frac{1}{n}\min I(X;\hat{X})    
\label{eq:inf_Rd}
\end{align}
where the minimum is over all $P_{\hat{X}|X}$ for which $\frac{1}{n}\EE\|\hat{X}-X\|_2^2\leq D$. As we already discussed, the optimal $P_{\hat{X}|X}$ is determined by the reverse waterfilling solution, and is given precisely by~\eqref{eq:wf_cruve}. In fact, we derived the oracle lower bound by showing that in the oracle setup, where the decoder also knows $\Sigma_X$, our problem is equivalent to that of quantizing $X$ under standard quadratic loss.

While the waterfilling solution gives the optimal tradeoff $R^*(D)$ function, any other valid choice of $P_{\hat{X}|X}$ yields an achievable $R(D)$. A very simple choice is to construct $\hat{X}$ by first adding independent noise $Z\sim\m{N}(0,\frac{1}{T} I_n)$ to $X$ and then performing minimum mean squared error (MMSE) estimation  of $X$ from $X+Z$. The $R(D)$ tradeoff attained by this particular choice of $P_{\hat{X}|X}$ has been studied in the information theory literature (for sources that are not necessarily Gaussian) under the name \emph{additive rate-distortion function (ARDF)}~\cite{zamir2002rate,zamir2002multiterminal,oz11}.

The MMSE estimator of $X$ from $X+Z$ is linear and therefore $\hat{X}=F(X+Z)$, where
\begin{align}
F=\Sigma_X\left(\Sigma_X+\frac{1}{T}I_n \right)^{-1} =U\cdot\diag\left(\frac{\lambda_1 T}{1+\lambda_1 T},\ldots,\frac{\lambda_n T}{1+\lambda_n T} \right)\cdot U^\top   
\end{align}
and 
\begin{align}
\frac{1}{n}\EE\|\hat{X}-X\|_2^2=\frac{1}{n}\sum_{i=1}^n \frac{\lambda_i}{1+\lambda_i T}=    \Drc(\lambda, T).
\end{align}
Furthermore, if all singular values of $\Sigma_X$ are positive, $F$ is invertible and
\begin{align}
\frac{1}{n}I(X;\hat{X})=\frac{1}{n}I(X;X+Z)=\frac{1}{n}\sum_{i=1}^n\frac{1}{2}\log(1+\lambda_i T)=\Rrc(\lambda,T).    
\end{align}
Thus, the $\Sigma_X$ universal rate distortion tradeoff we derived for the problem of quantizing a white source $\m{N}(0,I_n)$ under $d_{\Sigma_X}$ metric known only to the encoder is precisely the additive rate-distortion function for quantizing $X\sim\m{N}(0,\Sigma_X)$ under standard quadratic loss.



\end{document}